\newcommand{\longversion}[1]{}
\newcommand{\shortversion}[1]{#1}
\newtheorem{theorem}{Theorem}
\newtheorem{example}[theorem]{Example}
\newtheorem{lemma}[theorem]{Lemma}
\newtheorem{definition}[theorem]{Definition}
\newtheorem{proposition}[theorem]{Proposition}
\DeclarePairedDelimiter\ceil{\lceil}{\rceil}
\DeclareMathOperator{\type}{type}
\newcommand{\intr}{\textit{int}}
\newcommand{\leaf}{\textit{leaf}}
\newcommand{\rem}{\textit{rem}}
\newcommand{\join}{\textit{join}}
\newcommand{\myind}{\phantom{~~}}
\newcommand*\linenomathpatch[1]{%
	\cspreto{#1}{\linenomath}%
	\cspreto{#1*}{\linenomath}%
	\csappto{end#1}{\endlinenomath}%
	\csappto{end#1*}{\endlinenomath}%
}
\title{Epistemic Logic Programs: Non-Ground and Counting Complexity\footnote{This is an author self-archived and extended version of a paper that has been accepted for publication at IJCAI'24~\protect\cite{EiterEtAl24a}.}
}
\author{
Thomas Eiter$^1$
\and
Johannes K. Fichte$^2$\and
Markus Hecher$^3$\and
Stefan Woltran$^1$\\
\affiliations
$^1$Institute of Logic and Computation, TU Wien\\
$^2$Department of Computer and Information Science (IDA), Link\"oping University\\
$^3$Univ. Artois, CNRS, CRIL, France\\
\emails
thomas.eiter@tuwien.ac.at,
johannes.fichte@liu.se,
hecher@cril.fr, 
stefan.woltran@tuwien.ac.at,
}
\newcommand{\SB}{\ensuremath{\{\,}}
\newcommand{\SE}{\ensuremath{\,\}}}
\newcommand{\SM}{\ensuremath{\;|\;}}
\newcommand{\NAT}{\mathbb{N}}
\DeclareMathOperator{\rootOf}{r}
\DeclareMathOperator{\children}{child-nodes}
\DeclareMathOperator{\dom}{dom}
\DeclareMathOperator{\var}{var}
\DeclareMathOperator{\pname}{pnam}
\DeclareMathOperator{\poly}{poly}
\newcolumntype{H}{>{\setbox0=\hbox\bgroup}c<{\egroup}@{}}
\newcommand{\SAT}{\textsc{SAT}\xspace}
\newcommand{\ELP}{\textsc{ELP}\xspace}
\DeclareMathOperator{\eneg}{\textbf{not}}
\newcommand{\gprog}{P}
\newcommand{\prog}{P}
\newcommand{\eprog}{\mathsf{P}}
\newcommand{\vecv}[0]{\mathbf}
\newcommand{\sat}[0]{\text{sat}}
\newcommand{\usat}[0]{\text{usat}}
\DeclareMathOperator{\sign}{sgn}
\newcommand{\QCSPVAL}[1]{\protect\ensuremath{\mathsf{QCSP}_{#1}}}
\DeclareMathOperator{\at}{at}
\DeclareMathOperator{\vars}{vars}
\newcommand{\Card}[1]{\left|#1\right|}
\newcommand{\SIGMA}[1]{\ensuremath{\Sigma^\Ptime_{#1}}\xspace}
\newcommand{\Ptime}{\text{P}\xspace}
\newcommand{\SigmaP}[1]{\ensuremath{\Sigma^\Ptime_{#1}}\xspace}
\newcommand{\QBFSAT}{\textsc{SuccQVal}}
\newcommand{\QBFS}{\textsc{QVal}}
\newcommand{\class}[1]{\textsf{\small#1}}
\newcommand{\Normal}[0]{\class{Normal}}
\newcommand{\Tight}[0]{\class{Tight}}
\newcommand{\NonNeg}[0]{\class{EHorn}\xspace}
\newcommand{\Disj}[0]{\class{Full}}
\newcommand{\kop}{\mathbf{K}\xspace}
\newcommand{\mop}{\mathbf{M}\xspace}
\DeclareMathOperator{\tow}{exp}
\def\hy{\hbox{-}\nobreak\hskip0pt}
\newcommand{\answersets}[1]{AS(#1)}
\newcommand{\calI}{\mathcal{I}}
\newcommand{\eqdef}{\coloneqq}
\newcounter{myenumctr}
\tikzset{ 
	table/.style={
		row sep=-\pgflinewidth,
		column sep=-\pgflinewidth,
		nodes={rectangle,%
			align=center,
				anchor=center},
	},
	row 1/.style={nodes={
		}},
}
\DeclareMathAlphabet\mathbfcal{OMS}{cmsy}{b}{n}
\let\oldpara\paragraph
\renewcommand{\paragraph}[1]{\oldpara{#1.}}
\DeclareMathOperator{\marg}{arg}
\DeclareMathOperator{\HU}{\mathcal{U}}
\DeclareMathOperator{\HB}{\mathcal{B}}
\DeclareMathOperator{\grd}{grd}
\DeclareMathOperator{\mdef}{def}
\newcommand{\pneg}{\pnot} %
\newcommand{\EXP}{\text{EXP}\xspace}
\newcommand{\NEXP}{\text{NEXP}\xspace}
\newcommand{\EXPH}{\text{EXPH}\xspace}
\newcommand{\NEXPl}[1]{\ensuremath{\NEXP^{\Sigma^\Ptime_{#1}}}\xspace}
\newcommand{\NP}{\ensuremath{\textsc{NP}}\xspace}
\newcommand{\coNP}{\ensuremath{\textsc{coNP}}\xspace}
\newcommand{\sharpP}{\ensuremath{\#\textsc{P}}\xspace}
\newcommand{\sharpd}[1]{\ensuremath{\#\cdot{#1}}\xspace}
\DeclareMathOperator{\co}{co}
\newcommand{\dpc}[1]{\ensuremath{\textsc{D}^{\textsc{P}}_{#1}}\xspace}
\newcommand{\sharpe}{\ensuremath{\#\EXP}}
\newcommand{\sharped}[1]{\ensuremath{\#{\textsc{EXP}}^{#1}}\xspace}
\renewcommand{\phi}{\varphi}
\DeclareMathOperator{\first}{first}
\newcommand{\vx}{\vec{x}}
\newcommand{\vc}{\vec{c}}
\newcommand{\pnot}{\neg} %
\newcommand{\AAA}{\mathcal{A}}
\newcommand{\BBB}{\mathcal{B}}
\newcommand{\LLL}{\mathcal{L}}
\newcommand{\ol}[1]{\overline{#1}}
\DeclareMathOperator{\pdef}{pdef}
\DeclareMathOperator{\lits}{lits}
\DeclareMathOperator{\ar}{ar}
\DeclareMathOperator{\VAR}{VAR}
\DeclareMathOperator{\FO}{FO}
\DeclareMathOperator{\SO}{SO}
\newcommand{\PPP}{\mathcal{P}}
\newcommand{\CCC}{\mathcal{C}}
\newcommand{\VVV}{\mathcal{V}}
\newcommand{\TTT}{\mathcal{T}}
\DeclareMathOperator{\CONS}{CONS}
\newcommand{\extend}[3]{\ensuremath{{#1}\frac{#2}{#3}}}
\DeclareMathOperator{\free}{free}
\newcommand{\BigO}{\mathcal{O}}
\DeclareMathOperator{\suc}{succ}
\newcommand{\citex}[1]{\citeauthor{#1}~(\citeyear{#1})}
\newcommand{\citey}[1]{\citeauthor{#1} \citeyear{#1}}
\DeclareMathOperator{\enc}{enc}
\DeclareMathOperator{\last}{last}
\DeclareMathOperator{\gat}{gat}
\newcommand{\expf}{\mathrm{exp}}
\DeclareMathOperator{\matr}{\mathsf{matrix}}
\newenvironment{restatecorollary}[1][\unskip]{%
  \begingroup
  
}%
{%
  \addtocounter{theorem}{-1}
  \endgroup
}%
\newenvironment{restatelemma}[1][\unskip]{%
  \begingroup
  
}%
{%
  \addtocounter{theorem}{-1}
  \endgroup
}%
\newenvironment{restatetheorem}[1][\unskip]{%
  \begingroup
  
}%
{%
  \addtocounter{theorem}{-1}
  \endgroup
}%
\begin{document}

\maketitle

\begin{abstract}
  Answer Set Programming (ASP) is a prominent problem-modeling and
  solving framework, whose solutions are called answer sets. 
  Epistemic logic programs (ELP) extend ASP 
  to reason about all or some answer sets.
  Solutions to an ELP can be seen as consequences over multiple
  collections of answer sets, known as world views.
  While the %
  complexity of propositional programs is well
  studied, the non-ground case remains open.
  
  This paper establishes the %
  complexity of non-ground
  ELPs.
  We provide a comprehensive picture for well-known program fragments,
  which turns out to be complete for the class NEXPTIME 
  with access to oracles up to $\Sigma_2^P$.
  In the quantitative setting, %
  we establish complexity results for counting complexity beyond $\sharpe$.
  To mitigate high complexity, we establish 
  results in case of bounded predicate arity, reaching up to the fourth level of the polynomial hierarchy. 
  Finally, we provide ETH-tight runtime results for the parameter treewidth,
  which has applications in quantitative reasoning, where we reason on (marginal) probabilities of epistemic literals.
\end{abstract}

\section{Introduction}
Answer set programming (ASP)
is a widely applied modeling and solving framework for hard
combinatorial problems with roots in non-monotonic reasoning and logic
programming~\cite{BrewkaEiterTruszczynski11} and  solving %
in propositional satisfiability~\cite{FichteBerreHecher23}.
In ASP, knowledge is expressed by means of rules forming a
\emph{(logic) program}. Solutions to those programs are sets of atoms
known as \emph{answer sets}. %
Epistemic logic programs
(ELPs)~\cite{Gelfond91a,KahlWatsonBalai15a,ShenEiter16,Truszczynski11b}
extend ASP by allowing for modal operators \textbf{K} and \textbf{M}, which intuitively mean
``known'' or ``provably true'' and ``possible''
or ``not provably false'', respectively.
These operators can be included into a program and allow for reasoning
over multiple answer sets. Then, solutions to an ELP are known as
world views.

Interestingly, the complexity of decision problems, such as whether a
ground ELP admits a world view, or whether a literal is true in all respectively some
world view, reaches up to the fourth level of the polynomial
hierarchy~\cite{ShenEiter16}.
Despite its hardness in the decision case, also counting world views
is of vivid research interest today %
(see e.g., \cite{BesinHecherWoltran21}), %
as it provides the connection of
quantitative reasoning for ELPs and computing conditional
probabilities by considering the proportion of world views compatible
with a set of literals.
State-of-the-art systems even allow for solving non-ground programs by
either replacing variables with domain constants or structural guided
grounding and then employing existing ASP
solvers~\cite{CabalarEtAl20b,BesinHecherWoltran22}.
Despite the practical implementations and weak known lower
bounds~\cite{DantsinEtAl01,EiterFaberFink07}, the actual %
complexity for non-ground ELPs %
-- and thus the capabilites of today's non-ground ELP systems -- %
remained entirely open.
In particular, it is not known whether epistemic operators in combination with grounding lead to significant 
complexity amplifications or whether we see only a mild increase (reflected by a jump of one level in the PH -- as in the ground case) compared to standard non-ground ASP.
\paragraph{Contributions}
In this paper, we study the precise computational \emph{complexity} of
\emph{qualitative and quantitative} decision and reasoning problems
for \emph{non-ground ELPs}. %
Our contributions are detailed below. In addition,
Table~\ref{tab:results} surveys details and illustrates relations to
existing results.
\begin{enumerate}
  
\item We provide a comprehensive picture of the non-ground ELP
  landscape, including common program fragments. %
  We mitigate %
  complexity by showing how %
  complexity results drop 
  if predicate arities are
  bounded %
  --- a typical assumption for solving.
\item %
We establish detailed complexity results for counting problems, which enables more fine-grained reasoning. To this end, we lift counting complexity notions to the weak-exponential hierarchy.

\item %
We analyze the impact of structural restrictions in form of
  bounded treewidth. If the predicate arities are bounded, we obtain
  precise upper bounds. Surprisingly, the complexity for tight and normal programs match in the non-ground case, which is different to ground programs.
  We complete the upper bounds by conditional \emph{lower bounds}
  assuming  \emph{ETH}\footnote{The exponential time hypothesis (ETH) implies 3-CNF satisfiability cannot be solved in time $2^{o(n)}$ \cite{ImpagliazzoPaturi01}.} rendering significant runtime improvements for treewidth very unlikely.

\end{enumerate}
Interestingly, 
our results are based on two sophisticated techniques.
First, a classical technique employing second-order logic with
dependencies to descriptive complexity for the qualitative setting.
Second, a direct %
approach relying solely on
the validity problem for succinct quantified Boolean formulas, which
enables results for the quantitative setting as well as when
considering bounded treewidth.

\paragraph{Broader Relation to AI}
We see use cases and connections of our results in areas beyond the scope of logic programming. 
In particular, there are complex challenges in, e.g., conformant planning~\cite{Bonet10} or 
in reasoning modes like abduction~\cite{Aliseda17,EiterGottlob95b}, which reach the third and the fourth level of the polynomial hierarchy. 
Such situations can be elegantly modeled via modal operators \textbf{K} and \textbf{M},
even in the non-ground setting. We expect  that the interplay between introspection (i.e., \textbf{K} and \textbf{M} operator capabilities) and non-ground (first-order-like) rules will be of broader interest, as this is essential to formally model rational agents with different belief sets. Here, we provide precise complexity results, consequences of different modeling features, and insights in parameterized complexity. In addition, with the availability of efficient ELP solvers~\cite{BichlerMorakWoltran18,CabalarEtAl20b}, one obtains a first ELP modeling guide. %

\paragraph{Related Work}
\citex{EiterFaberFink07} establish the computational complexity for
qualitative problems of non-ground ASP under bounded predicate
arities.
For ground ELP, \citex{ShenEiter16} show that qualitative problems are
higher up in the Polynomial Hierarchy than for ASP, see Ground case in
Table~\ref{tab:results}. In fact, the central decision problem,
checking whether an ELP has a world view, is $\SigmaP3$\hy complete.
For treewidth and ground ELP, there are solvers that exploit
treewidth~\cite{BichlerMorakWoltran18,HecherMorakWoltran20} and
also solvers for quantitative reasoning, which relate the 
the number of accepting literals to number of compatible 
world views~\cite{BesinHecherWoltran21}.
Very recent works address the grounding bottleneck for solving with
ELP solvers by grounding that exploits
structure~\cite{BesinHecherWoltran23} and complexity of ground ELP
when bounded by treewidth~\citex{FandinnoHecher23}.
Our results reach beyond as we consider the \emph{non-ground
quantitative} and \emph{qualitative} setting.
While the non-ground case might seem somewhat expected, establishing
results on the exponential hierarchy requires different
techniques, especially for~treewidth.
\citex{FichteHecherNadeem22} consider plausibility reasoning in the
ground setting for ASP without epistemic operators.

\newcommand{\mbf}[1]{\mathbf{#1}}
\begin{table*}
  \centering
    \begin{tabular}{l@{\hspace{.15em}}l@{\hspace{.35em}}Hl@{\hspace{.35em}}l@{\hspace{.35em}}l@{\hspace{.35em}}Hl@{\hspace{.15em}}}
      \toprule
                            & \NonNeg         & TBD & \Tight                  & \Normal                   & \Disj                             & Min. WV                        & Result                 \\
      \midrule
      \textbf{Qualitative}  &               &    &                         &                           &                                   &                                &                      \\
      Ground                & $\Ptime$      & \NP      & $\SigmaP2$              & $\SigmaP2$                & $\SigmaP3$                        & $\SigmaP4$-c                   & \cite{Truszczynski11}       \\
      & & & & & & & \cite{ShenEiter16}\\
      \emph{Non-Ground}            & $\mbf{\EXP}$   &  & $\mbf{\NEXP}\mbf{^\NP}$ & $\mbf{\NEXP}\mbf{^{\NP}}$ & $\mbf{\NEXP}\mathbf{^{\SigmaP2}}$ & $\mbf{\NEXP}\mbf{^{\SigmaP3}}$ & \emph{Theorem~\ref{thm:quali}, Lemma~\ref{lem:epihorn}} \\[.05em]	
      \emph{Non-Ground(b)}  & $\mbf{\coNP}$ & & $\mbf{\SigmaP3}$        & $\mbf{\SigmaP3}$          & $\mbf{\SigmaP4}$                  & $\mbf{\SigmaP5}$               & \emph{Theorem~\ref{thm:quali}, Lemma~\ref{lem:epihorn}} \\[.05em]	
      \midrule
      \textbf{Quantitative} &                &   &                         &                           &                                   &                                &                      \\
          \emph{Ground}                & \sharpP        &  & \sharpd{\dpc{}}            & \sharpd{\dpc{}}              & \sharpd{\dpc{2}}               & $\SigmaP4$-c                   & \emph{Lemma~\ref{cor:quantitative:propositional}}       \\
      \emph{Non-Ground}            & $\sharpe$     &   & \sharped{\NP}           & \sharped{\NP}             & \sharped{\SigmaP2}                &                                & \emph{Theorem~\ref{thm:quantitative}} \\[.05em]	
      \emph{Non-Ground(b)}  & \sharpd{\dpc{}}   &  & \sharpd{\dpc{2}}       & \sharpd{\dpc{2}}         & \sharpd{\dpc{3}}                 & $\mathbf{\SigmaP5}$            & \emph{Lemma~\ref{cor:quantitative:barity}} \\[.05em]	
      \midrule
      \textbf{Parameterized}                                                                                                                                                                              \\
				
      Ground [tw] & $\tow(1,{{{{o(tw)}}}})^\dagger$ &$\tow(1,{{{{\Theta(tw)}}}})$ & $\tow(2,{{{{\Theta(tw)}}}})$ & %
                                                                                                                     $\tow(2,{{{{\Theta(tw\cdot \log(tw))}}}})$ %
                                                                                                                                         & $\tow(3,{{{{{\Theta(tw)}}}}})$ & $\tow(4,{{{{{\Theta(tw)}}}}})$ & \cite{FandinnoHecher23}\\
      
      \emph{Non-Gr. [tw](b)} & $\tow(1,\mathbf{{{{d^{o(tw)}}}}})^\dagger$ & $\tow(1,{{{{\Theta(tw)}}}})$  & $\tow(2,\mathbf{{{{d^{\Theta(tw)}}}}})$     & $\tow(2,\mathbf{{{{d^{\Theta(tw)}}}}})$ %
                                                                                                     & $\tow(3,\mathbf{{{{{d^{\Theta(tw)}}}}}})$ & $\tow(4,\mathbf{{{{{{d^{\Theta(tw)}}}}}}})$ & \emph{Theorems~\ref{thm:lower},\ref{thm:upper}} \\ [-.15em]
			\bottomrule
	\end{tabular}
	\caption{
          Complexity results of WV existence (counting/plausibility level) for ELP fragments, where
          each column states the corresponding fragment and each row gives the respective problem.
          ``(b)'' indicates fixed predicate arities.
          Entries indicate completeness results, runtimes are tight under ETH, omitting %
          polynomial factors.
          $d$ refers to the domain size and~$tw$ is the treewidth of the primal~graph. 
          ``$^\dagger$'': The runtime bounds are for the counting case, as decision is easier due to classical complexity results. Here $\expf(0,n)=n$ and $\expf(k,n)= 2^{\expf(k{-1},n)}$, $k\geq 1$.
        }
	\label{tab:results}%
\end{table*}

\section{Preliminaries}
We assume familiarity with basics in %
Boolean satisfiability (\SAT)~\cite{KleineBuningLettman99}. %
By $\tow(\ell,k)$ we refer to~$k$ if~$\ell\leq 0$ and to~$2^{\tow(\ell-1, k)}$ otherwise.

\paragraph{Computational Complexity}
We follow standard notions in 
computational complexity theory~\cite{Papadimitriou94,AroraBarak09}
and use the asymptotic notation~$\BigO(\cdot)$ as usual.
Let $\Sigma$ and $\Sigma'$ be some finite alphabets. We call $I
\in \Sigma^*$ an \emph{instance} and $n$ denotes the size of~$I$.  
A \emph{decision problem} is some subset~$L\subseteq \Sigma^*$. %
Recall that \Ptime{} and \NP are the complexity classes of all
deterministically and non-deterministically polynomial-time solvable
decision problems~\cite{Cook71}, respectively.
We also need the Polynomial Hierarchy
(PH)~\cite{StockmeyerMeyer73,Stockmeyer76,Wrathall76}.
In particular, $\Delta^\Ptime_0 \eqdef \Pi^\Ptime_0 \coloneqq
\Sigma^\Ptime_0 \coloneqq \Ptime$ and $\Delta^\Ptime_{i+1} \coloneqq
P^{\Sigma^p_{i}}$, $\Sigma^\Ptime_{i+1} \coloneqq
\NP^{\Sigma^\Ptime_{i}}$, and $\Pi^\Ptime_{i+1} \coloneqq
\text{co}\NP^{\Sigma^\Ptime_i}$ for $i>0$ where $C^{D}$ is the class~$C$ of
decision problems augmented by an oracle for some complete problem in
class $D$.
The complexity class $\dpc{k}$ is defined as
$\dpc{k}\eqdef \SB L_1 \cap L_2 \SM L_1\in \Sigma^\Ptime_k, L_2 \in
\Pi^\Ptime_k\SE$ and $\dpc{} = \dpc{1}$~\cite{LohreyRosowski23}.
The complexity class~\NEXP is the set of decision problems that can be
solved by a non-deterministic Turing machine using
time~$2^{n^{O(1)}}$,~i.e.,
${\mathsf {NEXPTIME}}=\bigcup _{k\in \mathbb {N} }{\mathsf
  {NTIME}}(2^{n^{k}})$.
The complexity class $\text{NTIME}(f(n))$ is the set of decision
problems that can be solved by a non-deterministic Turing machine
which runs in time $\BigO(f(n))$.
Note that~$\textsc{co-}\NEXP$ is contained in~$\NEXP^\NP$.
The \emph{weak EXP hierarchy (\EXPH)} is defined in terms of
oracle complexity classes:
$\mathsmaller\Sigma_0^{\EXP} \eqdef \EXP$ and
$\mathsmaller\Sigma_{i+1}^{\EXP}\eqdef \NEXP^{\Sigma^p_i}$~\cite{Hemachandra87}.
We follow standard notions in counting complexity~\cite{Valiant79b,DurandHermannKolaitis05,HemaspaandraVollmer95a}. A counting problem is a function~$f: \Sigma^* \rightarrow \NAT_0$.
Then, \sharpP is the class of all functions~$f: \Sigma^* \rightarrow \NAT_0$ such that there is a polynomial-time non-deterministic Turing machine~$M$, where for every instance~$I \in \Sigma^*$, $f(I)$ outputs the number of accepting paths of the Turning machine's computation graph on input~$I$.
We will also make use of classes preceded with the sharp-dot operator `$\#\cdot$' defined using witness functions and respective decision problem in a decision complexity class, see Appendix~\ref{appendix:prelimns:counting}.

\paragraph{Answer Set Programming (ASP)}
Let $(\PPP,\CCC)$ be a first-order vocabulary of 
non-empty finite sets~$\PPP$ of \emph{predicate} and
$\CCC$ of \emph{constant symbols}, and 
let $V$ be a set of \emph{variable symbols}.
{\em Atoms}\/ $a$ have
the form $p(t_1,\ldots,t_n)$, where $p
\in \PPP$, $n \geq 0$ is the arity of $p$, and each $t_i\in \TTT$, where $\TTT = \CCC \cup
\VVV$ is the set of \emph{terms}.
A \emph{logic program (LP)} is a set $\gprog$  of {\em rules}\/ $r$ of the form%
\begin{align*}
    a_1 \vee \ldots \vee a_k \leftarrow
a_{k+1}, \ldots, a_{m}, \pneg a_{m+1}, \ldots, \pneg a_n,
\end{align*}
where all $a_i$ are distinct atoms and
$0\,{\leq}\, k \,{\leq}\, m \,{\leq}\, n$.  We let
$H_r {\eqdef} \{a_1,\ldots, a_k\}$,
$B^+_r {\eqdef}\{a_{k+1},\ldots, a_{m}\}$, and $B^-_r {\eqdef}$
$\{a_{m+1}, \ldots, a_n\}$, and denote the set of \emph{atoms}
occurring in~$r$ and $\gprog$ by
$\at(r) {\eqdef} H_r \cup B^+_r \cup B^-_r$, %
$\at(\gprog){\eqdef} \bigcup_{r\in\gprog} \at(r)$, and
$\pname(P) \eqdef \SB p \SM p(\cdot) \in \at(P) \SE$.
A program has
bounded arity, if every predicate occurring in $P$ has arity at
most~$m$ for some arbitrary but fixed constant~$m$.
By \emph{bounded arity}, refer to the class of programs that are of
bounded arity.
A rule $r$ is a \emph{fact} if $B^+_r\cup B^-_r = \emptyset$; a
\emph{constraint} if $H_r = \emptyset$; \emph{positive} if
$B^-_r = \emptyset$; and \emph{normal} if $\Card{H_r} \leq 1$.  A
program $\gprog$ is positive and normal, respectively, if each
$r\in \gprog$ has the property.  The \emph{(positive) dependency
  graph~$\mathcal{D}_\gprog$} is the digraph with vertices
$\bigcup_{r\in \gprog}H_r \cup B^+_r$, where for each
rule~$r \in \gprog$ two atoms $a{\,\in\,}B^+_r$ and~$b{\,\in\,} H_r$
are joined by edge $(a,b)$. Program~$\gprog$ is \emph{tight}
if~$\mathcal{D}_\gprog$ has no directed cycle~\cite{Fages94}.  We let
$\Disj$, $\Normal$, and $\Tight$ be the classes of all, normal, and
tight programs,~respectively.
Answer sets are defined via ground programs. 
The \emph{arguments} and %
\emph{variables}
of an atom ~$a=p(t_1,$ $\ldots,$ $t_n)$ are the sets $\marg(a) \eqdef
\{t_1,\ldots,t_n\}$ and %
$\vars(a)\eqdef\marg(a)\cap \VVV$. This extends to sets $A$
of atoms by 
$\marg(A)\eqdef \bigcup_{a\in A} \marg(a)$ resp.\ $\vars(A)\eqdef
\bigcup_{a\in A} \vars(a)$ and likewise to rules and programs.
An atom, rule or program $\phi$ is \emph{ground}\/ if $\vars(\phi) =
\emptyset$ and \emph{propositional}\/ if $\marg(\phi) =
\emptyset$. 
The \emph{Herbrand universe} of a program $\gprog$ is the set~$\HU_{\gprog}\eqdef
\marg(\gprog) \cap \CCC$ (if empty, %
$\HU_\gprog \eqdef \{c\}$ for any~$c \in \CCC$), 
and its \emph{Herbrand base}~$\HB_{\gprog}$ consists of all ground
atoms with a predicate from~$P$ and constants from~$\HU_{\gprog}$.

A set~$M \subseteq \HB_P$ of atoms \emph{satisfies} (is a {\em model of}) a
ground rule~$r$ resp.\ ground program $\gprog$ if (i)~$(H_r{\,\cup\,} B^-_r) {\,\cap\,} M {\,\neq\,}
\emptyset$ or (ii)~$B^+_r {\,\setminus\,} M {\,\neq\,}
\emptyset$ resp.\ $M$ satisfies each $r\in \gprog$.
Furthermore, $M$ is an \emph{answer set} of $\gprog$ if $M$ is a
$\subseteq$-minimal model of~$\gprog^M \eqdef \bigcup_{r\in\prog}\SB H_r \leftarrow B^+_r \SM B^-_r
\cap M = \emptyset\SE$,~i.e., the \emph{GL-reduct} of $\gprog$~\cite{GelfondLifschitz91} w.r.t.~$M$; $\answersets{\gprog}$ denotes the set of all answer sets of $\gprog$.
The answer sets of a general program $\gprog$ are those of
its \emph{grounding} $\grd(\gprog) \eqdef \bigcup_{r
\in \gprog} \grd(r)$, where $\grd(r)$
is the set of all rules obtained by replacing each~$v \in
\vars(r)$ with some element from~$\HU_\gprog$.
We assume {\em safety},
i.e.,
each rule $r\,{\in}\, \gprog$ satisfies $\vars(H_r \cup B^-_r) \subseteq \vars(B^+_r)$.
We can ensure it by
a unary domain predicate $\mathit{dom}$ 
with facts $\mathit{dom}(c)$, $c\in \HU_{\gprog}$ and
adding $\mathit{dom}(x)$ in the body of $r$
for each $x\in\vars(r)$.
To select rules in $\grd(\gprog)$ with the same
head~$D$, we define $\mdef(D,\gprog) \eqdef \{r \in \grd(\gprog) \SM
H_r= D\}$ and to select non-ground rules
in~$\gprog$ that define atoms with predicate
$p$, we let $\pdef(p,\gprog) \eqdef \SB r \in \gprog \SM
p(t_1,\ldots,t_n) \in H_r \SE$.
Deciding whether a program $\gprog$ has an answer set (called
\emph{consistency}) is
\SIGMA{2}{}-complete for ground programs~\cite{EiterGottlob95} and $\NEXP^\NP$ for non-ground programs~\cite{EiterGottlobMannila94}.

\noindent
\textbf{Epistemic Logic Programs  (\ELP{s}).}
\emph{Epistemic logic programs} extend LPs with epistemic literals in rule bodies.
A \emph{literal} is either an atom $a$ (\emph{positive literal}) or
its negation $\neg a$ (\emph{negative literal}). 
A set~$L$ of literals is \emph{consistent} if
for every~$\ell \in L$, $\neg \ell \not \in L$ assuming that
$\neg \neg \ell = \ell$. 
For a set~$A$ of atoms, we define
$\neg A \eqdef \SB \neg a \SM a \in A\SE$  and
$\lits(P) \eqdef \at(P) \cup \neg \at(P)$.
An \emph{epistemic literal} an expression~$\eneg \ell$ where $\ell$ is
a literal.
Following common convention, we use $\kop \ell$ as shorthand for $\neg \eneg \ell$ and $\mop \ell$
for $\eneg \neg \ell$.
An \emph{epistemic atom} is an atom that is used in an epistemic
literal.
For a set~$S$ consisting of atoms, literals, and/or epistemic
literals, we denote by $\at(S)$ and %
$\lits(S)$, %
the set of atoms and literals, %
respectively, that occur in $S$.
These notations naturally extend to rules and programs.
Definitions for logic programs such as classes of programs
naturally extend to ELP.
The \emph{dependency graph~$D_P$} of an ELP~$P$ is as for ASP, but for every rule~$r\in P$ and~$b\in H_r$, we also add an edge~$(a,b)$ if~$r$ contains a body literal $\eneg \neg a$ or $\neg \eneg a$. %
Properties are similar to ASP.
In addition, we define \NonNeg\ with no negations (neither $\neg$ nor $\eneg$) and no disjunctions, however, $\kop a$ and $\mop a$ are allowed.
There are different
semantics for
ELPs~, e.g.\ \cite{Gelfond91a,Truszczynski11,KahlWatsonBalai15a,ijcai:CerroHS15,ShenEiter16}; see 
\cite{DBLP:journals/tplp/FandinnoFG22} for an overview.
We consider \cite{ShenEiter16}, which provides a reduct-based framework and offers highest problem solving capacity.

In what follows, 
let $\eprog$ be a ground ELP. %
A \emph{world view interpretation} (\emph{WVI}) for
$\eprog$ is a consistent set $I \subseteq \lits(\eprog)$.
Intuitively, every~$\ell\in I$ is considered ``known'' and
every $a\in \at(\eprog)$ with~$\{a,\neg a\}\cap I=\emptyset$ is treated as
``possible''.
The \emph{epistemic reduct}~\cite{ShenEiter16,iclp:Morak19} of
program~$\eprog$ under WVI~$I$ is $\eprog^I \eqdef \{r^I \SM r \in
\eprog\}$,
where $r^I$ results by replacing in $r$ 
each epistemic literal $\eneg \ell$ with $\neg \ell$ if
$\ell \in I$ and with $\top$ otherwise; %
double negation cancels. %
This amounts to FLP-semantics for nested negation; we omit HT-semantics, for which similar complexity results can be obtained.
Note that $\eprog^{I}$ 
has no epistemic negations.
A WVI $I$ over~$\lits(\eprog)$ is \emph{compatible} with a set $\calI$  of
WVIs if
(i)~$\calI \neq \emptyset$ and for each atom $a$,
(ii)~$a \in I$ implies $a \in \bigcap_{J\in \calI}J$;
(iii)~$\neg a \in I$ implies $\{ J\in \calI \mid a \in J\}=\emptyset$; and
(iv)~$a \in \at(\eprog) \setminus \at(I)$ implies
that $a \in J$ and $a \not\in J'$.
for some $J, J' \in \calI$.
$I$ is a \emph{candidate world view (WV)}  %
of $\eprog$ if $I$ is compatible with the set~$\answersets{\eprog^I}$.
\emph{WV existence} is $\Sigma_{3}^{\Ptime}$-complete~\cite{Truszczynski11,ShenEiter16}.
\newcommand{\WVc}[0]{\ensuremath{\#\text{WV}}\xspace}
The counting problem~\WVc asks to output the number of WVs. 
Semantics of non-ground ELPs is defined by grounding, as for LPs.

\newcommand{\pr}[1]{\text{#1}}
\begin{example}[cf.~\citey{Gelfond91a}]\label{ex:elp}
   Take the well-known scholarship eligibility problem encoding, which is as follows:\\
 \myind   $P_1 \,{=}\,  \SB \mbox{~} \pr{lowGPA}(mark); \quad \pr{highGPA}(mia);$\\
 \myind    \mbox{~}$\qquad\quad\pr{lowGPA}(maya) \vee \pr{highGPA}(maya);$\\
 \myind   \mbox{~}$\qquad\quad \pr{inelig}(X) \leftarrow \pr{lowGPA}(X);\;$\\
 \myind    \mbox{~}$\qquad\quad\pr{elig}(X) \leftarrow \pr{highGPA}(X);$\\ 
 \myind   \mbox{~}$\qquad\quad \bot \leftarrow \pr{elig}(X ), \pr{inelig}(X);\;$\\ \myind    \mbox{~}$\qquad\quad\pr{interview}(X) \leftarrow \eneg \pr{elig}(X), \eneg \pr{inelig}(X)\;\}$.
\\ %
    Then, the set of WVs of the program is\\%
\myind     $\{\;\{\neg \pr{interview}(mark), \pr{lowGPA}(mark), \pr{inelig}(mark),$\\
\myind\myind\myind~~$\neg \pr{elig}(mark), \pr{interview}(maya), \neg \pr{interview}(mia),$\\
\myind\myind\myind~~$\pr{highGPA}(mia),
     \pr{elig}(mia),  \neg \pr{inelig}(mia)\}\;\}$.
\end{example}

\paragraph{First- and Second-Order Logic}
We assume familiarity with logic and follow standard
definitions~\cite{GraedelEtAl07} %
(see also the supplemental material). %
Throughout 
we assume that $\sigma$ is a signature, which we
omit if it is clear from context. The class~$\Sigma^1_k[\sigma]$ consists of all {\em prenex second-order  formulas} $\Phi\in \SO[\sigma]$, i.e., 
$\Phi = Q_1 R_1 Q_2 R_2 %
\cdots Q_k R_k. \phi$ where
$Q_i \in \{\forall, \exists\}$ and $Q_i \neq Q_{i+1}$ for
$1 \leq i <k$, 
the $R_i$ are disjoint non-empty sets of
SO-variables, and $\phi\in \FO[\sigma]$; $\Phi$ is \emph{existential}
if $Q_1 = \exists$.
We say that $\Phi$ is in \emph{CDNF} if $\free(\phi)=\emptyset$ and
(i) $k$ is even and $\phi = \exists \vx \psi$
with $\psi$ in DNF, or (ii)  $k$ is odd and $\phi = \forall \vx\psi$
with $\psi$ in CNF.

\section{Complexity of Non-ground ELP Reasoning}
In this section, we establish results on the classical complexity of
reasoning with non-ground ELPs.
Our first insight is on qualitative reasoning.
Therefore, we need Proposition~\ref{prop:socompl}, which states the
relationship between second-order logic and the exponential hierarchy
for combined and data complexity.

\begin{proposition}[\citey{GottlobLeoneVeith99}]\label{prop:socompl}
  Given a sentence~$\Phi\in \Sigma^1_k$ and a
  finite structure~$\AAA$, 
  deciding whether $\AAA \models \Phi$  is
  (i)~$\NEXPl{k-1}$-complete (combined complexity) and
  (ii)~$\SIGMA{k}$-complete if $\Phi$ is fixed (data complexity).
\end{proposition}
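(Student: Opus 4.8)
The plan is to establish both parts as generalizations of Fagin's theorem and its combined-complexity analogue, with the unifying device being the \emph{propositional expansion} of a second-order sentence over a finite structure. Given $\Phi = \exists R_1\,\forall R_2\cdots Q_k R_k.\,\phi$ with $\phi\in\FO$ and a structure $\AAA$ whose domain has size $n$, I would replace each second-order variable $R_i$ of arity $a_i$ by $n^{a_i}$ Boolean variables (one per domain tuple) and expand every first-order quantifier $\exists x$, $\forall x$ into a finite $\bigvee_{c}$, $\bigwedge_{c}$ ranging over the $n$ domain elements. This turns $\AAA\models\Phi$ into the validity of a quantified Boolean formula $\Psi$ whose block structure $\exists X_1\,\forall X_2\cdots Q_k X_k$ mirrors the second-order prefix of $\Phi$. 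Both regimes and both directions then reduce to an analysis of $\Psi$.

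For the data complexity (ii), $\Phi$ is fixed, so all arities $a_i$ and the first-order quantifier rank of $\phi$ are constants; hence each $R_i$ is describable by polynomially many bits and $\Psi$ has polynomial size. The upper bound follows by nondeterministically guessing the leading existential block $X_1$ (i.e.\ the relations $R_1$) in polynomial time, delegating the remaining $k-1$ alternations to a \SigmaP{k-1} oracle, and evaluating the first-order matrix in polynomial time; this yields membership in $\NP^{\SigmaP{k-1}}=\SIGMA{k}$. For \SigmaP{k}-hardness I would generalize Fagin's encoding: from an alternating polynomial-time machine with $k$ alternations (equivalently, from $\SigmaP{k}$ quantified-Boolean validity), I build a fixed $\Sigma^1_k$ sentence whose successive existential/universal second-order blocks guess the alternation stages and whose first-order matrix checks local consistency of the computation tableau.

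For the combined complexity (i), $\Phi$ is part of the input, so the arities $a_i$ may grow with $|\Phi|$; now a single relation $R_i$ carries $n^{a_i}$ bits, which is exponential in $|\Phi|$, and $\Psi$ has size $N=2^{m^{O(1)}}$ in the combined input size $m$. This is the ``succinctness as a source of exponential blow-up'' phenomenon. For the upper bound, a nondeterministic exponential-time machine guesses the leading existential block $X_1$ (writing at most $N$ bits), forms the residual formula $\forall X_2\exists X_3\cdots Q_k X_k.\,\hat\phi[X_1]$, and queries whether it is valid; this residual is a quantified Boolean formula of size $N$ with $k-1$ alternations starting universally, so its validity is \PiP{k-1}-complete in $N$. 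Since the machine may write exponentially long oracle queries and $\NEXPl{k-1}=\NEXP^{\PiP{k-1}}$, the problem lies in $\NEXPl{k-1}$; the first-order matrix, although its expansion $\hat\phi$ is exponential, is evaluated inside the exponential-size query.

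The hardest step is the matching $\NEXPl{k-1}$-hardness. Here I would \emph{succinctly} encode the computation of a generic $\NEXPl{k-1}$ machine on input $w$: a second-order relation of arity $a$ over the $n$-element domain indexes $n^{a}$ tableau positions, which is exponential once $a$ scales like $\log(\text{running time})/\log n$, so the exponentially long run of the nondeterministic exponential-time base machine is guessed by the leading existential block, while the first-order matrix enforces the transition relation locally, addressing only a constant number of neighbouring cells through the succinct tuple indexing. The remaining $k-1$ second-order alternations simulate the polynomial-level oracle. The delicate accounting is to keep the total number of second-order alternations equal to $k$ while faithfully representing the exponentially long oracle queries and verifying that the first-order formula can ``address'' and locally check an exponential tableau; getting the alternation count and the succinct addressing exactly right, rather than overshooting into a higher level of the weak exponential hierarchy, is the crux of the argument.
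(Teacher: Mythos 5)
You should first note that the paper contains no proof of this statement at all: it is imported verbatim from Gottlob, Leone, and Veith (1999), so the comparison is against that work rather than anything in the present text. Your plan is the classical ``generic simulation'' route, whereas GLV actually obtain the combined-complexity bound through a general succinctness-lifting theorem: they show that $\SigmaP{k}$-complete problems (canonically $k$-alternation QBF validity) become complete for the exponentially-higher class when presented succinctly by circuits (this is exactly what the paper later quotes as Proposition~\ref{prop:succ}, via Stewart), and then interreduce succinct QBF with $\Sigma^1_k$ model checking. Your propositional-expansion device is the right unifying picture, your upper bounds are correct in both regimes --- including the two points that actually need care: that for fixed $\Phi$ all relations have polynomially many bits so the guess-and-oracle argument gives \SigmaP{k}, and that in the combined case the \NEXP machine may write exponentially long queries to a $\PiP{k-1}$-complete oracle, which is legitimate under the weak-EXP-hierarchy oracle model used here since $\NEXP^{\PiP{k-1}}=\NEXPl{k-1}$ --- and your data-complexity hardness is just Fagin--Stockmeyer, which is fine.

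The one genuine gap is where you yourself locate the crux: the $\NEXPl{k-1}$-hardness in (i). Saying that ``the remaining $k-1$ second-order alternations simulate the polynomial-level oracle'' does not work as stated, because the base machine queries its $\SigmaP{k-1}$ oracle adaptively throughout an exponentially long run, and negative oracle answers require co-nondeterministic verification --- a naive simulation would spend alternations per query and overshoot the prefix. The standard repair, which you need to spell out, is to fold everything into the block structure as follows: the leading existential block guesses the entire tableau, all query strings, all oracle answer bits, \emph{and} the $\SigmaP{k-1}$ witnesses for every query answered ``yes'' (these witnesses are polynomial in the query length, hence exponentially long but guessable in block~1); the subsequent universal block then simultaneously handles the leading $\forall$ of the $\PiP{k-1}$ verification of every ``no'' answer and the second block of the tail verification of the ``yes'' witnesses, with the remaining $k-2$ alternations shared by both, for a total of exactly $k$ second-order blocks. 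Guessing answers upfront is sound precisely because the query strings are part of the guessed tableau, so adaptivity is absorbed into the first block. Alternatively, you can sidestep the generic machine argument entirely by reducing from succinct $k$-alternation QBF (already known complete for $\NEXPl{k-1}$), encoding the defining circuit in the structure and mirroring the exponential propositional blocks by second-order relations --- this is GLV's route and avoids the oracle bookkeeping altogether.
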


Next, in Lemma~\ref{lem:transform}, we show a connection between the
existing result on second-order logic and the exponential hierarchy in
the general case and in case predicates have bounded arity.

\begin{lemma}[$\star$\footnote{
    We prove statements marked by ($\star$) in the appendix.
  }]\label{lem:transform}
  Given a sentence~%
  $\Phi\in \Sigma^1_k[\sigma]$ in CDNF and a
  finite structure~$\AAA$, %
  deciding whether $\AAA \models \Phi$  is
  (i)~$\NEXPl{k-1}$-com\-plete and
  (ii)~$\SIGMA{k+1}$-complete if every 
  predicate $R_i$ in $\Phi$ has arity at most~$m$ for some arbitrary but
  fixed integer~$m\geq 1$.
\end{lemma}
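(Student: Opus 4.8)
The plan is to obtain Lemma~\ref{lem:transform} as a refinement of Proposition~\ref{prop:socompl}. Membership in~\NEXPl{k-1} is immediate, since a CDNF sentence is a $\Sigma^1_k$ sentence and Proposition~\ref{prop:socompl}(i) applies verbatim. For hardness I would argue that the hard instances of Proposition~\ref{prop:socompl}(i) may be assumed to be in CDNF. Concretely, I first prenex the first-order part $\phi$ into $Q_1'y_1\cdots Q_t'y_t\,\psi_0$ with $\psi_0$ quantifier-free and bring $\psi_0$ into DNF (if $k$ is even) or CNF (if $k$ is odd). It then remains to collapse the first-order prefix into a single block of polarity opposite to $Q_k$: a first-order $\exists$- (resp.\ $\forall$-) quantifier can be absorbed into an adjacent existential (resp.\ universal) second-order block by guessing a relation that pins down the chosen domain element, so the second-order alternation depth stays exactly~$k$ while one first-order block of the required polarity survives. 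As the generic reduction behind Proposition~\ref{prop:socompl}(i) already produces a single first-order block with quantifier-free matrix, this normalization is essentially syntactic and preserves the reduction.

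\textbf{Part (ii), upper bound.} When every second-order variable $R_i$ has arity at most $m$, then over a structure $\AAA$ with domain size $n$ each $R_i$ is a set of at most $n^m$ tuples and hence encodable in polynomially many bits; thus every second-order quantifier $Q_iR_i$ turns into an ordinary polynomial-length quantifier of the same polarity. The single first-order block $Q_{FO}\,\vx$ of a CDNF sentence ranges over a tuple of domain elements and is likewise a polynomial-length quantifier, and by definition of CDNF its polarity is \emph{opposite} to $Q_k$ (it is $\exists$ for even $k$ and $\forall$ for odd $k$). Hence the prefix $Q_1\cdots Q_k\,Q_{FO}$ consists of $k+1$ genuinely alternating blocks starting with $\exists$, and since the matrix $\psi$ is quantifier-free and evaluable in polynomial time once all relations and domain values are fixed, deciding $\AAA\models\Phi$ lies in~\SigmaP{k+1}. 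This additional alternation, contributed by the first-order block, is exactly what lifts the bound above the \SigmaP{k} data complexity of Proposition~\ref{prop:socompl}(ii).

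\textbf{Part (ii), lower bound.} For \SigmaP{k+1}-hardness I would reduce from the \SigmaP{k+1}-complete quantified Boolean validity problem $\QBFS$, taking instances $\Psi=\exists X_1\forall X_2\cdots Q_{k+1}X_{k+1}\,\chi$ whose matrix $\chi$ is in DNF (even $k$) or CNF (odd $k$). I encode the blocks $X_1,\dots,X_k$ by unary second-order relations $R_1,\dots,R_k$ (so $m=1$ already suffices) and the last block $X_{k+1}$ by the first-order block, using one first-order variable $y_j$ per Boolean variable and reading its truth value off a unary predicate $\mathit{True}$ interpreted in $\AAA$. Replacing each literal $x_{i,j}$ by $R_i(c_j)$ for $i\le k$ and by $\mathit{True}(y_j)$ for $i=k+1$ turns $\chi$ into a CDNF matrix $\psi$, so that $\AAA\models\Phi$ holds iff $\Psi$ is valid.

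The main obstacle is twofold. In part~(i) the first-order prefix must be collapsed to a single block \emph{without} raising the second-order alternation depth, which forces care about quantifier dependencies when relocating first-order quantifiers into second-order blocks. In part~(ii) the delicate point is the universal case, where the first-order variables $y_j$ must be restricted to range only over the two truth values rather than the entire domain; I would enforce this by conjoining a Boolean-domain guard and distributing it so that the matrix stays in CNF.
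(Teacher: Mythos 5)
Your membership arguments for (i) and (ii) are sound and agree with the paper's, but the lower bound in part (ii) has a fatal flaw: the QBF fragments you reduce from are not \SigmaP{k+1}-complete. You pair the innermost block with the matrix form dictated by CDNF, i.e., $\exists X_{k+1}$ with a DNF matrix (even $k$) and $\forall X_{k+1}$ with a CNF matrix (odd $k$). In both cases the innermost Boolean block collapses: satisfiability of a DNF and validity of a CNF are decidable in polynomial time, so $Q_{k+1}X_{k+1}\,\chi$ can be folded into the matrix and the fragment drops into \SigmaP{k} --- one level too low. (The Boolean-hard pairings are the opposite ones, $\exists$ with CNF and $\forall$ with DNF, which are precisely \emph{not} the CDNF pairings.) Your encoding inherits this collapse on the target side: since every FO atom you produce is unary and each variable $y_j$ occurs only in literals $\pm\mathit{True}(y_j)$, each DNF term constrains the $y_j$ independently, so the FO block is eliminable in polynomial time and your reduction certifies at most \SigmaP{k}-hardness. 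The Boolean-domain guard you propose for the universal case does not repair this: guarded, independent, two-valued FO variables are equally eliminable. The extra level in Lemma~\ref{lem:transform}(ii) cannot be bought by a one-FO-variable-per-Boolean-variable translation; the FO block only carries genuine NP (resp.\ coNP) power when its variables are coupled, either by non-unary atoms (conjunctive-query joins) or by recurring across many clauses of a CNF.

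The paper obtains exactly this coupling by a different reduction: not from QBF, but from the data complexity of unrestricted $\Sigma^1_{k+1}$ model checking (Proposition~\ref{prop:socompl}(ii)). It grounds the FO block of the fixed hard sentence over $\AAA$ (polynomial, as the sentence is fixed), compresses argument tuples into fresh domain elements so that all second-order predicates become unary, and eliminates the innermost block $\exists R_{k+1}$ by replacing each ground atom $R'('\vec{c}\,')$ with $\gat(x_{R'('\vec{c}\,')})$ for a fresh existentially quantified FO variable, where $\gat$ is a fixed unary relation holding a single element. These fresh variables recur across the clauses of the grounded CNF, so the resulting $\exists\vx$ block retains full NP power, yielding \SigmaP{k+1}-hardness already for $m=1$. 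Your part (i) is in the paper's spirit (it likewise uses SO-Skolemization and dualization via $\exists\vec{y}\,\phi_0=\neg\forall\vec{y}\,\neg\phi_0$, following \cite{EiterGottlobGruevich96}), but you leave open precisely the two steps the paper makes explicit: inner FO existentials must be removed by Skolem \emph{relations} over the preceding universally quantified variables (with a domain ordering to enforce functionality), not by relations merely ``pinning down'' an element, and ``bringing $\psi_0$ into CNF/DNF'' must be a Tseitin-style definitional transform whose fresh predicates are absorbed into the last block $R_k$ --- a naive normal-form conversion is exponential.
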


\subsection{Qualitative Reasoning}
With the help of the results above, we are ready to establish the following central insight into the complexity of non-ground ELPs. 
While it turns out that world view existence on a limited fragment is already complete for a class beyond \NEXP, luckily, for bounded predicate arity we obtain completeness results for the fourth level of the polynomial hierarchy.

\begin{theorem}\label{thm:quali}
  Let~$\prog$ be an ELP and 
 (a)\,  $i=2$ if $P \in \Disj$, and (b)\, $i=1$ if $P \in \Normal \cup \Tight$.
  Then, deciding whether~$\prog$ admits a world view is 
  \NEXPl{i}-c for non-ground $P$ and 
  $\SIGMA{i+2}$-c for non-ground of  \emph{bounded arity}.
\end{theorem}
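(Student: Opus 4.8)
The plan is to establish all four bounds through a single two-way correspondence between world-view existence and second-order model checking, letting Proposition~\ref{prop:socompl} and Lemma~\ref{lem:transform} do the quantitative work. Concretely, I would show that for a non-ground ELP~$\prog$ there is a prenex sentence $\Phi_\prog\in\Sigma^1_{i+1}$, evaluated over the finite Herbrand structure~$\AAA_\prog$ that records the constants~$\HU_\prog$ together with the facts and rule skeleton of~$\prog$, such that $\AAA_\prog\models\Phi_\prog$ iff $\prog$ admits a world view. Membership then follows by applying Proposition~\ref{prop:socompl}(i) to obtain $\NEXPl{i}$ in the general case, and---after normalizing $\Phi_\prog$ to CDNF while keeping every second-order variable of bounded arity---by applying Lemma~\ref{lem:transform}(ii) to obtain $\SIGMA{i+2}$ in the bounded-arity case. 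The crux enabling the latter is that the second-order variables encode world-view interpretations and candidate answer sets, whose arity is that of the predicates of~$\prog$ (up to a constant factor), so bounded predicate arity transfers to bounded second-order arity.

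For the encoding I would represent a world-view interpretation~$I$ by two relations $P^+_p,P^-_p$ per predicate~$p$ (recording $a\in I$ resp.\ $\neg a\in I$), placed in the outermost existential block; this realizes the guess of~$I$ and fixes the epistemic reduct~$\prog^I$ first-order from the~$P^\pm_p$. The decisive step is to express the answer-set predicate $\mathrm{AS}(M)$ of~$\prog^I$ at the right second-order level. For $\prog\in\Disj$ minimality is unavoidable, so $\mathrm{AS}(M)=\forall M'\,\theta(M,M')$ is $\Pi^1_1$; for $\prog\in\Tight\cup\Normal$ I would certify answer sets without any minimality quantifier---for tight programs via Fages' supported-model characterization, which is first-order, and for normal programs by \emph{existentially} guessing a level mapping witnessing foundedness of~$M$---so that $\mathrm{AS}(M)$ becomes $\Sigma^1_1$. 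Assembling the compatibility clauses then yields $\exists M\,\mathrm{AS}(M)$ for~(i), universals $\forall M(\mathrm{AS}(M)\to\cdots)$ for~(ii)--(iii), and, after Skolemizing the per-atom witnesses of~(iv) into relations parametrized by the atom's argument tuple, a further existential block. Collecting the second-order quantifiers into prenex form collapses to $\exists\forall$ (hence $\Sigma^1_2$) for $\Tight\cup\Normal$ and to $\exists\forall\exists$ (hence $\Sigma^1_3$) for $\Disj$; in both cases the exponent is $i+1$. The Skolemization of~(iv) is precisely what keeps arities bounded, as it only appends an atom's (bounded) argument tuple.

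For the matching lower bounds I would reduce, simultaneously for the general and the bounded-arity setting, from model checking of a $\Sigma^1_{i+1}$ sentence in CDNF, which by Lemma~\ref{lem:transform} is $\NEXPl{i}$-hard in general and $\SIGMA{i+2}$-hard under bounded arity. Given $(\Phi,\AAA)$ with $\Phi=\exists R_1\forall R_2\cdots\phi$, I would build $\prog_{\Phi,\AAA}$ that stores~$\AAA$ as facts and realizes the quantifier alternation as in the ground $\SigmaP{i+1}$-hardness constructions for ELP, lifted so that each second-order variable~$R_j$ becomes a program predicate ranging over the Herbrand base: the outermost $\exists R_1$ is realized by the world-view guess (epistemic literals $\kop/\mop$ fixing the tuples of~$R_1$ in~$I$); the universal $\forall R_2$ is realized through world-view compatibility, so that any~$R_2$ falsifying~$\phi$ destroys compatibility of~$I$; and, for $\Disj$, the innermost existential block---equivalently the \coNP minimal-model test---is realized by disjunction together with saturation. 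Since~$\AAA$ enters only through the facts and each~$R_j$ inherits the arity of the corresponding second-order variable, the reduction preserves bounded arity and hence yields both lower bounds at once.

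The step I expect to be the main obstacle is the faithfulness of this hardness reduction, in particular making world-view compatibility enforce the universal second-order quantifier~$\forall R_2$: one must guarantee that the answer sets of~$\prog_{\Phi,\AAA}^I$ are in bijection with the relations~$R_2$ and that clauses~(ii)--(iv) fail \emph{precisely} when some~$R_2$ falsifies~$\phi$, all while staying inside the intended fragment ($\Normal$/$\Tight$ versus $\Disj$) and within the arity bound. A secondary technical point, on the membership side, is the CDNF normalization of~$\Phi_\prog$: the alternating first-order matrix arising from the model, reduct, and foundedness conditions must be folded into the second-order prefix without increasing the alternation depth beyond~$i+1$ and without raising any second-order arity---exactly the regime that Lemma~\ref{lem:transform} is designed to accommodate.
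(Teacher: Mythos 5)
Your hardness argument coincides with the paper's: reduce from model checking of CDNF $\Sigma^1_{i+1}$ sentences via Lemma~\ref{lem:transform}, realize $\exists R_1$ by the epistemic guess, the universal block through world-view compatibility enforced by a rule of the form $v\leftarrow \eneg v, \eneg \pneg u$, and the innermost existential together with the FO check by disjunction plus saturation; since the construction preserves arities, a single reduction indeed yields both lower bounds, exactly as in the paper. On membership you genuinely diverge. The paper gives two separate arguments: for unrestricted non-ground programs it grounds and lifts the ground $\SigmaP{i+1}$ bound of Shen--Eiter to $\NEXPl{i}$ via the succinctness upgrade of Gottlob--Leone--Veith; for bounded arity it guesses the polynomial-size WVI~$I$, compiles the epistemic literals away into an ordinary program $\eprog_{I,\eneg}$, and checks compatibility using the bounded-arity brave/cautious oracles of Eiter--Faber--Fink ($\SigmaP{3}/\PiP{3}$ for \Disj, $\SigmaP{2}/\PiP{2}$ for $\Normal\cup\Tight$). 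Your uniform SO encoding is an attractive alternative, and your alternation bookkeeping is the right one: minimality forces a $\Pi^1_1$ answer-set test for \Disj, while Fages' supportedness (tight) and an existentially guessed level mapping (normal) keep the answer-set predicate at $\Sigma^1_1$, so that both polarities of its occurrences collapse into $\exists\forall$; Skolemizing the per-atom witnesses of compatibility condition~(iv) into relations of arity at most $2m$ is likewise sound. For the general case this suffices, since Proposition~\ref{prop:socompl}(i) applies to arbitrary $\Sigma^1_{i+1}$ sentences, no normal form needed.

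The gap is in the bounded-arity membership leg. Lemma~\ref{lem:transform}(ii) is stated and proved only for sentences in CDNF, and the normalization you defer to as a ``secondary technical point'' is not arity-innocuous: the CDNF transformation the paper uses (following Eiter--Gottlob--Gurevich) introduces auxiliary predicates $P_\psi(\vx)$ whose arity equals the number of free FO variables of the named subformula, and in $\Phi_\prog$ the FO matrix must quantify over \emph{all variables of a rule} --- a quantity that bounded \emph{predicate} arity does not bound --- so these auxiliary predicates escape the bound~$m$ and Lemma~\ref{lem:transform}(ii) no longer applies. There is moreover a parity obstruction for $\Normal\cup\Tight$: there $k=i+1=2$ is even, so CDNF demands an FO part $\exists\vx\,\psi$ with $\psi$ in DNF, whereas your encoding (rule satisfaction, supportedness/level-mapping conditions, and the Skolemized witnesses) naturally terminates in a purely universal FO block in CNF; padding with a dummy quantifier block to fix the parity would land you at $\SigmaP{i+3}$ and lose exactly the level you need. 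Both problems are repairable --- guess the polynomial-size extensions of the outermost existential SO block directly and evaluate the remainder, whose FO alternation depth is constant, with a $\SigmaP{i+1}$ oracle, which in essence re-proves the membership half of Lemma~\ref{lem:transform}(ii) without CDNF and converges to the paper's Eiter--Faber--Fink argument --- but as written, invoking Lemma~\ref{lem:transform}(ii) after an unexamined CDNF normalization is a step that would fail.
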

\begin{proof}[Proof (Sketch)]
  \emph{Membership:} For the non-ground cases, %
  the result follows 
  immediately  from the $\Sigma^p_{i+1}$-completeness %
  in the  ground (propositional) case~\cite{ShenEiter16}, as grounding an ELP $\eprog$ leads to an exponentially 
  larger program $\grd(\eprog)$ and $\Sigma^p_{j}$ becomes $\NEXPl{j-1}$~\cite{GottlobLeoneVeith99}.
  For the bounded arity cases. %
  If predicate arities are bounded by a constant, a guess for a WVI~$I$  
  of an epistemic program $\eprog$ has polynomial size. We can emulate the epistemic reduct~$\eprog^I$ 
  by replacing in $\eprog$ each epistemic literal $\eneg \ell$ where 
  $\ell =  L(\vec{t})$, $L \in \{p, \pnot p\}$ by an atom $q_{\eneg L}(\vec{t})$,  where 
  $_{\eneg L}$ is a fresh predicate of arity $|\vec{t}|$, and add the following fact or rule, for each tuple $\vec{c}$ of constants (having arity $|\vec{t}|$: 
(1) $q_{\eneg L}(\vec{c})$, if $L(\vec{c})\notin I$, and 
(2)   $q_{\eneg L}(\vec{c}) \leftarrow \pneg{}L(\vec{c})$ otherwise
  (double negation cancels).
  Then the answer sets of the resulting program $\eprog_{I,\eneg}$ correspond to the answer sets  of $\eprog^I$, as (1) and (2), respectively,
  can be unfolded with rules in the grounding of $\eprog$ that contain $\eneg L(\vec{c})$. 
   In particular,  $I$ is compatible with $\answersets{\eprog^I}$ iff $I$ is compatible with $\answersets{\eprog_{I,\eneg}}$. %
  As $\eprog_{I,\eneg}$ has bounded predicate arity, brave and cautious reasoning
  from $\eprog_{I,p_{\eneg}}$ is in $\Sigma^p_3$ and $\Pi^p_3$, respectively,~\cite{EiterFaberFink07}. %
  Consequently, we can check in  polynomial time with an $\Sigma^p_3$ oracle whether $I$ fulfills conditions (i)--(ii) of compatibility with $\answersets{\eprog^I}$,~i.e., 
  whether $I$ is a WVI of~$\eprog$. This shows membership in $\Sigma^p_4$, i.e., $\eprog \in \Disj$. %
  If $P \in \Tight \cup \Normal$, %
  brave and cautious reasoning from $\eprog_{I,p_{\eneg}}$ is 
  in~$\Sigma^p_2$ and $\Pi^p_2$, respectively,~\cite{EiterFaberFink07}, as program~$\eprog_{I,p_{\eneg}}$ is 
  normal/tight, if program~$\eprog$ is so. 
  This shows  membership in $\Sigma^p_3$.

  \noindent\emph{Hardness:}
  We construct from a given sentence~$\Phi\in \Sigma^1_k$ and
  finite structure $\AAA$ an ELP~$\prog$,
  thereby, we reduce deciding whether $\AAA \models \Phi$ (model
  checking) to deciding whether~$\prog$ admits a world view
  (world-view-existence).
  In our reduction, we lift the existing ELP encoding that solves QBF
  validity to SO~\cite{ShenEiter16}.
  In detail,
  Case~$P \in \Disj$:
  let  $\AAA=(A,\sigma^\AAA)$ and
  $\Phi = \exists R_1\forall R_2\exists R_3.\phi \in \Sigma^1_k$ where
  $\phi = \forall \vx\psi$ and
  $\psi = \bigwedge_{j=1}^m\bigvee_{h=1}^{\ell_j}L_{j,h}$ is in
  CNF,~i.e., as in Lemma~\ref{lem:transform}.
  We take $u$ and $v$ as propositional atoms and 
  for each relation symbol~$R \in R_1 \cup R_2 \cup R_2$, we introduce
  predicates $R$ and $\ol{R}$. Then, we construct programs $\prog_1,\ldots,\prog_5$ as follows. Let $e(\vx) = (e(x_1), e(x_2), \ldots, e(x_m))$ and
  $e(v) \eqdef v$, if $v$ is an FO-variable and $e(v) \eqdef c^\AAA$
  if $v$ is a constant symbol.
  Intuitively, $\prog_1$ selects with epistemic negation a candidate
  world view corresponding to 
  a guess  for each relation symbol $R_{1,i}$
  in~$R_1$ using an auxiliary relation symbol $\ol{R_{1,i}}$ for its complement.%
  \\[0.25em]
  \myind\noindent $\prog_1 = \{ R_{1,i}(e(\vx_{1,i})) \leftarrow  \eneg \ol{R_{1,i}}(e(\vx_{1,i}));$\\
   \myind${}\qquad\quad\ol{R_{1,i}}(e(\vx_{1,i})) \leftarrow \eneg
  R_{1,i}(e(\vx_{1,i})) \SM R_{1,i} \in R_1 \}$.\\[0.25em]
  Program~$\prog_2$ generates then answer sets for each possible relation $R_{2,i}$ in~$R_2$.\\[0.25em]
   \myind$\prog_2 =  \{ \ol{R_{2,i}}(e(\vx_{2,i})) \leftarrow  \pneg R_{2,i}(e(\vx_{2,i}));$\\
   \myind${}\qquad\quad \ol{R_{2,i}} (e(\vx_{2,i})) \leftarrow \pneg
  \ol{R_{2,i}} \mid R_{2,i} \in R_2\}$ 
  \\[0.25em]
  Programs $\prog_3$ guesses for each such valuation of $R_2$ a valuation of $R_3$ such that $\forall \vx\psi$ is true.
  \\[0.25em]
   \myind\noindent$\prog_3 = \{ R_{3,i}(e(\vx_{3,i})) \vee
  \ol{R_{3,i}}(e(x_{3,i})) \SM R_{3,i} \in R_3\}$\\[0.25em]
  The program $\prog_4$ checks then using the saturation technique that $\forall\vx\psi$ is not violated, i.e., $\exists\vx \neg \psi$ is false; any violation makes $u$ true and saturates the guess.
  The last rule in $\prog_4$ eliminates the candidate world view if
  $\neg u$ cannot be derived.\\[0.25em]
  \noindent$\prog_4 = %
  \{u \leftarrow s(L_{j,1}), \ldots, s(L_{j,\ell_j}) \SM 1 \leq j\leq m\} \,\cup$\\
  \mbox{${}\qquad\;\,\{R_{3,i}(e(\vx_{3,i})) \leftarrow u;\; %
  \ol{R_{3,i}}(e(\vx_{3,i})) \leftarrow u \SM R_{3,i} \,{\in}\, R_3\} \cup$} \\
  ${}\qquad\;\,\{v \leftarrow \eneg v,\eneg \pneg u \}.$\\[0.25em]
  where $s(R(\cdot)) \eqdef \pneg R(\cdot) $ if $R \notin \sigma$,
  $s(R(\cdot)) \eqdef R(\cdot)$ otherwise and
  $s(\neg R(\cdot)) \eqdef R(\cdot)$.\\[0.25em]
  Program~$\prog_5$ represents atoms of the input structure~$\AAA$ as
  facts.\\[0.25em]
   \myind$\prog_5 = \{R_i(\vc) \SM \vc \in R_i^\AAA,\vc\in U^{|\vc|},
  1 \leq i\leq k\}.$\\[0.25em]
  Notably, we treat equality as the other relations.
  Finally, we build the program $\prog = \bigcup_{i=1}^5 \prog_i$. Then $P$
  is constructible in polynomial time and it has a world view iff  $\AAA \models \Phi$.\\[0.25em]
  Case~$P \in \Tight \cup \Normal$:
  We encode evaluating an SO sentence $\Phi = \exists P_1\forall P_2\exists \vx
  \psi$ over $\AAA$. We assume $\psi =
  \bigvee_{j=1}^m\bigwedge_{h=1}^{\ell_j}L_{j,h}$ is a DNF, drop
  $\prog_3$, and replace $\prog_4$ with the following rules:\\[0.25em]
   \myind$\prog_4'
  = \{u\leftarrow L_{j,1},\ldots,L_{j,\ell_j} \SM 1 \leq
  j\leq m\} \,\cup$ \\
   \myind${}\qquad\;\,\{v \leftarrow \eneg v, \eneg u \}$.
  \\[0.25em]
  For each valuation of $P_1$ and
  $P_2$, we have then a unique answer set that contains
  $u$ iff $(\AAA,R^\AAA_1,R^\AAA_2) \models \exists
  \vx\psi$.
  Then $\prog' = \prog - (\prog_3 \cup \prog_4)\cup
  \prog_4'$ has a world view iff $\AAA \models
  \prog$. As $\prog_4'$ is tight, the reduction applies for tight programs as well.
 \end{proof}

\begin{lemma}[$\star$]\label{lem:epihorn}
  Let~$\prog \in \NonNeg$. %
  Then, deciding whether~$\prog$ admits a world view is
  in \Ptime{} if $P$ is ground,
    \coNP-complete if $P$ is non-ground and has bounded arity, and
    \EXP-complete if $P$ is non-ground.
\end{lemma}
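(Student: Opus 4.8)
The plan is to reduce world-view existence for $\NonNeg$ programs to ordinary consistency (answer-set existence) of a plain \emph{positive} program, and then read off all three bounds from the known complexity of the latter. Given $\prog\in\NonNeg$, let $\widehat{\prog}$ be the positive program obtained by replacing every epistemic literal $\kop a$ and $\mop a$ by the plain atom $a$. The central claim I would establish is: \emph{$\prog$ admits a world view iff $\widehat{\prog}$ is consistent.} Everything then follows, because $\widehat{\prog}$ is itself an $\NonNeg$ program with no modal operators, so positive consistency is literally a special case and hardness transfers in both directions.

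First I would isolate the structural fact that drives the argument: for any WVI $I$ the epistemic reduct $\prog^I$ is a positive (Horn) program with constraints. Indeed, $\kop a=\neg\eneg a$ either deletes a rule (if $a\notin I$) or turns the body element into the plain atom $a$ (if $a\in I$), and symmetrically $\mop a=\eneg\neg a$ either drops the literal (if $\neg a\notin I$) or becomes $a$, with double negation cancelling. Hence $\answersets{\prog^I}$ is empty or the singleton least model of its rule part. Compatibility condition (iv) then forces any world view $I$ to decide every atom, so $I$ is the completion $I(S)$ of its positive part $S$; conditions (ii)--(iii) force $S$ to equal that least model, and (i) forces the reduct to satisfy its constraints. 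Thus world views correspond exactly to sets $S$ with $S=\Gamma_0(S)$ and $S\models C(S)$, where $\Gamma_0(S)$ is the least model of the rule part of $\prog^{I(S)}$ and $C(S)$ its constraints.

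Next I would prove the two directions. For the forward direction, a world view $S$ is a model of $\widehat{\prog}$: whenever the erased body of a rule lies in $S$ the corresponding reduct rule survives with a body contained in $S$, so the head is derived; and a reduct constraint is violated by $S$ exactly when \emph{all} atoms of its erased body lie in $S$, so $S\models C(S)$ is equivalent to $S$ satisfying the erased constraints. Hence the least model $L=\text{lm}(\widehat{\prog})$ satisfies $L\subseteq S$ and inherits constraint satisfaction ($\widehat\beta\not\subseteq S$ with $L\subseteq S$ gives $\widehat\beta\not\subseteq L$), so $\widehat{\prog}$ is consistent. For the converse I take $L=\text{lm}(\widehat{\prog})$ and show $\Gamma_0(L)=L$---one inclusion because $L$ models the reduct rules, the other by induction along the derivation of $L$ in $\widehat{\prog}$---together with $L\models C(L)$, whence $I(L)$ is a world view. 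The crux, and the main obstacle, is exactly this fixpoint identity $\Gamma_0(L)=L$ and the constraint-violation characterisation, where the opposing behaviours of $\kop$ (rule deletion) and $\mop$ (body weakening) must be reconciled.

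Finally I would transfer complexity through the equivalence. For ground $\prog$ the least model and the constraint check are computable in polynomial time, giving membership in $\Ptime$. For non-ground $\prog$ of bounded arity the Herbrand base is polynomial, and deciding \emph{inconsistency} of $\widehat{\prog}$ amounts to guessing one ground constraint instance together with a polynomial-size derivation for each of its body atoms---an $\NP$ check---so world-view existence is in $\coNP$; $\coNP$-hardness follows since membership of an atom in the least model of a bounded-arity positive program is already $\NP$-hard (it subsumes conjunctive-query evaluation, e.g.\ graph $3$-colourability) and a single constraint turns this into an inconsistency test~\cite{EiterFaberFink07}. For general non-ground $\prog$ the grounding is exponential and the polynomial ground procedure yields $\EXP$ membership, while $\EXP$-hardness follows from the $\EXP$-completeness of the combined complexity of positive non-ground (Datalog) consistency~\cite{DantsinEtAl01}. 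Once the reduction lemma is in place, all three bounds are immediate from these classical results.
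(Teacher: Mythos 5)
Your proposal is correct and follows essentially the same route as the paper's proof: replacing every $\kop a$ and $\mop a$ by the plain atom $a$, reducing world-view existence to consistency of the resulting positive (Horn) program, and then invoking the known \Ptime{}/\coNP{}/\EXP{} results for ground, bounded-arity, and general non-ground positive programs \cite{EiterFaberFink07,DantsinEtAl01,GottlobLeoneVeith99}. The only difference is one of detail: the paper merely sketches the correspondence via the unique least model, whereas you explicitly verify the fixpoint identity $\Gamma_0(L)=L$ and the constraint-violation characterisation, which is a welcome elaboration rather than a new approach.
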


\subsection{Counting Complexity Beyond \NEXP}
Before we can turn our attention to the quantitative setting, 
we need to define counting classes for classifying counting problems, whose corresponding decision problems are in $\NEXP^\mathcal{C}$ for a decision class~$\mathcal{C}$.
Following, we provide canonical problems, followed by completeness results for ELPs.

\paragraph{Generalizing Counting Classes} 
For counting solutions of problems in \NEXP, the corresponding counting complexity class 
\#EXP~\cite{PapadimitriouYannakakis86} has been defined. However, classes based on oracle machine models, as in the $\#\cdot$ complexity classes~\cite{HemaspaandraVollmer95a} have been left out for exponential time. 
 Below, we generalize counting complexity to the realm of exponential time. 
 This allows us to describe in analogy to decision problems the complexity of counting problems beyond \NEXP. %

\begin{definition}[Exp-Oracle Classes]
    Let~$\mathcal{C}$ be a decision complexity class.
    Then, 
    $\sharped{\mathcal{C}}$ is the class of counting problems, whose solution is obtained by counting the number of accepting paths %
    of a non-deterministic Turing machine in exponential time with access to a $\mathcal{C}$ oracle. %
\end{definition}

Observe that by construction $\sharpe = \sharped{\Ptime}$.
To demonstrate these classes, we define a family of counting problems serving as canonical representatives.

\paragraph{Succinct Quantified Boolean Formulas}
To define succinct formula representation, we vastly follow existing ideas~\cite{Williams08}.
For a circuit~$C$ with a set~$I$ of~$n$ inputs, where $C$ has size~$\poly(n)$, we let~$T(C)$ be the \emph{truth table} of the Boolean function represented by~$C$.
Formally, $T(C)$ is the $2^n$-bit string such that~$T(C)[i]=C(B_i)$, where $B_i$ is the $i$-th of all $n$-bit strings in lexicographic order;
intuitively, $T(C)[i]$ is bit $i$ of a string that encodes a problem instance.

For a 3CNF $\varphi$, we define such a circuit~$C_\varphi$ (``\emph{clause circuit}'') over \emph{sign-bits}~$s_j$ for $j\in[1,3]$ and \emph{variable-bits}~$b_j^k$ for $k \in[1,v_\phi]$, where $v_\phi=\ceil{\log(|\var(\varphi)|)}$. 
More precisely, $C_\varphi$ has $3(v_\phi+1)$  input bits~$s_1, \vec{b_1},
s_2, 
\vec{b_2},
s_3,
\vec{b_3}$, 
where $s_j$ tells whether the~$j$-th literal $\ell_j$ in a 3CNF clause, whose variable is encoded by the bits $\vec{b_j} = b_j^1, \ldots, b_j^{v_\phi}$, is positive ($s_j=1$) or negative ($s_j=0$).
That is, $\ell_1 \vee \ell_2 \vee \ell_3\in \varphi$ if and only if %
$C_\varphi(\sign(\ell_1), \vec{b_1}, \sign(\ell_2), \vec{b_2},\sign(\ell_3), \vec{b_3})=1$. 
For~$\varphi$ in 3DNF, a circuit~$C_\varphi$ (``\emph{term circuit}'')  is defined analogously.

\begin{example}
    Let~$(a \vee b \vee \neg c) \wedge (\neg b \vee a \vee d) \wedge (\neg b \vee c \vee \neg d)$ be a Boolean formula in 3CNF.
    Using $2=\ceil{(\log(4))}$ bits, we can succinctly represent this formula as a circuit.
 \end{example}

For QBFs, we must also succinctly represent quantifiers. While we could merge this into the clause or term circuit,
for the sake of readability, we use a second circuit.
For a QBF~$Q=\exists V_1. \forall V_2. \ldots\allowbreak Q_\ell V_\ell. \varphi$ with alternating  quantifier blocks $Q_i\,{\in}\, \{\exists, \forall\}$, we
define a \emph{quantifier circuit} $C_Q$ 
with $\ceil{\log(l)}\,{+}\, v_\phi$ many input bits $\vec{q},\vec{b}$, where 
$\vec{q}\,{=}\, q^1,\ldots,q^{\ceil{\log(l)}}$ and $\vec{b}\,{=}\, b^1,\ldots,b^{v_\phi}$. 
Intuitively, 
$v\,{\in}\,\var(\varphi)$ is in~$V_\iota$ iff  $C_Q(bin(\iota),$ $bin(v))=1$,
where~$bin(\cdot)$ is the binary representation.
$Q$ is \emph{closed} if every~$v\in\var(\varphi)$ is in~$V_\iota$ for some~$\iota$; otherwise~$Q$ is \emph{open} and its (set of) \emph{free variables}~$\var(\varphi)\setminus(\bigcup_{1\leq \iota \leq \ell} V_\iota)$.

\begin{definition}[Succinct QBF]
    A \emph{succinct QBF} $Q$ with $\ell$ alternating quantifier blocks (alternation depth) is given by a \emph{quantifier circuit $C_Q$ and a clause circuit $C_\phi$}. 
    Problem $\QBFSAT_\ell$ is deciding whether a closed 
    succinct QBF $Q$ evaluates to true; $\#\QBFSAT_\ell$ asks to count %
 assignments $\theta$ over the free variables of~$Q$ such that $Q\theta$ evaluates to true.
\end{definition}

The following complexity result is known.

\begin{proposition}[Complexity of $\QBFSAT_\ell$~\cite{GottlobLeoneVeith99,Stewart91}]\label{prop:succ}
For succinct QBFs $Q$ of alternation depth $\ell\geq 1$,
$\QBFSAT_\ell$
is~$\NEXP^{\SigmaP{\ell{-}1}}$-complete.
\end{proposition}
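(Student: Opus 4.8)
The plan is to establish the two matching bounds separately; completeness then follows. Throughout I use that the paper's notation gives $\NEXPl{\ell-1}=\NEXP^{\SigmaP{\ell-1}}$, and that an oracle for $\SigmaP{\ell-1}$ is interchangeable with one for $\PiP{\ell-1}$ (the machine may negate the oracle's answer), so $\NEXP^{\SigmaP{\ell-1}}=\NEXP^{\PiP{\ell-1}}$.

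\emph{Membership.} I would argue directly that $\QBFSAT_\ell\in\NEXP^{\SigmaP{\ell-1}}$. Given the succinct QBF $Q=\exists V_1.\forall V_2\cdots Q_\ell V_\ell.\varphi$ via circuits $C_Q,C_\phi$, a nondeterministic exponential-time machine first guesses a full truth assignment to the outermost existential block $V_1$; since $V_1$ contains at most $2^{v_\phi}$ variables, this guess has exponential size and is affordable in \NEXP. Evaluating $C_Q$ on all variable- and block-indices recovers the quantifier structure, and evaluating $C_\phi$ on all clause-indices (substituting the guessed values of $V_1$) lets the machine write out the \emph{explicit} (non-succinct) remaining instance $\forall V_2\exists V_3\cdots Q_\ell V_\ell.\varphi'$; this is exponentially large but constructible in exponential time. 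The resulting object is an ordinary QBF-validity instance of alternation depth $\ell-1$ whose leading quantifier is universal, hence a $\PiP{\ell-1}$ problem, which one oracle call decides via the equality $\NEXP^{\SigmaP{\ell-1}}=\NEXP^{\PiP{\ell-1}}$. For $\ell=1$ no alternations remain and the machine merely checks the guessed assignment against the explicit CNF in (exponential) deterministic time, matching $\SigmaP{0}=\Ptime$ and recovering succinct satisfiability in \NEXP.

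\emph{Hardness.} Here I would avoid a raw Turing-machine tableau and instead reduce from second-order model checking, which the excerpt already classifies: by Lemma~\ref{lem:transform}(i) with $k=\ell$, deciding $\AAA\models\Phi$ for $\Phi\in\Sigma^1_\ell$ in CDNF is $\NEXPl{\ell-1}=\NEXP^{\SigmaP{\ell-1}}$-complete. Given such $\Phi=\exists R_1\forall R_2\cdots Q_\ell R_\ell.\phi$ and a structure $\AAA$ with universe $A$, I would introduce one Boolean variable per tuple, $x_{R_i,\vec t}$ for $\vec t\in A^{\rho_i}$ (where $\rho_i$ is the arity of $R_i$), indexed by a bit string encoding the pair $(i,\vec t)$ of length $O(\log|R|+\max_i\rho_i\cdot\log|A|)$, so the succinct representation stays polynomial. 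Quantification over each relation $R_i$ becomes a Boolean block of the same polarity, and the innermost first-order quantifier of $\phi$ (which, by the CDNF shape, shares the type of $Q_\ell$) is encoded by $\lceil\log|A|\rceil$ index bits per first-order variable and merged into the last block, so the alternation depth stays exactly $\ell$. The quantifier circuit $C_Q$ is read off from the arities and this indexing; the clause circuit $C_\phi$ encodes, for each first-order assignment, the corresponding instance of the matrix $\psi$, hard-wiring the truth values of the fixed structure relations $R^\AAA$ (a polynomial-size membership lookup, since $\AAA$ is given explicitly) and leaving the literals over the $x_{R_i,\vec t}$. All circuits have polynomial size and are produced in polynomial time, and by construction the succinct QBF is true iff $\AAA\models\Phi$.

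The main obstacle is the hardness reduction, and within it the requirement that every gadget be built \emph{succinctly}: the explicit QBF is exponentially large, so the clause and quantifier circuits must describe it by a uniform polynomial-size rule over tuple- and assignment-indices, and a standard conversion of the matrix to 3CNF (3DNF for the dual parity) must preserve both this uniformity and the alternation depth $\ell$ by using bounded-size per-clause gadgets whose auxiliary variables are themselves uniformly indexed. Once these circuits are in place, correctness is a routine unwinding of the second-order semantics, and the base case $\ell=1$ coincides with the classical $\NEXP$-completeness of succinct satisfiability.
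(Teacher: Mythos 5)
The paper offers no proof of this proposition at all --- it is imported verbatim from Gottlob--Leone--Veith and Stewart --- so your attempt can only be measured against the standard argument. Your membership direction is sound: guessing the (exponentially large) assignment to $V_1$, expanding the residual instance explicitly in exponential time, and issuing a single exponentially long query to a $\PiP{\ell-1}$ oracle is exactly how upper bounds in the weak exponential hierarchy are obtained, and the swap $\NEXP^{\SigmaP{\ell-1}}=\NEXP^{\PiP{\ell-1}}$ is legitimate. Reducing from Lemma~\ref{lem:transform}(i) for hardness is also non-circular within this paper, since that lemma is proved from Proposition~\ref{prop:socompl} and not from the present statement.

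There is, however, a genuine flaw in the hardness step. You assert that the innermost first-order quantifier of $\phi$ ``shares the type of $Q_\ell$'' and can be merged into the last propositional block. The CDNF definition says the opposite: for odd $\ell$ the second-order prefix ends with $\exists R_\ell$ while the FO part is $\forall \vx\,\psi$ with $\psi$ in CNF, and for even $\ell$ it ends with $\forall R_\ell$ while the FO part is $\exists \vx\,\psi$ with $\psi$ in DNF --- the polarities are always \emph{opposite}. Hence quantifying $\lceil\log|A|\rceil$ index bits per FO variable and appending them to the last block either changes the semantics (moving $\forall\vx$ outward across $\exists R_\ell$ is not equivalence-preserving) or forces an $(\ell{+}1)$-st alternation, which would only establish hardness of $\QBFSAT_{\ell+1}$, a strictly weaker claim. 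The correct move --- and the very reason CDNF pairs the FO quantifier with the matrix connective of the same force --- is to introduce \emph{no} propositional variables for $\vx$ at all: absorb the FO quantifier into the exponentially large matrix by grounding, $\psi_A=\bigwedge_{\theta}\psi(\vx\theta)$ (dually a big disjunction of terms in the DNF case), and let the clause (term) circuit index pairs consisting of a clause of $\psi$ and an assignment $\theta$, deciding membership of a candidate literal triple by decoding your codes $x_{R_i,\vec t}$ and hard-wiring $R^\AAA$. This keeps the alternation depth exactly $\ell$ and the matrix in the required normal form; your uniformly indexed Tseitin auxiliaries then join the innermost block, whose polarity \emph{does} match the 3CNF/3DNF convention (existential auxiliaries distribute over the big conjunction because they are disjoint per $\theta$). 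This is precisely how the paper's own proof of Lemma~\ref{lem:transform} treats grounding, so the repair is local, but as written the merge step fails.
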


\noindent
This immediately yields corresponding counting complexity. %

\begin{proposition}[Complexity of $\#\QBFSAT_\ell$]\label{prop:countsucc}
For succinct QBFs $Q$ of alternation depth $\ell\geq 0$,
counting the number of assignments over its free variables under which $Q$ evaluates to true is $\sharped{\SigmaP{\ell}}$-complete. %
\end{proposition}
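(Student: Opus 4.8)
The plan is to establish the two directions separately, both leaning on Proposition~\ref{prop:succ}.

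\textbf{Membership.} To show $\#\QBFSAT_\ell\in\sharped{\SigmaP{\ell}}$, I would build a nondeterministic exponential-time machine with a $\SigmaP{\ell}$ oracle whose accepting computations are in bijection with the counted assignments. On input $(C_Q,C_\varphi)$ the machine iterates over all (up to exponentially many) variables, consulting $C_Q$ to detect the free ones, and branches on a single bit for each free variable; thus each assignment $\theta$ over the free variables is produced on exactly one path. With $\theta$ fixed, what remains is to decide the \emph{closed} succinct QBF $Q\theta$, which still has alternation depth $\ell$. By Proposition~\ref{prop:succ} this is a $\NEXP^{\SigmaP{\ell-1}}$ question, and it can be answered deterministically relative to the oracle: unfold $Q\theta$ into the (exponentially large but explicit) $\Sigma_\ell$-QBF it denotes and issue one query asking whether that formula is valid, which is a $\SigmaP{\ell}$ predicate in the length of the query. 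Hence the $\theta$-path accepts iff $Q\theta$ is true, so the number of accepting paths equals $\#\QBFSAT_\ell(Q)$. For $\ell=0$ the oracle is vacuous and the argument collapses to $\sharped{\Ptime}=\sharpe$, consistent with the observation preceding the statement.

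\textbf{Hardness.} For the lower bound I would give a parsimonious reduction from an arbitrary $f\in\sharped{\SigmaP{\ell}}$. By definition $f$ counts the accepting paths of some nondeterministic exponential-time machine $N$ equipped with a $\SigmaP{\ell}$ oracle, which we may take to be validity of explicit $\Sigma_\ell$-QBFs; a path is identified with the string $w$ of $N$'s nondeterministic choices. The goal is a succinct open QBF $Q_x$ of alternation depth $\ell$ whose free variables range over such strings $w$, so that $Q_x$ becomes true under exactly those $w$ that are accepting, giving $\#\QBFSAT_\ell(Q_x)=f(x)$. I would obtain $Q_x$ by lifting the reduction behind Proposition~\ref{prop:succ}: the time-space tableau of $N$'s run on $x$ for a fixed $w$ is locally checkable and highly regular, hence encodable by a polynomial-size clause circuit $C_\varphi$ over succinctly indexed cell variables, while a polynomial-size quantifier circuit $C_{Q_x}$ installs the $\ell$ alternating blocks that certify the oracle answers. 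The counted free variables $w$ play the role the top-level existential guess plays in the decision version.

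\textbf{Main obstacle and parsimony.} The delicate point is fitting the oracle bookkeeping into exactly $\ell$ alternations \emph{and} keeping the reduction bijective. Confirming a \emph{positive} answer to a $\Sigma_\ell$-query is a $\SigmaP{\ell}$ task, but confirming a \emph{negative} answer is $\PiP{\ell}$, so a naive conjunction would overshoot the alternation budget. I would handle this exactly as in the proof of Proposition~\ref{prop:succ}: bundle all positive queries level-by-level into one $\exists\forall\cdots$ prefix and dovetail the complementary (negated) negative queries into the same prefix, with the counted free variables supplying the outermost existential layer that absorbs the extra level. To guarantee parsimony I would force every auxiliary variable --- tableau cells, guessed oracle answers, and inner certificates --- to be functionally determined by $w$ together with the (unique) correct oracle answers, so that each accepting path admits exactly one satisfying extension and the two counts coincide. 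Establishing this unique-extension property, rather than the circuit constructions, is where the real work lies.
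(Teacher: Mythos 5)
Your proposal is correct in outline, but note first what you are being compared against: the paper contains \emph{no} proof of Proposition~\ref{prop:countsucc} at all --- it is asserted to follow ``immediately'' from Proposition~\ref{prop:succ}, with the hardness machinery delegated to the cited results of Gottlob--Leone--Veith and Stewart. Your membership argument is precisely the intended immediate one: branch once per free-variable assignment~$\theta$, then decide the closed depth-$\ell$ instance $Q\theta$ \emph{deterministically} relative to the oracle by expanding it into the explicit (exponential-size) QBF and issuing a single \SigmaP{\ell} query, so accepting paths are in bijection with counted assignments. Your hardness sketch supplies detail the paper never writes down, and you correctly isolate the real crux --- the alternation budget versus parsimony when verifying oracle answers.

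Two points need to be made explicit for your plan to close. First, the bookkeeping only works under the reading that the counted free variables act as the outermost existential block and the $\ell$ quantified blocks then begin with $\forall$: the \PiP{\ell} checks certifying \emph{negative} answers to \SigmaP{\ell} queries occupy exactly blocks $1,\dots,\ell$ of a $\forall$-leading prefix, whereas under an $\exists$-leading quantified prefix (as the paper's displayed definition of succinct QBFs suggests) they would spill into block $\ell{+}1$ and your ``dovetailing'' would fail. This $\forall$-leading reading is in fact how the paper uses $\#\QBFSAT_\ell$ in the hardness proofs of Theorem~\ref{thm:quantitative} (prefixes $\forall U.\,\varphi$ and $\forall U.\exists W.\,\varphi$ over the free variables). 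Second, your phrase ``functionally determined by $w$ together with the correct oracle answers'' is not automatic for the outermost certificates of \emph{positive} \SigmaP{\ell} queries: a true query may have many witnesses, so absorbing them into the counted layer overcounts. The standard repair is to count only the lexicographically least certificate $c$, and the minimality condition ($c$ works, and every $c' <_{\text{lex}} c$ fails) is a conjunction of a $\Pi_{\ell-1}$-statement with a $\forall c'.\Sigma_{\ell-1}$-statement, hence again \PiP{\ell}-shaped and within the $\forall$-leading budget; tableau cells and answer bits are genuinely determined once both polarities of every answer are verified. You flagged this unique-extension step as ``where the real work lies'' without supplying the device, so your writeup is an honest and essentially correct expansion of the paper's one-line justification, short of that one (standard) canonicalization argument.
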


We will utilize this result by defining a parsimonious reduction to our counting problems of interest. A \emph{parsimonious reduction} is a polynomial-time reduction from one problem to
another that preserves the number of solutions, i.e., it induces a bijection
between respective sets of solutions of two problems.

\subsection{Quantitative Reasoning}
Next, we discuss how  quantitative aspects enable more fine-grained reasoning. 
Indeed, deciding whether a world view exists concerns only a single world view.
Instead, if we aim for stable results 
towards consensus among different world views, 
one would prefer computing \emph{levels of plausibility} for certain observations (assumptions). This is achieved by quantitative reasoning, where we quantify the number of world views satisfying a given query~$Q$. %
Thereby we compute the level of plausibility for $Q$. We need the following notation.

\begin{definition} 
    An \emph{(epistemic) query} is a set of expressions of the form $\mop \ell$ or $\kop \ell$, where $\ell$ is a literal. %
\end{definition}

\noindent Intuitively, we can then quantify the plausibility of
queries. To this end, we define the union of an ELP~$\eprog$ and a
query~$Q$, which is a set of expressions as defined above,
by~$\eprog\sqcup Q\eqdef \eprog\cup \{v\leftarrow \eneg v, \neg q \mid
q\in Q\}$ for a fresh atom $v$.

\begin{definition}[Plausibility Level]
    Let~$Q$ be an epistemic query and $\eprog$ be an ELP. The \emph{plausibility level of~$Q$} is defined as $L(\eprog, Q)\eqdef \WVc(\eprog \sqcup Q)$. %
\end{definition}

\noindent We define probabilities via two counting operations and therefore study the complexity of computing plausibility levels.

\begin{definition}[Probability]
    Let~$Q$ be an epistemic query and $\eprog$ be an ELP. The \emph{probability of~$Q$} is defined as $\frac{L(\eprog, Q)}{\max(1,L(\eprog, \emptyset))}$.
\end{definition}

\noindent Observe that the empty query has probability $1.0$ and inconsistent queries or ELPs both have probability $0.0$ (implausible). 

\paragraph{Complexity of Computing Plausibility Levels}
For establishing the complexity of counting, we reduce from $\#\QBFSAT_\ell$ and use Proposition~\ref{prop:countsucc}.
Indeed, computing plausibility levels is already hard for empty queries.

\begin{theorem}[$\star$]\label{thm:quantitative}
  Let~$\eprog$ be an ELP and 
 (a)\, $i=2$ if $\eprog \,{\in}\, \Disj$, 
 (b) $i=1$ if $\eprog \,{\in}\, \Normal \cup \Tight$, 
  and
(c)\, $i=0$ if $\eprog \,{\in}\, \NonNeg$.
  Then, computing plausibility level~$L(\eprog,\emptyset)$ is
  $\sharped{\SIGMA{i}}$-complete. 
\end{theorem}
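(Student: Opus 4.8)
The plan is to establish matching upper and lower bounds for computing the plausibility level $L(\eprog,\emptyset)=\WVc(\eprog)$, the number of world views of $\eprog$, in the class $\sharped{\SIGMA{i}}$. Since we are counting world views and the empty query contributes nothing restrictive (the added rules $\{v\leftarrow \eneg v,\neg q\mid q\in Q\}$ vanish for $Q=\emptyset$), the task reduces to counting WVs of $\eprog$ directly. The natural approach mirrors the decision-case analysis of Theorem~\ref{thm:quali}: a world view is witnessed by a WVI $I\subseteq\lits(\eprog)$, which in the non-ground case has exponential size, and checking that $I$ is a genuine WV (compatibility with $\answersets{\eprog^I}$) is a $\SIGMA{i+1}$-type test in the ground/grounded setting. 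For counting, I want to count exactly the accepting guesses of $I$.

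\textbf{Upper bound (membership).} First I would argue $\WVc(\eprog)\in\sharped{\SIGMA{i}}$ by exhibiting a nondeterministic exponential-time Turing machine with a $\SIGMA{i}$ oracle whose accepting paths are in bijection with the world views. The machine guesses a WVI $I$ over $\lits(\grd(\eprog))$ (exponentially many ground literals, hence guessable in exponential time), forms the epistemic reduct $\eprog^I$, and then verifies the four compatibility conditions against $\answersets{\eprog^I}$. The key point is calibrating the oracle: compatibility conditions (ii)--(iv) amount to brave/cautious reasoning over the (exponentially large, now epistemic-negation-free) program $\eprog^I$, which for $\eprog\in\Disj$ lives at level $\SigmaP2/\PiP2$ relative to the ground program, and one level lower for $\Normal\cup\Tight$, and is polynomial for $\NonNeg$. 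Using Proposition~\ref{prop:socompl}-style exponential lifting, a ground $\Sigma^p_{i+1}$ check becomes, under the exponential-time machine, a $\SIGMA{i}$ oracle query (the top nondeterministic quantifier is absorbed into the exponential machine itself). I must ensure the accepting-path count equals the WV count exactly — i.e.\ that each WV corresponds to a \emph{unique} accepting computation — which is immediate if the machine accepts along the single path that guessed the correct $I$ and all oracle queries are deterministic-answer verifications rather than additional guessing.

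\textbf{Lower bound (hardness).} For hardness I would give a parsimonious reduction from $\#\QBFSAT_{i}$, using Proposition~\ref{prop:countsucc} which gives $\sharped{\SIGMA{i}}$-completeness of counting satisfying assignments to free variables of a succinct QBF of alternation depth $i$. The idea is to adapt the qualitative reduction in the proof of Theorem~\ref{thm:quali}: from the succinct QBF (given by circuits $C_Q,C_\phi$) build an ELP $\eprog$ whose world views are in bijection with the satisfying assignments to the free variables. I would introduce, for each free variable, a pair of epistemic-negation rules (as in $\prog_1$) so that each free-variable assignment yields exactly one candidate WVI, and then encode the closed inner QBF evaluation via the saturation gadgetry ($\prog_2,\prog_3,\prog_4$ for the $\Disj$ case, $\prog_4'$ for $\Normal\cup\Tight$) so that a candidate survives as a genuine WV iff the QBF evaluates to true under that free assignment. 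Crucially the reduction must be \emph{parsimonious}: distinct free assignments must give distinct WVs, and each satisfying assignment exactly one WV — which the epistemic-negation guess over free variables enforces (the $\kop$/$\mop$ pattern pins down $I$ on the free literals). The $\NonNeg$ case ($i=0$) reduces parsimoniously from $\#\QBFSAT_0$, i.e.\ counting satisfying assignments of a succinct propositional formula, matching $\sharpe=\sharped{\Ptime}$.

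\textbf{Main obstacle.} The hard part will be twofold. First, on the upper side, verifying that the counting is genuinely parsimonious through the oracle machine — I must rule out that a single world view is counted multiple times because the grounding or the reduct construction introduces spurious nondeterminism; the clean fix is to make $I$ the \emph{only} nondeterministic choice and have all subsequent verification be oracle calls with deterministic outcomes. Second, and more delicate, on the lower side: ensuring the reduction is parsimonious rather than merely many-one, since the qualitative reduction of Theorem~\ref{thm:quali} only needs existence of one WV, whereas here I need an exact bijection between free-variable assignments and world views. This requires that the saturation/guess machinery for the bound variables does \emph{not} multiply the number of world views — each fixed free assignment must yield at most one WV — which I would secure by making the bound-variable guesses collapse into a single saturated answer set within each candidate, exactly as the $u$-saturation in $\prog_4$ and the $v\leftarrow\eneg v,\dots$ elimination rule are designed to do.
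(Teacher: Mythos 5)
Your overall architecture matches the paper's: membership by lifting the ground witness check to an exponential-time machine whose only nondeterminism is the guess of the WVI~$I$, with the compatibility conditions decided deterministically through $\SIGMA{i}$-oracle queries on the (exponential-size) grounding, and hardness by parsimonious reductions from $\#\QBFSAT_i$ via Proposition~\ref{prop:countsucc}; your parsimony analysis is also correct, since the answer-set collection uniquely determines $I$ and the saturation collapses the bound-variable guesses. However, there is a genuine gap in the hardness step. You propose to reuse the gadgets $\prog_1$--$\prog_4$ (resp.\ $\prog_4'$) from Theorem~\ref{thm:quali}, but those assume the matrix and the variable set are \emph{explicitly given}: $\prog_4$ has one rule $u\leftarrow s(L_{j,1}),\ldots,s(L_{j,\ell_j})$ per clause, and you add ``for each free variable, a pair of epistemic-negation rules.'' An instance of $\#\QBFSAT_i$ consists of circuits of size $\poly(n)$ that may represent $2^{\Theta(n)}$ clauses over $2^{\Theta(n)}$ variables, so both the per-clause and the per-variable rules make your reduction exponential-time. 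The missing idea --- the technical heart of the paper's proof --- is the gate-by-gate encoding of the clause/term circuit $C_\varphi$ and the quantifier circuit into non-ground rules over predicates of arity $3(1+n)$ resp.\ $n+1$ (e.g.\ $g_\wedge(\ldots)\leftarrow g_1(\ldots),\ldots,g_o(\ldots)$), so that variables are addressed by bit-vectors, a single sign-parameterized (dis)satisfaction rule over the output predicate $g_C$ replaces the explicit clause enumeration, a single non-ground $\eneg$-pair guarded by $g_D(1,\ldots)$ guesses the entire free assignment, and only the \emph{grounding} reconstructs the exponential truth table. Without this device the reduction is not polynomial.

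A second concrete problem is the $\NonNeg$ case ($i=0$): your plan to guess via ``a pair of epistemic-negation rules (as in $\prog_1$)'' is syntactically unavailable there, since $\NonNeg$ forbids both $\neg$ and $\eneg$ (and disjunction), allowing only $\kop$ and $\mop$. The paper instead uses the self-supporting rule $\dot V(v^1,\ldots,v^n)\leftarrow \mop \dot V(v^1,\ldots,v^n), b(v^1),\ldots,b(v^n)$, whose world views range exactly over the choices of ground instances deemed possible, combined with a negation-free constraint that eliminates assignments leaving a clause unsatisfied; some replacement gadget of this kind is indispensable, and your sketch offers none. Finally, a small calibration point on membership: taken literally, ``the top nondeterministic quantifier is absorbed into the exponential machine'' would multiply accepting paths and break parsimony, but you repair this yourself by demanding that all post-guess verification consist of deterministic oracle calls --- the right picture being that the ground $\dpc{i}$ witness check of Lemma~\ref{cor:quantitative:propositional} scales to exponential-size queries answered by the $\SIGMA{i}$ oracle.
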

\begin{proof}
\noindent\emph{Membership:}
This follows immediately from the ground case (propositional) 
where we have $\Sigma^p_{i+1}$-comple\-te\-ness~\cite{ShenEiter16}.
With the same argument as in the proof of Theorem~\ref{thm:quantitative}, meaning, grounding an ELP~$\eprog$ leads to an exponentially larger program~$\grd(\eprog)$, and $\Sigma^p_{j}$ becomes $\NEXPl{j-1}$~cf.~\cite{GottlobLeoneVeith99}, we conclude the result.

\smallskip
\noindent\emph{Hardness for $\eprog \in \NonNeg$:}\\
We reduce from a \emph{restricted fragment} of~$\#\QBFSAT_0$, taking a positive Boolean formula $\varphi$ defined by a clause circuit~$C$
over~$3\cdot (1+n)$ many input gates, and constructing an ELP~$\eprog$.

First, the evaluation of $C$ is inductively constructed, starting from the input gates of~$C$ to the output gate of~$C$.
Thereby, for every gate~$g$ we construct a rule defining a predicate of arity~$3\cdot(1+n)$,
depending on the result of the predicates for the input gates of~$g$.
By~$\vec v_i$ we refer to a sequence of~$n$ many variables~$v_i^1, \ldots, v_i^n$.
Also, we define the facts $b(0)$ and $b(1)$.
For an input gate~$g_j$ of~$C$ with~$1\leq j \leq 3\cdot (1+n)$, we construct the fact~$g_j(v^1, \ldots, v^{3\cdot (1+n)}).$ if and only if $v_j = 1$.
Without loss of generality, we assume that negation only appears at the input gates (negation normal form).

For a conjunction gate $g_\wedge$ with inputs~$g_1, \ldots, g_o$,
we define\\
\myind$g_\wedge(s_1, \vec v_1, s_2, \vec v_2, s_3, \vec v_3) \leftarrow
g_1(s_1, \vec v_1, s_2, \vec v_2, s_3, \vec v_3),\ldots, $\\
\myind\mbox{~}\hspace{11em}$g_o(s_1, \vec v_1, s_2, \vec v_2, s_3, \vec
v_3)$;
\\[0.25em]
for disjunction gate $g_\vee$ with inputs~$g_1, \ldots, g_o$, we
define for every~$1\leq k\leq o$:\\
$g_\vee(s_1, \vec v_1, s_2, \vec v_2, s_3, \vec v_3) \leftarrow
g_k(s_1, \vec v_1, s_2, \vec v_2, s_3, \vec v_3).$\\[0.25em]
We refer to the predicate of the final output gate of the construction
by~$g_C$.
Additionally, we construct the following rules below.
First, we guess an assignment over the variables, where we decide whether a variable will be set to false:\\
\myind$\dot V(v^1, \ldots, v^{n}) \leftarrow \mop \dot  V(v^1, \ldots, v^{n}), b(v^1), \ldots, b(v^n).$%

\noindent Then, we check whether there is an unsatisfied clause.\\
\noindent $%
\myind\leftarrow \dot V(\vec v_1), g_C(1, \vec v_1, s_2, \vec v_2, s_3, \vec v_3), %
\dot V(\vec v_2), $

\noindent $\qquad g_C(s_1, \vec v_1, 1, \vec v_2, s_3, \vec v_3), %
\dot V(\vec v_3),g_C(s_1, \vec v_1, s_2, \vec v_2, 1, \vec v_3).$\\

It
is easy to see that there is a bijection between satisfying assignments of $\varphi$ and world views of $\eprog$.

\medskip
\noindent\emph{Hardness for $\eprog \in \Tight \cup \Normal$}:\\
We reduce from~$\#\QBFSAT_1$, taking a QBF $Q=\forall U. \varphi$ over free variables~$V$, with~3DNF~$\varphi$ given by a term circuit~$C$
over~$3\cdot (1+n)$ many input gates and a quantifier circuit~$D$ over~$n$ input gates. From this, we construct ELP~$\prog$.

First, $C$ is inductively constructed as above. %
We refer to the predicate of the output gate of the construction by~$g_C$.
Then, similar to above, we define for every gate of the circuit~$D$ a predicate of arity~$n+1$ from the input gates
to the output gate of~$D$.
For a negation gate $g_\neg$ with input~$g$, we define\\
\myind$g_\neg(\vec v_1) \leftarrow \pneg g(\iota, \vec v_1),
b(\iota),\allowbreak b(v_1^1), \ldots, b(v_1^n)$;\\[0.25em]
for a conjunction gate $g_\wedge$ with inputs~$g_1, \ldots, g_o$,
we define\\
\myind$g_\wedge(\iota, \vec v_1) \leftarrow g_1(\iota, \vec v_1), \ldots,
g_o(\iota, \vec v_1)$;\\[0.25em]
for disjunction gate $g_\vee$ with input gates~$g_1, \ldots, g_o$, we
define for every~$1\leq k\leq o$:~~
$g_\vee(\iota, \vec v_1) \leftarrow g_k(\iota, \vec v_1).$\\[0.25em]
The predicate of the final output gate of $D$ is given
by~$g_D$.

Additionally, we construct the following rules below, thereby following $\neg \exists U. \overline{\varphi}$
over the inverse formula of~$\varphi$.
First, we guess an assignment over the variables:
\\
\myind\noindent$A(v^1, \ldots, v^{n}) \leftarrow \eneg \dot A(v^1, \ldots, v^{n}), g_D(1, v^1, \ldots, v^{n}).$
\\
\myind\noindent$\dot A(v^1, \ldots, v^{n}) \leftarrow \eneg A(v^1, \ldots, v^{n}), g_D(1, v^1, \ldots, v^{n}).$ %
\\
\myind\noindent$A(v^1, \ldots, v^{n}) \leftarrow \pneg\dot A(v^1, \ldots, v^{n}), g_D(2, v^1, \ldots, v^{n}).$
\\
\myind\noindent$\dot A(v^1, \ldots, v^{n}) \leftarrow \pneg A(v^1, \ldots, v^{n}), g_D(2, v^1, \ldots, v^{n})$. %
\\
\noindent Then, we check whether all terms are dissatisfied.\\
\noindent $\usat(1, \vec v_1, s_2, \vec v_2, s_3, \vec v_3) \leftarrow \dot A(\vec v_1), g_C(1, \vec v_1, s_2, \vec v_2, s_3, \vec v_3)$

\noindent $\usat(0, \vec v_1, s_2, \vec v_2, s_3, \vec v_3) \leftarrow  A(\vec v_1), g_C(0, \vec v_1, s_2, \vec v_2, s_3, \vec v_3)$

\noindent $\usat(s_1, \vec v_1, 1, \vec v_2, s_3, \vec v_3) \leftarrow \dot A(\vec v_2), g_C(s_1, \vec v_1, 1, \vec v_2, s_3, \vec v_3)$

\noindent $\usat(s_1, \vec v_1, 0, \vec v_2, s_3, \vec v_3) \leftarrow  A(\vec v_2), g_C(s_1, \vec v_1, 0, \vec v_2, s_3, \vec v_3)$

\noindent $\usat(s_1, \vec v_1, s_2, \vec v_2, 1, \vec v_3) \leftarrow \dot A(\vec v_3), g_C(s_1, \vec v_1, s_2, \vec v_2, 1, \vec v_3)$

\noindent $\usat(s_1, \vec v_1, s_2, \vec v_2, 0, \vec v_3) \leftarrow  A(\vec v_3), g_C(s_1, \vec v_1, s_2, \vec v_2, 0, \vec v_3)$.

We prohibit WVs with an answer set satisfying a term.

\noindent
\myind$\sat \leftarrow g_C(s_1, \vec v_1, s_2, \vec v_2, s_3, \vec v_3),
\pneg \usat(s_1, \vec v_1, s_2, \vec v_2, s_3, \vec v_3)$\\
\myind\noindent $v\leftarrow %
\eneg v, \eneg \pneg \sat.$ %

It is easy to see that there is a bijection between satisfying assignments over~$V$ of~$Q$ and world views of $\prog$.
Hardness for normal programs follows immediately from the reduction above,
since the resulting programs are already normal.

\smallskip\noindent Hardness for $\eprog \in \Disj$: We provide details in the appendix.
\end{proof}

Similarly, we conclude the following statement.
\begin{lemma}[$\star$]\label{cor:quantitative:propositional}
Let~$\eprog$ be a ground ELP and $i=2$ if $\eprog \in \Disj$, and $i=1$ if $\eprog \in \Normal \cup \Tight$, and $i=0$ if $\eprog \in \NonNeg$. Then, computing plausibility level $L(\prog, \emptyset)$ is~$\sharpd{\dpc{i}}$-complete.
\end{lemma}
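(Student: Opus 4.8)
The plan is to establish Lemma~\ref{cor:quantitative:propositional} by mirroring the structure of the proof of Theorem~\ref{thm:quantitative}, but specializing every argument to the ground (propositional) setting, where the exponential blow-up from grounding disappears and the relevant counting class drops from $\sharped{\SigmaP{i}}$ to $\sharpd{\dpc{i}}$. For membership, I would first recall that in the ground case deciding world-view existence is $\SigmaP{i+1}$-complete~\cite{ShenEiter16}, and more precisely that checking compatibility of a guessed WVI $I$ with $\answersets{\eprog^I}$ reduces to brave/cautious reasoning over the ordinary (non-epistemic) ground program $\eprog^I$. The key observation is that a candidate WVI $I$ has polynomial size, the epistemic reduct $\eprog^I$ is computable in polynomial time, and the verification conditions (i)--(iv) of compatibility split into a $\SigmaP{i}$ part (some answer set witnesses an atom) and a $\PiP{i}$ part (all answer sets agree). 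Counting world views therefore corresponds to counting accepting paths of a nondeterministic polynomial-time machine that guesses $I$ and then verifies membership via a predicate that is an intersection of a $\SigmaP{i}$-language and a $\PiP{i}$-language, which is exactly the defining shape of $\dpc{i}$; this places the problem in $\sharpd{\dpc{i}}$.

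For hardness I would give a parsimonious reduction from $\#\QBFSAT_0$ when $i=0$ (the $\NonNeg$ case) and, for $i=1$ and $i=2$, from the \emph{non-succinct} counting QBF problems $\#\mathrm{QBF}_{i}$ whose associated decision problems are $\SigmaP{i}$-complete and whose counting versions are $\sharpd{\dpc{i}}$-complete. The cleanest route is to reuse the very ELP encodings constructed in the proof of Theorem~\ref{thm:quantitative}, but applied to an ordinary (explicitly given) QBF rather than a succinctly represented one. Concretely, the programs $\prog_1,\dots,\prog_5$ and the saturation gadget built there are all \emph{first-order} templates; when the QBF is given explicitly one simply takes the propositional instantiation of these templates over the actual variables, yielding a \emph{ground} ELP of polynomial size whose world views are in bijection with the satisfying assignments of the free variables. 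Because the construction in Theorem~\ref{thm:quantitative} already preserves solution counts (it is parsimonious) and respects the program-fragment membership ($\NonNeg$, $\Tight\cup\Normal$, and $\Disj$ respectively), the same bijection carries over verbatim in the ground setting, establishing $\sharpd{\dpc{i}}$-hardness for each fragment.

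The main obstacle I anticipate is matching the counting class \emph{exactly} to $\sharpd{\dpc{i}}$ rather than to the superficially more natural $\sharpd{\SigmaP{i}}$. The subtlety is that world-view verification is not a plain $\SigmaP{i}$ test: compatibility requires both an existential witness (some answer set contains a guessed-possible atom) and a universal guarantee (every answer set respects the known literals), so the correct verification predicate lives in $\dpc{i} = \SB L_1\cap L_2 \SM L_1\in\SigmaP{i},\,L_2\in\PiP{i}\SE$. I would therefore take care, in the membership direction, to phrase the nondeterministic acceptance condition so that an accepting computation path corresponds to a guess of $I$ together with a \emph{single} oracle round certifying membership in the $\dpc{i}$ predicate, ensuring that the number of accepting paths equals $\WVc(\eprog)$ and not some inflated count. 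Symmetrically, in the hardness direction I would pick the source counting problem to be one that is canonically $\sharpd{\dpc{i}}$-complete, so that the parsimonious reduction lands in the right class on the nose. Once the $\dpc{i}$ characterization of compatibility checking is pinned down, the remainder is routine specialization of the arguments already given for Theorem~\ref{thm:quantitative}.
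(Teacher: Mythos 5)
Your membership argument is correct and coincides with the paper's: a WVI $I$ is a polynomial-size witness, and checking that $I$ is a world view decomposes into a $\SigmaP{i}$ test (existence of answer sets witnessing the ``possible'' atoms) and a $\PiP{i}$ test (all answer sets respecting the known literals), which places witness verification in $\dpc{i}$ and hence the counting problem in $\sharpd{\dpc{i}}$. For $i=2$ the paper in fact gets both directions at once by citing the known $\dpc{}$-style completeness of world-view \emph{checking} (Shen and Eiter, Theorem~4), so no new reduction is needed there.

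The genuine gap is in your hardness direction. You propose to reduce from the non-succinct counting problems $\#\mathrm{QBF}_i$ and assert that these are $\sharpd{\dpc{i}}$-complete; they are not. Counting the free-variable assignments of a QBF with $i$ bound alternations is complete for $\sharpd{\SigmaP{i}}$ or $\sharpd{\PiP{i}}$ (depending on the leading bound quantifier) under parsimonious reductions, and neither class is known to coincide with $\sharpd{\dpc{i}}$: the dot-operator is sensitive to the exact witness-checking class, unlike the oracle-based classes $\sharped{\SigmaP{i}}$ of Theorem~\ref{thm:quantitative}, where the exponential-time oracle machine absorbs the $\dpc{}$-structure. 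For the same reason, instantiating the succinct reductions of Theorem~\ref{thm:quantitative} propositionally ``verbatim'' cannot succeed: those constructions encode a \emph{single} QBF (e.g., $\forall U.\varphi$ with free variables $V$), so their ground instantiations witness only $\sharpd{\SigmaP{i}}$-/$\sharpd{\PiP{i}}$-hardness and miss half of $\dpc{i}$. The paper instead reduces from $\#(\QBFS_i \wedge \co\hy\QBFS_i)$, i.e., counting assignments over $X$ satisfying $\exists Y.\,\varphi(X,Y)\wedge\neg(\exists Z.\,\psi(X,Z))$, and builds a ground program in which the constraint $\leftarrow \neg\, sat_\varphi$ realizes the existential ($\SigmaP{i}$) conjunct while the gadget $q\leftarrow \eneg q,\, \eneg \neg\, sat_\psi$ realizes the complementary ($\PiP{i}$) conjunct, so the compatibility conditions enforce both halves of $\dpc{i}$ simultaneously. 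You do note in your final paragraph that the source problem should be canonically $\sharpd{\dpc{i}}$-complete, but you never identify one, and the concrete plan you commit to uses the wrong source. A further slip: for $i=0$ you name $\#\QBFSAT_0$, the \emph{succinct} problem, which is $\sharpe$-complete and cannot parsimoniously reduce to a polynomial-size ground program; the paper's $i=0$ case reduces from $\#(\QBFS_0\wedge\co\hy\QBFS_0)$, with the complemented conjunct expressed in \NonNeg{} by a constraint, avoiding default negation entirely.
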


If the arity is a fixed constant, we obtain the following.
\begin{lemma}[$\star$]\label{cor:quantitative:barity}
Let~$\prog$ be a non-ground ELP of bounded arity and $i=2$ if $\prog \in \Disj$, and $i=1$ if $\prog \in \Normal \cup \Tight$, and $i=0$ if $\prog \in \NonNeg$. 
Then, computing plausibility level $L(\prog, \emptyset)$ 
is~$\sharpd{\dpc{i+1}}$-complete.
\end{lemma}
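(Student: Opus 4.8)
The plan is to mirror the proof of Theorem~\ref{thm:quantitative}, but to replace the exponential blow-up of grounding by the ``one level up'' behaviour of bounded-arity non-ground reasoning, exactly as Theorem~\ref{thm:quali} does for the decision case. Throughout I use that $L(\prog,\emptyset)=\WVc(\prog)$, so it suffices to count the world views of $\prog$, i.e.\ the world view interpretations $I\subseteq\lits(\prog)$ that are compatible with $\answersets{\prog^I}$.

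\emph{Membership.} Since $\prog$ has bounded arity, the number of ground atoms is polynomial, so a WVI $I$ has polynomial size and can serve as the witness of a $\sharpd{\cdot}$ computation. It then remains to verify in \dpc{i+1} that a given $I$ is a world view. I would first emulate the epistemic reduct $\prog^I$ by the fragment- and arity-preserving program $\prog_{I,\eneg}$ from the proof of Theorem~\ref{thm:quali}, so that $\answersets{\prog^I}=\answersets{\prog_{I,\eneg}}$. By \citex{EiterFaberFink07}, brave and cautious reasoning over a bounded-arity non-ground program that is disjunctive, resp.\ normal/tight, is \SigmaP{i+1}- resp.\ \PiP{i+1}-complete for $i=2$ resp.\ $i=1$; for $\NonNeg$ (where $i=0$) the reduct is positive-normal, and derivability of a ground atom is witnessed by a polynomial derivation, placing brave/cautious reasoning in \NP/\coNP. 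Compatibility of $I$ then decomposes cleanly: condition~(i) (consistency) together with the ``undecided'' part of condition~(iv) ($a$ brave \emph{and} not cautious) is a conjunction of \SigmaP{i+1} checks, while conditions~(ii) and (iii) (cautious truth, resp.\ non-braveness) form a conjunction of \PiP{i+1} checks. Hence ``$I$ is a world view'' is the intersection of a \SigmaP{i+1} and a \PiP{i+1} language, i.e.\ in \dpc{i+1}, and counting the accepting witnesses $I$ is in \sharpd{\dpc{i+1}}.

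\emph{Hardness.} I would give a parsimonious reduction from the canonical \sharpd{\dpc{i+1}}-complete problem: counting the assignments $\vec z$ to a distinguished free block for which a \SigmaP{i+1} formula $\Phi_\exists$ and a \PiP{i+1} formula $\Phi_\forall$ (both over the free variables $\vec z$) evaluate to true. Adapting the encodings of Theorem~\ref{thm:quantitative} and the bounded-arity construction of Theorem~\ref{thm:quali}, I would (a)~represent $\vec z$ by epistemic choice rules $R(\vec t)\leftarrow\eneg\ol R(\vec t)$ and $\ol R(\vec t)\leftarrow\eneg R(\vec t)$, so that distinct world views correspond bijectively to distinct assignments of $\vec z$; (b)~represent the $(i{+}1)$-th quantifier block not by fresh propositional atoms but over a polynomial domain, so that its $2^{\mathrm{poly}}$ many valuations are produced by the grounding of rules of bounded arity --- this is precisely what lifts the ground \dpc{i} matrix of Lemma~\ref{cor:quantitative:propositional} to a \dpc{i+1} one; and (c)~evaluate $\Phi_\exists$ by a guess-and-saturate gadget (in the style of $\prog_3,\prog_4$) and $\Phi_\forall$ by a second saturation feeding a single rule $v\leftarrow\eneg v,\dots$, so that a candidate world view survives iff both formulas hold. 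The normal/tight case drops the disjunctive guess and uses the DNF saturation of $\prog_4'$, and the $\NonNeg$ case uses only monotone circuit rules together with an $\mop$-guess, as in the $\NonNeg$ part of Theorem~\ref{thm:quantitative}.

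The main obstacle is the hardness direction: ensuring the reduction is genuinely parsimonious --- a strict bijection between satisfying assignments of $\vec z$ and world views --- while the one extra quantifier alternation (relative to the ground reduction) is supplied \emph{only} by the exponential grounding of bounded-arity rules, and while the syntactic fragment ($\Disj$, $\Normal$/$\Tight$, $\NonNeg$) is preserved. In particular one must verify that the saturation gadgets neither spuriously create nor merge world views, and that for $\NonNeg$ condition~(iv) forces every atom to be decided, so that the unique answer set of each reduct yields exactly one world view per satisfying assignment.
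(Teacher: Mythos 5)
Your proposal is correct and takes essentially the same route as the paper: membership via polynomial-size WVI witnesses whose verification---using the $\prog_{I,\eneg}$ emulation from Theorem~\ref{thm:quali} and the bounded-arity brave/cautious bounds of \citex{EiterFaberFink07}---amounts to the intersection of a $\SigmaP{i+1}$ and a $\PiP{i+1}$ check, hence lies in \dpc{i+1}; and hardness via a parsimonious reduction from counting free-variable assignments satisfying a conjunction of a $\SigmaP{i+1}$- and a $\PiP{i+1}$-formula, with epistemic choice rules for the free block, saturation/epistemic gadgets per fragment, and the extra quantifier level supplied precisely by grounding bounded-arity rules (the paper realizes your step~(b) as conjunctive-query evaluation, i.e., clause predicates $\sat_{c_i}(\vec Q_i)$ of arity at most~$3$ over $\{0,1\}$ whose shared rule-body variables encode the innermost existential block). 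Your write-up merely makes explicit what the paper leaves terse---the decomposition of compatibility conditions (i)--(iv) into $\SigmaP{i+1}$ and $\PiP{i+1}$ parts, and the parsimony obligation the paper dispatches with ``it is easy to see there is a bijection.''
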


\section{Non-Ground ELPs of Bounded Treewidth}\label{sec:4}

Before we discuss consequences of evaluating non-ground ELPs for
treewidth, we recall tree decompositions (TDs) for which we need the
following definition.
\begin{definition}[TD~\cite{RobertsonSeymour85}]
Let $G = (V, E)$ be a graph. A pair $\mathcal{T} = (T, \chi)$,
where $T$ is a rooted tree with 
root $\rootOf(T)$ and $\chi$ is a labeling function that maps every
node $t$ of $T$ to a subset $\chi(t) \subseteq V$ called
\emph{bag}, is a
\emph{tree decomposition
  (TD)} %
of~$G$ if (i) for each $v \in V$ some~$t$ in~$T$ exists s.t.{} $v \in \chi(t)$; (ii) for each $\{v, w\} \in E$ some $t$
in $T$ exists s.t.{ }$\{v, w\} \subseteq \chi(t)$; and (iii) for each
$r, s, t$ of $T$ s.t.\ $s$ lies on the unique path from $r$ to $t$, $\chi(r) \cap \chi(t) \subseteq \chi(s)$.    
\end{definition}
The \emph{width}
of~$\mathcal{T}$ is the %
largest bag size~minus one and the \emph{treewidth} of~$G$ is the
smallest width among all TDs of~$G$.
To simplify case distinctions in the algorithms, \emph{we
  use nice TDs in a proof}\shortversion{(see extended version),}
\longversion{, see supplemental material in Appendix~\ref{appendix:prelimns:parameterized}}
which can be computed in linear time without increasing the
width~\cite{Kloks94a}.  To capture atom (predicate) dependencies of
programs, we use the following \emph{primal graph~$G_\gprog = (V, E)$}
of a program~$\prog$ defined as follows.  For ground~$\gprog$, we let
$V \eqdef \at(\gprog)$ and $\{a,b\}\in E$ if atoms~ $a\neq b$ jointly
occur in a rule of $\gprog$, while for non-ground~$\gprog$, we let
$V \eqdef \pname(\prog)$ and $\{p_1,p_2\}\in E$ if predicates
$p_1\neq p_2$ jointly occur in a rule of $\prog$.
Tree decompositions allow us to establish tight complexity bounds for WV existence under ETH. 
To this end, we resort to quantified CSP (QCSP)\shortversion{,}
\longversion{, see Appendix~\ref{appendix:prelimns:qcsp} for definitions, }
which, intuitively, is analogous to quantified Boolean formulas, but over arbitrary finite domains instead of domain $\{0,1\}$.
To this end, we define primal graph~$P_Q$ for a QCSP~$Q$ similarly to programs, but on the formula's matrix. Further, $\expf(0,n)=n$ and $\expf(k,n)= 2^{\expf(k{-1},n)}$, $k\geq 1$, denotes the $k$-fold exponential function of $n$.  The following bounds are known.

\begin{proposition}[\citey{FichteHecherKieler21}]%
\label{prop:qcsplb}
Given any QCSP $Q$ with constraints~$C$ over finite domain $D$
and alternation depth $\ell\geq 1$, where each constraint has at most $s \geq 3$  variables.
Then, under ETH the validity of $Q$ cannot be decided in time $\expf(\ell{-}1,|D|^{o(k)})\cdot
\poly(|C|)$, where $k$ is the treewidth of %
$P_C$.
\end{proposition}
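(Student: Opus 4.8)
The plan is to prove this as a conditional (ETH-based) lower bound, matching the standard tree-decomposition dynamic program that furnishes the corresponding upper bound and thereby pins down the exact tower height and exponent to be shown optimal. That dynamic program processes the quantifier blocks from the innermost outward: the innermost block is handled by tabulating, per bag, the $|D|^{O(k)}$ partial assignments, and each further alternation wraps these tables in one additional powerset, for a total of $\expf(\ell-1,|D|^{O(k)})\cdot\poly(|C|)$. The target is therefore to rule out shrinking the exponent from $\Theta(k)$ to $o(k)$ under ETH.

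Base case $\ell=1$. Here $Q$ is a plain CSP and the bound reads $|D|^{o(k)}\cdot\poly(|C|)$. I would reduce from $3$-SAT: given $\varphi$ on $n$ variables with $O(n)$ clauses (by the sparsification lemma), fix the target $k$, partition the variables into $k$ blocks of $\lceil n/k\rceil$, and introduce one CSP variable per block over a domain $D$ with $|D|=2^{\lceil n/k\rceil}$ whose values enumerate that block's Boolean assignments. Each clause meets at most three blocks, so it becomes a constraint of arity $\le s=3$; the constraint graph $P_C$ has only $k$ vertices, hence treewidth $\le k$, and clauses spanning all blocks force it to $\Theta(k)$. As $|D|^{o(k)}=2^{(n/k)\cdot o(k)}=2^{o(n)}$, a $|D|^{o(k)}\cdot\poly(|C|)$ algorithm would decide $3$-SAT in $2^{o(n)}$, contradicting ETH; one tunes the regime (e.g.\ $k=\Theta(n/\log n)$, $|D|=\Theta(n)$) so that each relation of size $\le|D|^{s}$ stays polynomial.

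General $\ell$. I would lift the base case by iterating a gadget that pays one quantifier alternation for one level of the exponential tower. The new outer block quantifies $O(k)$ \emph{index} variables that \emph{address} an object of size $|D|^{\Theta(k)}$ using only $O(k)$ variables per bag, and auxiliary constraints of arity $\le s$ (the reason $s\ge3$ suffices) enforce consistency between this addressed, compact encoding and the inner instance. Iterating $\ell$ times compactly represents a computation of size $\expf(\ell-1,|D|^{\Theta(k)})$ while keeping the treewidth budget at $\Theta(k)$ and the arity at $s$; equivalently, one obtains this bound by taking the known ETH-tight treewidth bound $\expf(\ell,o(w))$ for Boolean $\ell$-alternation \textsc{QSat} and packing $\lfloor\log_2|D|\rfloor$ same-block variables into each domain-$D$ variable, which turns $w=\Theta(k\log|D|)$ into $\expf(\ell,o(w))=\expf(\ell-1,2^{o(k\log|D|)})=\expf(\ell-1,|D|^{o(k)})$.

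I expect the addressing/packing gadget to be the main obstacle. The difficulty is to simulate the exponentially larger search space by the extra quantifier while simultaneously keeping the treewidth of the resulting primal graph at the intended value (rather than letting the addressing wiring inflate it), respecting the arity bound $s$, preserving validity exactly across the added alternation, and holding $|C|$ polynomial (a sparsification-style control) so that the $\poly(|C|)$ factor never absorbs the savings. Equivalently on the packing side, one must start from hard Boolean instances whose primal graph is layered enough that grouping $\log|D|$ variables genuinely reduces treewidth by that factor; balancing this treewidth control against the per-alternation exponential growth is exactly the tension that makes the bound tight.
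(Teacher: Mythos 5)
You should first note what the benchmark here actually is: the paper you are being compared against does not prove this statement at all --- it is imported as a cited proposition (Fichte, Hecher, Kieler) and used as a black box in the lower-bound proof of Theorem~\ref{thm:lower}. The source's argument is, in essence, your route~(b): a treewidth-aware reduction from the ETH-tight bound $\expf(\ell,o(w))$ for Boolean QSAT with $\ell$ alternations, obtained by packing same-quantifier-block Boolean variables into domain-$D$ variables in a decomposition-guided way. Your base case $\ell=1$ is complete and correct: sparsified 3-SAT, $k$ blocks of $\lceil n/k\rceil$ variables, domain size $2^{\lceil n/k\rceil}$, arity~$3$ preserved, and the tuning $k=\Theta(n/\log n)$ keeping the explicitly listed relations of size $|D|^{s}$ polynomial is exactly the right care.

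For $\ell\ge 2$, however, there is a genuine gap: the entire inductive content of the proposition lives in the packing/addressing lemma, and you state it as an expectation rather than prove it. Concretely, three things would have to be established and are not. First, contracting arbitrary groups of $\lfloor\log_2|D|\rfloor$ vertices does \emph{not} in general divide treewidth by $\log|D|$ --- groups need not be connected, and contracting non-connected vertex sets can even increase treewidth --- so the grouping must be guided by a tree decomposition of the hard Boolean instances; you may not assume those instances are ``layered,'' you must construct the grouping from their decomposition, and you do not. Second, the natural fix of using per-bag super-variables with copies tied together by binary consistency constraints is sound for existential blocks but \emph{unsound for universal ones}: a universal super-variable assignment violating consistency between copies would have to be neutralized by relativization gadgetry, and this must be done without raising the arity beyond $s$, the treewidth beyond $O(k)$, or perturbing validity across the alternation --- none of which is argued. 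Third, in your route~(a) no addressing gadget is actually given: it is not shown that $O(k)$ index variables together with arity-$\le s$ constraints can verify consistency of an addressed object of size $|D|^{\Theta(k)}$ while exactly preserving validity under the added quantifier block. So the proposal correctly identifies the known proof strategy and even names its crux, but for $\ell\ge 2$ it is a plan rather than a proof.
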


With his result at hand, we obtain the following.
\newcommand{\preds}{\at}

\begin{theorem}[$\star$]\label{thm:lower}
Let~$\prog$ be an arbitrary ELP of bounded arity~$a$ over domain size~$d=\Card{\dom(\prog)}$, where the treewidth of~$G_\prog$ is~$k$.
Furthermore, let (a)\, $i=2$ if $P \in \Disj$, (b)\, $i=1$ if $P \in \Normal \cup \Tight$, and (c)\,  $i=0$ if $P \in \NonNeg$. %
Then, under ETH, WV existence  of $\grd(\prog)$  cannot be decided in time 
$\expf(i+1,d^{{o}(k)}){\cdot}\poly(\Card{\preds(\prog)})$.
\end{theorem}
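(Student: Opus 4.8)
The plan is to establish the lower bound by a reduction from QCSP validity, invoking the ETH-conditional hardness of Proposition~\ref{prop:qcsplb} and carefully transporting the treewidth parameter through a succinct (non-ground) encoding. Concretely, given a QCSP $Q$ with alternation depth $\ell = i+1$, constraints $C$ over a finite domain $D$ with $|D| = d$, each constraint on at most $s$ variables, and treewidth $k$ of the primal graph $P_C$, I would construct in polynomial time a bounded-arity ELP $\prog$ over domain size $\BigO(d)$ whose primal graph $G_\prog$ has treewidth $\BigO(k)$, such that $\prog$ admits a world view if and only if $Q$ is valid. Then any algorithm deciding WV existence of $\grd(\prog)$ in time $\expf(i+1, d^{o(k)})\cdot\poly(\Card{\preds(\prog)})$ would, via this reduction, decide QCSP validity in time $\expf(\ell-1, |D|^{o(k)})\cdot\poly(|C|)$, contradicting ETH.

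The core idea of the encoding is to mirror the quantifier-block structure of the SO/ELP hardness reductions already used in Theorem~\ref{thm:quali}, but over an arbitrary finite domain rather than the Boolean domain. The innermost alternation of the QCSP corresponds to answer-set search (one level), each additional quantifier alternation is captured either by a saturation/disjunction gadget (contributing to $\Disj$, raising $i$ to $2$) or by epistemic guessing with $\kop,\mop$ (the outer world-view layer), matching the case split (a)--(c) exactly as in Theorems~\ref{thm:quali} and~\ref{thm:quantitative}. The domain values of $Q$ are represented directly by ELP constants, so that a single QCSP variable becomes a single ELP term position rather than $\log d$ Boolean bits; this is what keeps the base of the exponent at $d$ rather than $2$ and is essential for matching the $d^{o(k)}$ form in the claimed bound. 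Each QCSP constraint on $s \le$ const variables is encoded by rules of bounded arity, so that predicates jointly occurring in a rule correspond precisely to variables jointly occurring in a constraint of $Q$.

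The crucial technical point, and the main obstacle, is the treewidth preservation: I must show that the treewidth $k'$ of the primal graph $G_\prog$ is linearly bounded in the treewidth $k$ of $P_C$, i.e.\ $k' = \BigO(k)$. Since the non-ground primal graph has predicates as vertices and joins two predicates whenever they co-occur in a rule, I would build a tree decomposition of $G_\prog$ from a width-$k$ tree decomposition of $P_C$ by placing, for each QCSP variable, the constant-many auxiliary predicates introduced for it (the value-selection predicates, the gate predicates of the encoding, the saturation and epistemic predicates) into the same bags as the corresponding variable. Because each constraint touches at most $s$ variables and each gadget introduces only $\BigO(1)$ auxiliary predicates per variable or per constraint, the bag size grows by at most a constant factor, yielding $k' = \BigO(k)$. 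The delicate part is handling the global gadgets — the saturation atom $u$, the world-view atom $v$, and any output/unsatisfiability predicate — which naturally connect to many bags; I would argue these can be threaded through the decomposition as a bounded set of extra vertices added uniformly to every bag (or localized via standard copying tricks), increasing the width only additively by a constant and hence not affecting the $\BigO(k)$ bound nor the $d^{o(k)}$ exponent.

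Finally, I would verify the arity and size bookkeeping: all introduced predicates have arity bounded by a constant (determined by $s$ and the fixed gadget structure), so $\prog$ is genuinely of bounded arity as required, and $\Card{\preds(\prog)} = \poly(|C|)$, so the $\poly$ factor transfers correctly between the two running-time expressions. Combining the correctness of the reduction (world views correspond to valid QCSP evaluations) with the treewidth bound $k' = \BigO(k)$ and the running-time translation gives the stated ETH-tight lower bound, since a faster WV algorithm would beat the QCSP lower bound of Proposition~\ref{prop:qcsplb}.
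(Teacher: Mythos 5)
There is a genuine gap, and it is concentrated in one arithmetic-looking choice that actually hides the key idea of the paper's proof: you start from a QCSP of alternation depth $\ell = i+1$, but this cannot yield the claimed bound. With $\ell = i+1$, Proposition~\ref{prop:qcsplb} only forbids deciding $Q$ in time $\expf(\ell-1,|D|^{o(k)}) = \expf(i,|D|^{o(k)})$, whereas the hypothetical WV algorithm you assume runs in time $\expf(i+1,d^{o(k)})$; running it through your reduction decides $Q$ in time $\expf(i+1,\cdot)$, which does \emph{not} contradict an $\expf(i,\cdot)$ lower bound (your last sentence asserting the reduction decides QCSP ``in time $\expf(\ell-1,|D|^{o(k)})$'' is internally inconsistent with your own choice of $\ell$). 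As written, your argument proves only the weaker bound $\expf(i,d^{o(k)})$. The paper instead reduces from quantifier rank $\ell = i+2$ (rank $4$ for $\Disj$: $\exists V_1\forall V_2\exists V_3\forall V_4.\,\mathcal{C}$), so that $\expf(\ell-1,\cdot)=\expf(i+1,\cdot)$ matches the statement.

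The reason this is not a mere typo is that your gadget inventory has no mechanism to realize the extra alternation: you map the innermost block to answer-set search and all remaining blocks to saturation/disjunction or epistemic guessing, which accounts for exactly three mechanisms --- precisely the ground-case tower height. The missing idea is that the \emph{grounding itself} supplies one additional exponential level: in the paper's construction the innermost block $V_4$ is not guessed by any program construct at all, but compiled into the argument tuples of bag-local auxiliary predicates ($\usat_{C_t}(\langle\theta\rangle)$ with $\theta\colon V_4(C_t)\to\Delta$, threaded along a tree decomposition of the QCSP primal graph), so that the non-ground primal graph keeps treewidth $\mathcal{O}(k)$ while $\grd(\prog)$ has treewidth $d^{\Theta(k)}$, and the ground-level answer-set machinery then resolves $V_4$ over the exponentially larger structure. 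Your observation about representing domain values directly as constants only fixes the \emph{base} of the exponent at $d$; it does not add this level. The rest of your plan --- bounded arity via $s\leq 3$, constant-factor bag augmentation per variable/constraint, and adding the globally occurring atoms ($\usat$, the world-view atom) to every bag --- agrees with the paper's treewidth bookkeeping and is fine, but the reduction must be re-based at depth $i+2$ with the innermost block absorbed by the grounding for the stated $\expf(i+1,d^{o(k)})$ bound to follow.
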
 %

\noindent Indeed, one can obtain a runtime adhering to this lower bound.

\begin{theorem}[$\star$]\label{thm:upper}%
Let~$\prog$ be an arbitrary ELP of bounded arity~$a$ over domain size~$d=\Card{\dom(\prog)}$, where the treewidth of~$G_\prog$ is~$k$.
Furthermore, let (a)\, $i=2$ if $P \in \Disj$ and (b)\, $i=1$ if $P \in \Normal \cup \Tight$.
Then, deciding world view existence as well as 
computing plausibility level~$L(\prog, \emptyset)$ 
of $\grd(\prog)$ can be done in 
 time $\expf(i+1,d^{\mathcal{O}(k)}){\cdot}\poly(\Card{\preds(\prog)})$.
\end{theorem}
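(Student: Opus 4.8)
The plan is to avoid constructing the exponentially large grounding $\grd(\prog)$ explicitly and instead reduce, in time polynomial in $\Card{\preds(\prog)}$, world-view existence (and the counting of world views needed for $L(\prog,\emptyset)$) to the validity (resp.\ counting) of a quantified CSP $Q$ over the domain $\dom(\prog)$, so $|D|=d$, of alternation depth $i+2$, whose primal graph $P_Q$ has treewidth $\BigO(k)$. The decision step then follows from the algorithmic counterpart of Proposition~\ref{prop:qcsplb}: a QCSP of alternation depth $\ell$ and primal treewidth $w$ over a domain of size $d$ is solvable by an alternating dynamic program along a tree decomposition of $P_Q$ in time $\expf(\ell-1,d^{\BigO(w)})\cdot\poly$ --- encode each domain value in $\ceil{\log d}$ bits to obtain a QBF of treewidth $\BigO(w\log d)$, and run the standard level-by-level tree-decomposition DP, whose table at each bag is a tower of height $\ell-1$ in $2^{\BigO(w\log d)}=d^{\BigO(w)}$. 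Instantiating $\ell=i+2$ and $w=\BigO(k)$ gives exactly $\expf(i+1,d^{\BigO(k)})\cdot\poly(\Card{\preds(\prog)})$, matching the lower bound of Theorem~\ref{thm:lower}.

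The reduction exploits that under bounded arity~$a$ (and $d\leq a\cdot\Card{\at(\prog)}$) the Herbrand base $\HB_\prog$ has only polynomially many atoms ($\leq\Card{\pname(\prog)}\cdot d^a$), so a candidate world-view interpretation~$I$ and candidate answer sets are polynomial-size objects, even though individual rules may still have exponentially many ground instances. An outermost existential block guesses~$I$ and thereby fixes the epistemic reduct $\prog^I$ (each $\eneg\ell$ becomes $\neg\ell$ or $\top$ according to $I$); the remaining blocks verify that $I$ is compatible with $\answersets{\prog^I}$. Rule satisfaction is expressed succinctly: the body variables of a rule are quantified over $D$, and bounded arity forces each resulting constraint to couple only argument positions of the at most $k+1$ predicates of that rule, all of which are pairwise adjacent in $G_\prog$. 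Consequently a width-$k$ tree decomposition of $G_\prog$ lifts to a width-$\BigO(k)$ decomposition of $P_Q$. The alternation depth tracks the fragment: for $P\in\Disj$ the $\subseteq$-minimality check of the GL-reduct $(\prog^I)^J$ --- ``no $J'\subsetneq J$ is a model'' --- is a genuine universal level, yielding depth $i+2=4$; for $P\in\Normal\cup\Tight$ it collapses, since supportedness (plus a level mapping for normal programs) suffices, yielding depth $i+2=3$.

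For the plausibility level I would keep the inner alternation-$(i+1)$ check that certifies validity of a fixed $I$ and replace the outermost existential by counting. Because distinct valid world-view interpretations induce distinct world views, $L(\prog,\emptyset)=\WVc(\prog)$ equals the number of $I$ passing this check, which the same tree-decomposition DP computes by storing in each table cell the number of consistent extensions rather than a single bit; counting over the outermost block adds no level to the tower, so the bound remains $\expf(i+1,d^{\BigO(k)})\cdot\poly$. This viewpoint also explains why \Tight\ and \Normal\ coincide here: the level mapping that normal programs need to exclude unfounded positive cycles ranges over the ground atoms inside a bag and hence costs only $\log(d^{\Theta(k)})=\Theta(k\log d)$ extra bits, which is absorbed into $d^{\BigO(k)}$ --- unlike the ground case, where it surfaces as the $\log(tw)$ factor.

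The crux is the treewidth-preserving encoding of the two ``global'' ingredients, GL-reduct minimality and the compatibility conditions (i)--(iv), so that $P_Q$ keeps treewidth $\BigO(k)$. Both speak about arbitrary subsets of the whole (polynomial) Herbrand base, and a naive formulation introduces constraints spanning all of $\HB_\prog$, destroying bounded treewidth. The fix is to certify them locally along a tree decomposition of $G_\prog$: minimality is witnessed by propagating, per bag, a bounded summary of the comparison between $J$ and the smaller model $J'$ (as in the ground disjunctive dynamic programs), while conditions (i)--(iv) are evaluated bag-by-bag against the running intersection and union of the answer sets, with the epistemic reduct induced by $I$ threaded through. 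Proving that these local certificates are sound and complete, that every constraint stays within $\BigO(1)$ adjacent bags, and that the quantifier nesting of $Q$ faithfully mirrors the $\Sigma^p_{i+2}$ structure of the problem is the technical heart; in particular one must avoid the tempting shortcut of tracking entire predicate extensions bag-by-bag, which does not faithfully encode the nested minimality and epistemic semantics and would, were it correct, contradict the $\expf(i+1,d^{o(k)})$ lower bound of Theorem~\ref{thm:lower}.
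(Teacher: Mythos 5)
You have identified the right technical heart, but the pivotal step of your plan is precisely where the proof is missing, and I do not believe it can be carried out as stated. Your claim that a width-$k$ tree decomposition of $G_\prog$ ``lifts to a width-$\BigO(k)$ decomposition of $P_Q$'' conflates the first-order rule variables (which indeed range over $D$ and follow the predicate adjacencies of $G_\prog$) with the second-order objects the QCSP must also carry: the WVI~$I$, a candidate answer set~$J$, the competitor~$J'$, and the intersections/unions over $\answersets{\eprog^I}$ are subsets of the Herbrand base, so each predicate of arity~$a$ contributes $d^a$ ground facts, i.e., $\Theta(d^a/\log d)$ variables over a domain of size~$d$. Any constraint that looks up the truth value of $p(\vec{x})$ for quantified $\vec{x}$ must couple the index variables with the entire table of~$p$ (or with an auxiliary chain threading through all of it), and your own repair --- propagating per-bag summaries of the comparison $J'\subsetneq J$ and of the running intersection/union of answer sets --- must record the bag-restricted ground interpretations, which is $\Theta(k\cdot d^a)$ bits per bag. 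Over domain~$d$ that is primal treewidth $\Theta(k\, d^a/\log d)$, not $\BigO(k)$; feeding this honest width into your QCSP DP gives $\expf(i{+}1,\, d^{\BigO(k d^a/\log d)})=\expf(i{+}1,\, 2^{\BigO(k d^a)})$, which strictly exceeds the claimed $\expf(i{+}1,\, d^{\BigO(k)})=\expf(i{+}1,\, 2^{\BigO(k\log d)})$ once $d$ grows. (Your side remark that the level mapping for \Normal\ costs only $\Theta(k\log d)$ bits suffers the same underestimate: it is per ground atom over $\Theta(k d^a)$ bag atoms.) Nothing in the sketch indicates how to compress this second-order content below the $\Theta(k d^a)$-bit barrier while keeping depth $i{+}2$ and domain~$d$, so the soundness-and-width lemma your argument hinges on is not established.

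Ironically, your closing caveat rejects exactly the move the paper makes. The paper's proof computes a TD of~$G_\prog$ of width $5k$ in time $2^{\BigO(k)}\cdot\poly(\Card{\preds(\prog)})$, grounds each bag program~$\prog_t$ (of size $d^{\BigO(a\cdot k)}=d^{\BigO(k)}$), observes that placing \emph{all} ground atoms over a bag's predicates into the bag yields a TD of the ground ELP of width $d^{\BigO(k)}$, and then runs the existing DP for ground ELPs~\cite{HecherMorakWoltran20}, which also supports the counting needed for $L(\prog,\emptyset)$. Tracking entire predicate extensions bag-by-bag is thus not an unsound shortcut, and it does not contradict Theorem~\ref{thm:lower}: the ground DP pays tower height $i{+}1$ in a width of $d^{\BigO(k)}$, landing exactly on --- not below --- the $\expf(i{+}1, d^{o(k)})$ ETH barrier. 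Your depth accounting ($i{+}2$, with minimality flipping polarity under the universal block) and your counting adaptation are fine in spirit and mirror the ground case, but as written the core reduction is unproven, and the natural repair collapses your argument into the paper's bag-grounding proof.
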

\longversion{%
\begin{proof}[Proof (Sketch).]
    First, we compute a TD~$\mathcal{T}=(T,\chi)$ of~$G_\prog$ of width~$5\cdot k$ in time~$2^{\mathcal{O}(k)}\cdot\poly(\Card{\preds(\prog)})$, see~\cite{BodlaenderEtAl13}.
	Then, we construct a ground ELP~$G_{\grd(\prog)}$ by instantiating each bag program~$\prog_t$ by the ground program~$\grd(\prog_t)$.
	Observe that~$\Card{\grd(\prog_t)}$ is in~$d^{\mathcal{O}(a\cdot k)}=d^{\mathcal{O}(k)}$. Further, it is easy to see that~$\mathcal{T}'\eqdef (T,\chi')$ with~$\chi'(t)\eqdef \{h(\vecv D) \mid h(\vecv X)\in \preds(r), \vecv D\in\dom(\vecv X), r\in\prog_t\}$ is a TD of~$G_{\grd(\prog)}$.
	The constructed ELP can  then be solved using a DP algorithm~\cite[Listing 2]{HecherMorakWoltran20}, yielding the desired runtime result. %
\end{proof}
}

\section{Conclusion and Outlook}
We consider non-ground ELP, a popular concept to enable reasoning about answer sets.
We settle the %
complexity landscape of qualitative and quantitative reasoning tasks for non-ground ELPs, including common program fragments.
In particular, we establish that deciding whether a program admits a world view ranges between $\NEXP$ and ${\mathsmaller\NEXP^{\SigmaP3}}$. We mitigate resulting high complexity by 
bounding \longversion{the} predicate arities. Then, the complexity drops, ranging from $\SigmaP2$ to $\SigmaP4$.
In the quantitative setting, we consider levels of plausibility by quantifying the number of world views that satisfy a given query~$Q$.
We show completeness results for all common settings and classes of programs, namely, ground programs, non-ground programs, and non-ground programs of bounded arity.
We complete these results by incorporating treewidth \longversion{ into our runtime analyses}%
and establish results ranging up to four-fold exponential runtime in the treewidth, including  ETH-tight lower bounds.
\shortversion{Due to the techniques, our proofs also work for other common ELP-semantics.}
\longversion{In summary, we achieve the results for one of the most 
sophisticated semantics. In fact, our established techniques in proofs also work for other common ELP-semantics.} %

Our results contribute to several avenues for future research.  First,
we have an indication that well-known problems from the AI domains
with high complexity are amenable to ELPs.  In particular, we now have
an understanding that epistemic operators and fixed predicate arities
provide a suitable target formalism for problems on the second, third,
or fourth level of the PH, as certain variants of the diagnosis
problem \cite{EiterGottlob95b,EiterGL97}, counterfactual reasoning
\cite{EiterG96}, or default logic~\cite{FichteHecherSchindler22}.
Modeling such problems using epistemic operators might yield elegant
and instructive ASP encodings\shortversion{.}\longversion{ for such problems.}%
Second, they
indicate alternative ways for solver design: so far, standard
non-ground ELP systems ground the ELP first and then solve the
resulting ground ELP.  Our results justify that epistemic operators
can be reduced on the non-ground level without the exponential blowup.
Recall that non-ground, normal ELPs and propositional, disjunctive LPs
are of similar complexity (see Table~\ref{tab:results}).
This makes alternative grounding techniques such as lazy
grounding~\cite{WeinzierlTaupeFriedrich20} or body-decoupled
grounding~\cite{BesinHecherWoltran22} immediately accessible for ELPs.
Also, our results from Section~\ref{sec:4} build a theoretical
foundation for structure-aware ELP grounders.
This could also be interesting for structure-guided reductions to
ELP~\cite{Hecher22}.
Finally, extending the
complexity landscape of non-ground ELPs is on our agenda.  Finding
natural \NP-fragments would be interesting, since the complexity
beyond \NonNeg almost immediately jumps two levels in PH for the
Shen-Eiter semantics.
A comprehensive complexity picture in ELP similar to ASP could be of
interest in this
setting~\cite{Truszczynski11,FichteTruszczynskiWoltran15}.
We have left aside the case of \emph{maximal}
world views so far, although we expect that the complexity increases
by one level on the PH for reasoning problems.
It might also be interesting to consider complementary aspects in ELPs
where modal operators require some literals to be present in answer
sets~\cite{FichteGagglRusovac22} or where we compute quantitative aspects approximately~\cite{KabirEverardoShukla22}.
Restrictions on epistemic atoms that might be of interest or other
structural restrictions on programs, for example, fractional
hyper-treewidth~\cite{GroheMarx14}, is subject of future research.

\section*{Acknowledgements}
Authors are ordered alphabetically.
The work has been carried out while Hecher visited the Simons
Institute at UC Berkeley and was a PostDoc in the Computer Science \&
Artificial Intelligence Laboratory. at Massachusetts Institute of
Technology.
Research is supported by 
the Austrian Academy of Sciences (\"OAW), DOC Fellowship;
the Austrian Science Fund (FWF), grants P30168 and J4656;
ELLIIT funded by the Swedish government;
Humane AI Net (ICT-48-2020-RIA / 952026);
the Society for Research Funding in Lower Austria (GFF) grant
ExzF-0004; and 
Vienna Science and Technology Fund (WWTF) grants ICT19-065 and
ICT22-023.

\bibliographystyle{named}

\begin{thebibliography}{}

\bibitem[\protect\citeauthoryear{Aliseda}{2017}]{Aliseda17}
Atocha Aliseda.
\newblock {\em The Logic of Abduction: An Introduction}, pages 219--230.
\newblock Springer, Cham, 2017.

\bibitem[\protect\citeauthoryear{Arora and Barak}{2009}]{AroraBarak09}
Sanjeev Arora and Boaz Barak.
\newblock {\em Computational Complexity: A Modern Approach}.
\newblock Cambridge University Press, 2009.

\bibitem[\protect\citeauthoryear{Besin \bgroup \em et al.\egroup
  }{2021}]{BesinHecherWoltran21}
Viktor Besin, Markus Hecher, and Stefan Woltran.
\newblock {Utilizing Treewidth for Quantitative Reasoning on Epistemic Logic
  Programs}.
\newblock {\em TPLP}, 21(5):575--592, 2021.

\bibitem[\protect\citeauthoryear{Besin \bgroup \em et al.\egroup
  }{2022}]{BesinHecherWoltran22}
Viktor Besin, Markus Hecher, and Stefan Woltran.
\newblock {Body-Decoupled Grounding via Solving: {A} Novel Approach on the
  {ASP} Bottleneck}.
\newblock In {\em {IJCAI'22}}, pages 2546--2552. ijcai.org, 2022.

\bibitem[\protect\citeauthoryear{Besin \bgroup \em et al.\egroup
  }{2023}]{BesinHecherWoltran23}
Viktor Besin, Markus Hecher, and Stefan Woltran.
\newblock On the structural complexity of grounding - tackling the {ASP}
  grounding bottleneck via epistemic programs and treewidth.
\newblock In {\em {ECAI}'23}, pages 247--254. IOS Press, 2023.

\bibitem[\protect\citeauthoryear{Bichler \bgroup \em et al.\egroup
  }{2020}]{BichlerMorakWoltran18}
Manuel Bichler, Michael Morak, and Stefan Woltran.
\newblock selp: {A} single-shot epistemic logic program solver.
\newblock {\em TPLP}, 20(4):435--455, 2020.

\bibitem[\protect\citeauthoryear{Bonet}{2010}]{Bonet10}
Blai Bonet.
\newblock Conformant plans and beyond: Principles and complexity.
\newblock {\em Artificial Intelligence}, 174(3):245--269, 2010.

\bibitem[\protect\citeauthoryear{Brewka \bgroup \em et al.\egroup
  }{2011}]{BrewkaEiterTruszczynski11}
Gerhard Brewka, Thomas Eiter, and Miros{\l}aw Truszczy{\'n}ski.
\newblock {Answer set programming at a glance}.
\newblock {\em Comm.\ of the ACM}, 54(12):92--103, 2011.

\bibitem[\protect\citeauthoryear{Cabalar \bgroup \em et al.\egroup
  }{2020}]{CabalarEtAl20b}
Pedro Cabalar, Jorge Fandinno, Javier Garea, Javier Romero, and Torsten Schaub.
\newblock {eclingo : {A} Solver for Epistemic Logic Programs}.
\newblock {\em TPLP}, 20(6):834--847, 2020.

\bibitem[\protect\citeauthoryear{Cook}{1971}]{Cook71}
Stephen~A. Cook.
\newblock {The Complexity of Theorem-Proving Procedures}.
\newblock In {\em STOC'71}, pages 151--158. ACM, 1971.

\bibitem[\protect\citeauthoryear{Dantsin \bgroup \em et al.\egroup
  }{2001}]{DantsinEtAl01}
Evgeny Dantsin, Thomas Eiter, Georg Gottlob, and Andrei Voronkov.
\newblock Complexity and expressive power of logic programming.
\newblock {\em {ACM} Comput. Surv.}, 33(3):374--425, 2001.

\bibitem[\protect\citeauthoryear{Durand \bgroup \em et al.\egroup
  }{2005}]{DurandHermannKolaitis05}
Arnaud Durand, Miki Hermann, and Phokion~G. Kolaitis.
\newblock {Subtractive reductions and complete problems for counting complexity
  classes}.
\newblock {\em Theoretical Computer Science}, 340(3):496--513, 2005.

\bibitem[\protect\citeauthoryear{Eiter and Gottlob}{1995a}]{EiterGottlob95}
Thomas Eiter and Georg Gottlob.
\newblock {On the computational cost of disjunctive logic programming:
  Propositional case}.
\newblock {\em Annals of Mathematics and Artificial Intelligence},
  15(3--4):289--323, 1995.

\bibitem[\protect\citeauthoryear{Eiter and Gottlob}{1995b}]{EiterGottlob95b}
Thomas Eiter and Georg Gottlob.
\newblock {The Complexity of Logic-Based Abduction}.
\newblock {\em Journal of the ACM}, 42(1):3--42, 1995.

\bibitem[\protect\citeauthoryear{Eiter and Gottlob}{1996}]{EiterG96}
Thomas Eiter and Georg Gottlob.
\newblock The complexity of nested counterfactuals and iterated knowledge base
  revisions.
\newblock {\em J. Comput. Syst. Sci.}, 53(3):497--512, 1996.

\bibitem[\protect\citeauthoryear{Eiter \bgroup \em et al.\egroup
  }{1994}]{EiterGottlobMannila94}
Thomas Eiter, Georg Gottlob, and Heikki Mannila.
\newblock Adding disjunction to datalog (extended abstract).
\newblock In {\em PODS '94}, pages 267--278, New York, NY, USA, 1994. ACM.

\bibitem[\protect\citeauthoryear{Eiter \bgroup \em et al.\egroup
  }{1996}]{EiterGottlobGruevich96}
Thomas Eiter, Georg Gottlob, and Yuri Gruevich.
\newblock Normal forms for second-order logic over finite structures, and
  classification of {NP} optimization problems.
\newblock {\em Annals of Pure and Applied Logic}, 78:111--125, 1996.

\bibitem[\protect\citeauthoryear{Eiter \bgroup \em et al.\egroup
  }{1997}]{EiterGL97}
Thomas Eiter, Georg Gottlob, and Nicola Leone.
\newblock Abduction from logic programs: Semantics and complexity.
\newblock {\em Theor. Comput. Sci.}, 189(1-2):129--177, 1997.

\bibitem[\protect\citeauthoryear{Eiter \bgroup \em et al.\egroup
  }{2007}]{EiterFaberFink07}
Thomas Eiter, Wolfgang Faber, Michael Fink, and Stefan Woltran.
\newblock {Complexity results for answer set programming with bounded predicate
  arities and implications}.
\newblock {\em Annals of Mathematics and Artificial Intelligence},
  51(2-4):123--165, 2007.

\bibitem[\protect\citeauthoryear{Eiter \bgroup \em et al.\egroup
  }{2024}]{EiterEtAl24a}
Thomas Eiter, Johannes~K. Fichte, Markus Hecher, and Stefan Woltran.
\newblock Epistemic logic programs: Non-ground and counting complexity.
\newblock In Kate Larson, editor, {\em Proceedings of the 33rd International
  Joint Conference on Artificial Intelligence (IJCAI'24)}, pages 3333--3341.
  International Joint Conferences on Artificial Intelligence Organization,
  2024.

\bibitem[\protect\citeauthoryear{Fages}{1994}]{Fages94}
Fran{\c{c}}ois Fages.
\newblock {Consistency of Clark's completion and existence of stable models}.
\newblock {\em Logical Methods in Computer Science}, 1(1):51--60, 1994.

\bibitem[\protect\citeauthoryear{Fandinno and Hecher}{2023}]{FandinnoHecher23}
Jorge Fandinno and Markus Hecher.
\newblock Treewidth-aware complexity for evaluating epistemic logic programs.
\newblock In {\em {IJCAI}'23}, pages 3203--3211. ijcai.org, 2023.

\bibitem[\protect\citeauthoryear{Fandinno \bgroup \em et al.\egroup
  }{2022}]{DBLP:journals/tplp/FandinnoFG22}
Jorge Fandinno, Wolfgang Faber, and Michael Gelfond.
\newblock Thirty years of epistemic specifications.
\newblock {\em TPLP}, pages 1043--1083, 2022.

\bibitem[\protect\citeauthoryear{{Fari{\~{n}}as del Cerro} \bgroup \em et
  al.\egroup }{2015}]{ijcai:CerroHS15}
Luis {Fari{\~{n}}as del Cerro}, Andreas Herzig, and Ezgi~Iraz Su.
\newblock Epistemic equilibrium logic.
\newblock In {\em IJCAI}, pages 2964--2970, 2015.

\bibitem[\protect\citeauthoryear{Ferguson and
  O'Sullivan}{2007}]{FergusonOSullivan07}
Alex Ferguson and Barry O'Sullivan.
\newblock {Quantified Constraint Satisfaction Problems: From Relaxations to
  Explanations}.
\newblock In {\em IJCAI'07}, pages 74--79. AAAI Press, 2007.

\bibitem[\protect\citeauthoryear{Fichte \bgroup \em et al.\egroup
  }{2015}]{FichteTruszczynskiWoltran15}
Johannes~K. Fichte, Miros{\l}aw Truszczy{\'n}ski, and Stefan Woltran.
\newblock Dual-normal logic programs -- the forgotten class.
\newblock {\em TPLP}, 15(4--5):495--510, 2015.

\bibitem[\protect\citeauthoryear{Fichte \bgroup \em et al.\egroup
  }{2020}]{FichteHecherKieler21}
Johannes~Klaus Fichte, Markus Hecher, and Maximilian F.~I. Kieler.
\newblock {Treewidth-Aware Quantifier Elimination and Expansion for {QCSP}}.
\newblock In {\em CP'20}, pages 248--266. Springer, 2020.

\bibitem[\protect\citeauthoryear{Fichte \bgroup \em et al.\egroup
  }{2022a}]{FichteGagglRusovac22}
Johannes~K Fichte, Sarah~Alice Gaggl, and Dominik Rusovac.
\newblock Rushing and strolling among answer sets -- navigation made easy.
\newblock In {\em {AAAI}'2022}, 2022.

\bibitem[\protect\citeauthoryear{Fichte \bgroup \em et al.\egroup
  }{2022b}]{FichteHecherNadeem22}
Johannes~K. Fichte, Markus Hecher, and Mohamed~A. Nadeem.
\newblock Plausibility reasoning via projected answer set counting - a hybrid
  approach.
\newblock In {\em {IJCAI}'22}, pages 2620--2626. IJCAI Organization, July 2022.

\bibitem[\protect\citeauthoryear{Fichte \bgroup \em et al.\egroup
  }{2022c}]{FichteHecherSchindler22}
Johannes~K. Fichte, Markus Hecher, and Irina Schindler.
\newblock Default logic and bounded treewidth.
\newblock {\em Information and Computation}, 2022.

\bibitem[\protect\citeauthoryear{Fichte \bgroup \em et al.\egroup
  }{2023}]{FichteBerreHecher23}
Johannes~K. Fichte, Daniel~Le Berre, Markus Hecher, and Stefan Szeider.
\newblock The silent (r)evolution of sat.
\newblock {\em Commun. ACM}, 66(6):64--72, May 2023.

\bibitem[\protect\citeauthoryear{Gelfond and
  Lifschitz}{1991}]{GelfondLifschitz91}
Michael Gelfond and Vladimir Lifschitz.
\newblock {Classical Negation in Logic Programs and Disjunctive Databases}.
\newblock {\em New Generation Computing}, 9(3/4):365--386, 1991.

\bibitem[\protect\citeauthoryear{Gelfond}{1991}]{Gelfond91a}
Michael Gelfond.
\newblock {Strong Introspection}.
\newblock In {\em AAAI'91}, pages 386--391. {AAAI} Press / The {MIT} Press,
  1991.

\bibitem[\protect\citeauthoryear{Gottlob \bgroup \em et al.\egroup
  }{1999}]{GottlobLeoneVeith99}
Georg Gottlob, Nicola Leone, and Helmut Veith.
\newblock Succinctness as a source of complexity in logical formalisms.
\newblock {\em Annals of Pure and Applied Logic}, pages 231--260, 1999.

\bibitem[\protect\citeauthoryear{Gr{\"{a}}del \bgroup \em et al.\egroup
  }{2007a}]{GradelKolaitisLibkin07}
Erich Gr{\"{a}}del, Phokion~G. Kolaitis, Leonid Libkin, Maarten Marx, Joel
  Spencer, Moshe~Y. Vardi, Yde Venema, and Scott Weinstein.
\newblock {\em Finite Model Theory and Its Applications}.
\newblock Springer, Texts in Theoretical Computer Science, 2007.

\bibitem[\protect\citeauthoryear{Gr{\"a}del \bgroup \em et al.\egroup
  }{2007b}]{GraedelEtAl07}
Erich Gr{\"a}del, Phokion~G Kolaitis, Leonid Libkin, Maarten Marx, Joel
  Spencer, Moshe~Y Vardi, Yde Venema, Scott Weinstein, et~al.
\newblock {\em Finite Model Theory and its applications}.
\newblock Springer, 2007.

\bibitem[\protect\citeauthoryear{Grohe and Marx}{2014}]{GroheMarx14}
Martin Grohe and D{\'a}niel Marx.
\newblock Constraint solving via fractional edge covers.
\newblock {\em ACM Transactions on Algorithms (TALG)}, 11(1):1--20, 2014.

\bibitem[\protect\citeauthoryear{Hecher \bgroup \em et al.\egroup
  }{2020}]{HecherMorakWoltran20}
Markus Hecher, Michael Morak, and Stefan Woltran.
\newblock {Structural Decompositions of Epistemic Logic Programs}.
\newblock In {\em AAAI'20}, pages 2830--2837. AAAI Press, 2020.

\bibitem[\protect\citeauthoryear{Hecher}{2022}]{Hecher22}
Markus Hecher.
\newblock Treewidth-aware reductions of normal {ASP} to {SAT} -- is normal
  {ASP} harder than {SAT} after all?
\newblock {\em Artificial Intelligence}, 304:103651, 2022.

\bibitem[\protect\citeauthoryear{Hemachandra}{1987}]{Hemachandra87}
Lane~A. Hemachandra.
\newblock The strong exponential hierarchy collapses.
\newblock In {\em STOC'87}, pages 110--122, New York, New York, USA, 1987. ACM.

\bibitem[\protect\citeauthoryear{Hemaspaandra and
  Vollmer}{1995}]{HemaspaandraVollmer95a}
Lane~A. Hemaspaandra and Heribert Vollmer.
\newblock {The Satanic Notations: Counting Classes Beyond \#{P} and Other
  Definitional Adventures}.
\newblock {\em SIGACT News}, 26(1):2--13, March 1995.

\bibitem[\protect\citeauthoryear{Impagliazzo and
  Paturi}{2001}]{ImpagliazzoPaturi01}
Russell Impagliazzo and Ramamohan Paturi.
\newblock {On the Complexity of k-SAT}.
\newblock {\em Journal of Computer and System Sciences}, 62(2):367--375, 2001.

\bibitem[\protect\citeauthoryear{Kabir \bgroup \em et al.\egroup
  }{2022}]{KabirEverardoShukla22}
Mohimenul Kabir, Flavio~O Everardo, Ankit~K Shukla, Markus Hecher,
  Johannes~Klaus Fichte, and Kuldeep~S Meel.
\newblock Approx{ASP} -- a scalable approximate answer set counter.
\newblock In {\em {AAAI}'22}, pages 5755--5764. AAAI Press, 2022.

\bibitem[\protect\citeauthoryear{Kahl \bgroup \em et al.\egroup
  }{2015}]{KahlWatsonBalai15a}
Patrick~Thor Kahl, Richard Watson, Evgenii Balai, Michael Gelfond, and Yuanlin
  Zhang.
\newblock {The Language of Epistemic Specifications (Refined) Including a
  Prototype Solver}.
\newblock {\em J. Logic Comput.}, 25, 2015.

\bibitem[\protect\citeauthoryear{Kleine~B{\"u}ning and
  Lettman}{1999}]{KleineBuningLettman99}
Hans Kleine~B{\"u}ning and Theodor Lettman.
\newblock {\em {Propositional Logic: Deduction and Algorithms}}, volume~48 of
  {\em Cambridge tracts in theoretical computer science}.
\newblock Cambridge University Press, 1999.

\bibitem[\protect\citeauthoryear{Kloks}{1994}]{Kloks94a}
Ton Kloks.
\newblock {\em {Treewidth, Computations and Approximations}}, volume 842 of
  {\em LNCS}.
\newblock Springer, 1994.

\bibitem[\protect\citeauthoryear{Lohrey and Rosowski}{2023}]{LohreyRosowski23}
Markus Lohrey and Andreas Rosowski.
\newblock On the complexity of diameter and related problems in permutation
  groups.
\newblock In {\em {ICALP}'23}, pages 134:1--134:18. Dagstuhl Publishing, 2023.

\bibitem[\protect\citeauthoryear{Morak}{2019}]{iclp:Morak19}
Michael Morak.
\newblock {Epistemic Logic Programs: A Different World View}.
\newblock In {\em ICLP'19}, pages 52--64, 2019.

\bibitem[\protect\citeauthoryear{Papadimitriou and
  Yannakakis}{1986}]{PapadimitriouYannakakis86}
Christos~H. Papadimitriou and Mihalis Yannakakis.
\newblock {A Note on Succinct Representations of Graphs}.
\newblock {\em Inf. Control.}, 71(3):181--185, 1986.

\bibitem[\protect\citeauthoryear{Papadimitriou}{1994}]{Papadimitriou94}
Christos~H. Papadimitriou.
\newblock {\em {Computational Complexity}}.
\newblock Addison-Wesley, 1994.

\bibitem[\protect\citeauthoryear{Robertson and
  Seymour}{1985}]{RobertsonSeymour85}
Neil Robertson and Paul~D. Seymour.
\newblock {Graph Minors -- a Survey}.
\newblock In {\em Surveys in Combinatorics 1985}, London Mathematical Society
  Lecture Note Series, pages 153--171. Cambridge University Press, 1985.

\bibitem[\protect\citeauthoryear{Shen and Eiter}{2016}]{ShenEiter16}
Yi-Dong Shen and Thomas Eiter.
\newblock Evaluating epistemic negation in answer set programming.
\newblock {\em Artificial Intelligence}, 237:115--135, 2016.

\bibitem[\protect\citeauthoryear{Stewart}{1991}]{Stewart91}
Iain~A. Stewart.
\newblock {Complete Problems Involving Boolean Labelled Structures and
  Projection Transactions}.
\newblock {\em J. Logic Comput.}, 1(6):861--882, 12 1991.

\bibitem[\protect\citeauthoryear{Stockmeyer and
  Meyer}{1973}]{StockmeyerMeyer73}
Larry~J. Stockmeyer and Albert~R. Meyer.
\newblock {Word problems requiring exponential time}.
\newblock In {\em STOC'73}, pages 1--9. ACM, 1973.

\bibitem[\protect\citeauthoryear{Stockmeyer}{1976}]{Stockmeyer76}
Larry~J. Stockmeyer.
\newblock {The polynomial-time hierarchy}.
\newblock {\em Theoretical Computer Science}, 3(1):1--22, 1976.

\bibitem[\protect\citeauthoryear{Truszczy{\'n}ski}{2011a}]{Truszczynski11b}
Miroslaw Truszczy{\'n}ski.
\newblock {Revisiting Epistemic Specifications}.
\newblock In {\em Logic Programming, Knowledge Representation, and Nonmonotonic
  Reasoning - Essays Dedicated to Michael Gelfond on the Occasion of His 65th
  Birthday}, LNCS, pages 315--333. Springer, 2011.

\bibitem[\protect\citeauthoryear{Truszczy{\'n}ski}{2011b}]{Truszczynski11}
Miroslaw Truszczy{\'n}ski.
\newblock {Trichotomy and dichotomy results on the complexity of reasoning with
  disjunctive logic programs}.
\newblock {\em TPLP}, 11(6):881--904, 2011.

\bibitem[\protect\citeauthoryear{Valiant}{1979}]{Valiant79b}
Leslie~G. Valiant.
\newblock The complexity of computing the permanent.
\newblock {\em Theoretical Computer Science}, 8(2):189--201, 1979.

\bibitem[\protect\citeauthoryear{Weinzierl \bgroup \em et al.\egroup
  }{2020}]{WeinzierlTaupeFriedrich20}
Antonius Weinzierl, Richard Taupe, and Gerhard Friedrich.
\newblock Advancing lazy-grounding {ASP} solving techniques - restarts, phase
  saving, heuristics, and more.
\newblock {\em TPLP}, 20(5):609--624, 2020.

\bibitem[\protect\citeauthoryear{Williams}{2008}]{Williams08}
Ryan Williams.
\newblock {Non-Linear Time Lower Bound for (Succinct) Quantified Boolean
  Formulas}.
\newblock {\em Electron. Colloquium Comput. Complex.}, {TR08-076}, 2008.

\bibitem[\protect\citeauthoryear{Wrathall}{1976}]{Wrathall76}
Celia Wrathall.
\newblock {Complete Sets and the Polynomial-Time Hierarchy}.
\newblock {\em Theoretical Computer Science}, 3(1):23--33, 1976.

\end{thebibliography}

\cleardoublepage
\appendix
\section{Additional Preliminaries}
\subsection{Propositional Logic}
We define Boolean formulas and their evaluation in the usual way and
\emph{literals} are FO-variables or their negations. For a Boolean
formula $\psi$, we denote by $\var(\psi)$ the set of variables of
formula~$\psi$. Logical operators~$\wedge$, $\vee$, $\neg$
$\rightarrow$, $\leftrightarrow$ are used in the usual meaning. A
\emph{term} is a conjunction of literals and a clause is a
\emph{disjunction} of literals.
$\psi$ is in \emph{conjunctive normal form (CNF)} if $\psi$ is a
conjunction of clauses and $\psi$ is in \emph{disjunctive normal form
  (DNF)} if $\psi$ is a disjunction of terms.

\subsection{First Order (FO) Logic}

\paragraph{Syntax}
A \emph{signature} is a finite
set~$\sigma = \SB R_1,\ldots, R_k, c_1,\ldots, c_\ell\SE$ with $k$,
$\ell \in \NAT_0$.
We call symbols~$R_1$, $\ldots$, $R_k$ \emph{relation symbols} and the
symbols~$c_1$, $\ldots$, $c_\ell$ \emph{constant symbols}.
We abbreviate the set of all %
constant symbols by~$\CONS_\sigma$.
Every $R_i$ has an arity~$\ar(R_i) \in \NAT$.
A $\sigma$-structure
$\AAA = (A, \sigma^\AAA) = (A,R^{\AAA}_1, \ldots, R^{\AAA}_k,
c^{\AAA}_1, \ldots, c^{\AAA}_\ell)$ consists of a non-empty set~$A$, the domain
of~$A$, an $\ar(R_i)$-ary relation $R_i^\AAA \subseteq A^{\ar(R_i)}$,
for every~$i \leq k$; and an element $c_j^\AAA \in A$ for every
$j \leq \ell$.
We abbreviate by $\VAR_1$ the set of all FO-variables.
We define the set~$\FO[\sigma]$ of formulas inductively as follows:
\begin{itemize}
\item[(A1)] $R(v_1, \ldots, v_k) \in \FO[\sigma]$ for all relation
  symbols~$R \in \sigma$ where
  $v_1, \ldots, v_{\ar(R)}\in \VAR_1 \cup \CONS_\sigma$.
\item[(A2)] $v = v' \in \FO[\sigma]$ for all
  $v, v' \in \VAR_1 \cup \CONS_\sigma$.
\item[(BC)] Let $\psi_1$, $\psi_2 \in \FO[\sigma]$, then
  $\neg \psi_1 \in \FO[\sigma]$,
  $\psi_1 \rightarrow \psi_2 \in \FO[\sigma]$,
  $\psi_1 \vee \psi_2 \in \FO[\sigma]$,
  $\psi_1 \leftrightarrow \psi_2 \in \FO[\sigma]$, and
  $\psi_1 \wedge \psi_2 \in \FO[\sigma]$.
\item[(Q1)] Let $\psi \in \FO[\sigma]$ and $x \in \VAR_1$,
  $\exists \psi \in \FO[\sigma]$ and $\forall \psi \FO[\sigma]$.
\end{itemize}
Formulas constructed from (A1) and (A2) are called \emph{atomic}.
If it is clear from the context, we omit stating the
signature~$\sigma$ and just assume that it is arbitrary one.

\paragraph{Interpretations}
Let $\sigma$ be a signature. A $\sigma$-interpretation
$I=(\AAA,\beta)$ consists of a $\sigma$-structure~$\AAA$ and an
assignment~$\beta: \VAR_1 \rightarrow A$, which maps every variable to
a value from~$A$. %
Let $x \in \VAR_1$, $a \in A$, and $\beta: \VAR_1 \rightarrow A$, 
then the
\emph{assignment}~$\extend{\beta}{a}{x}: \VAR_1 \rightarrow A$ is
defined as follows:
$(\extend{\beta}{a}{x})(y) \eqdef \{a \SM y = x\} \cup \{ \beta(y) \SM
y \neq x\}$.

\paragraph{Semantics}
Let $\sigma$ be a signature, $I = (\AAA, \beta)$ a
$\sigma$-interpretation and $\phi \in \FO[\sigma]$.
We define the models relation~$I \models \phi$ inductively on the
formula~$\phi$ that is constructed according to the corresponding
rules stated above:
\begin{itemize}
\item[(A1)]
  $I \models \phi$ if and only if
  $(\beta(v_1), \ldots, \beta(v_k)) \in R^\AAA$.
\item[(A2)] %
  $I \models \phi$ if and only if $\beta(v) = \beta(v')$.
\item[(BC)]
  \begin{enumerate}
  \item $\phi = \neg \phi_1$, then $I \models \phi$\\
    if and only if $I \models \phi_1$ does not hold;
  \item $\phi = (\phi_1 \vee \phi_2)$, then $I \models \phi$\\
    if and only if $I \models \phi_1$ and $I \models \phi_2$;
  \item $\phi = (\phi_1 \wedge \phi_2)$, then $I \models \phi$\\
    if and only if $I \models \phi_1$ and $I \models \phi_2$;
  \item $\phi = (\phi_1 \rightarrow \phi_2)$, then $I \models \phi$\\
    if and only if $I \models \phi_1$ then also $I \models \phi_2$;
  \item $\phi = \phi_1 \leftrightarrow \phi_2$, then $I \models \phi$\\
    if and only if $I \models \phi_1$ if and only if
    $I \models \phi_2$.
  \end{enumerate}
\item[(Q1)]
  \begin{enumerate}
  \item $\phi = \exists x \psi$, then $I \models \psi$ if and only
    if\\ there is an~$a \in A$, such that
    $(\AAA, \extend{\beta}{a}{x}) \models \psi$;
  \item $\phi = \forall x \psi$, then $I \models \psi$ if and only
    if\\
    for all~$a \in A$ it is true that
    $(\AAA, \extend{\beta}{a}{x}) \models \psi$.
  \end{enumerate}
\end{itemize}

The set~$\free(\phi)$ consists of variables that are \emph{unbounded}
in $\phi$. We define them inductively on the formula~$\phi$ that is
constructed according to the corresponding rules stated above:
\begin{itemize}
\item[(A1)] $\free(\phi) \eqdef \{v_1,\ldots, v_k \} \cap \VAR_1$,
\item[(A2)] $\free(\phi) \eqdef \{v, v'\} \cap \VAR_1$,
\item[(BC)]
  $\phi = \neg \phi_1$, then $\free(\phi) \eqdef \free(\phi_1)$ and \\
  $\phi = (\phi_1 \circ \phi_2)$, then
  $\free(\phi) \eqdef \free(\phi_1) \cup \free(\phi_2)$ where
  $\circ \in \{\vee, \wedge, \rightarrow, \leftrightarrow \}$.
\item[(Q1)] $\phi = \exists x \psi$ or $\phi = \forall x \psi$, then
  $\free(\phi) \eqdef \free(\psi) \setminus \{x\}$.
\end{itemize}
We call a formula~$\phi \in \FO[\sigma]$ \emph{sentence}, if $\phi$
has no free variables, i.e., $\free(\phi) = \emptyset$.

\paragraph{Ordered Structures}
We call a structure~$\AAA$ \emph{ordered} if its signature contains a
$2$-ary relation symbol~$<$, such that its relation~$<^\AAA$ is a
strict linear ordering on the universe~$A$,~i.e., $<^\AAA$ is
irreflexive, asymmetric, transitive, connected.
Furthermore, the relation $\first^\AAA(a)$ is true only if $a\in A$
the smallest element, and $\last^\AAA(a)$ is true only if $a \in A$ is
the largest element, and $\suc$ is the successor relation with respect
to~$<^\AAA$. In other words, $(a,b) \in \suc$ if and only if
$a <^\AAA b$ and there is no $c$ such that $a<^\AAA c <^\AAA b$.

\subsection{Second Order (SO) Logic}

\paragraph{Syntax}
The set~$\VAR_2 \eqdef \SB V^k_i \SM k,i \in \NAT \SE$ consists of all
SO-variables, sometimes also called \emph{predicate variables}.
The variable~$\VAR^k_i$ has arity $\ar(\VAR^k_i)=k$.
The set of all formulas~$\SO[\sigma]$ is defined inductively by the rules %
(A1), (A2), (BC), and (Q1) from the FO-logic above as well as %
the following rules (A3) and (Q2).
\begin{itemize}
\item[(A3)] Let $X \in \VAR_2$ of arity $k \eqdef \ar(X)$,
  $v_1,\ldots,v_k \in \VAR_1 \cup \CONS_\sigma$, then
  $X(v_1, \ldots, v_k)\in \SO[\sigma]$.
\item[(Q2)] Let $\psi \in \SO[\sigma]$ and $X \in \VAR_2$, then (i)
  $\exists X \psi \in \SO[\sigma]$ and (ii)
  $\forall X \psi \in \SO[\sigma]$.
\end{itemize}
Formulas built from (A1), (A2), and (A3) are called \emph{atomic}.

If $\phi \in \SO[\sigma]$, we mean by $\free(\phi)$ the set of all
first- or second-order variables that occur unbounded in~$\phi$.
We write $\phi(x_1, \ldots, x_s, X_1, \ldots, X_t)$ to indicate that
$\free(\phi)=\{x_1, \ldots, x_s, X_1, \ldots, X_t\}$.
We call an $\SO[\sigma]$-formula~$\phi$ a \emph{sentence} if
$\free(\phi) = \emptyset$.

\paragraph{Semantics}
The $\SO$-semantics extends semantics of $\FO$ as follows.
Let $\AAA$ be a $\sigma$-structure.
\begin{itemize}
\item[(Q2)] $\AAA\models \exists X \phi$ if and only if there is a
  relation~$X^\AAA \subseteq A^k$, such that
  $(\AAA,X^\AAA) \models \phi$.
\end{itemize}
The SO-logic is sometimes also known as~$\Sigma^1_k$.

\subsection{Quantified Constraint Satisfaction Problems (QCSPs) and Validity of Quantified Boolean Formulas ($\QBFS$)}\label{appendix:prelimns:qcsp}
We define \emph{Constraint Satisfaction Problems (CSPs)} over finite domains 
and their evaluation~\cite{FergusonOSullivan07} in the usual way. 
A CSP~$\mathcal{C}$ is a set of constraints 
and we denote by~$\var(\mathcal{C})$ %
the set of \emph{(constraint) variables} of~$\mathcal{C}$.
These constraint variables are over a (fixed) finite domain~$\mathcal{D}=\{0,\ldots\}$ 
consisting of at least two values (at least ``Boolean'').
An \emph{assignment} is a mapping~$\iota: \dom(\iota) \rightarrow \mathcal{D}$
defined for a set~$\dom(\iota)$ of variables.
We say an assignment~$\iota'$ \emph{extends}~$\iota$
(by~$\dom(\iota')\setminus \dom(\iota)$) if
$\dom(\iota') \supseteq \dom(\iota)$ and~$\iota'(y) = \iota(y)$ for
any~$y\in \dom(\iota)$.
Further, we let an \emph{assignment~$\alpha|_X$ restricted to~$X$} be
the assignment~$\alpha$ restricted to~$X$, where~$(\alpha|_X)(y)=\alpha(y)$ for~$y\in X\cap\dom(\alpha)$.
A \emph{constraint}~$C\in\mathcal{C}$ restricts certain variables~$\var(C)$ of~$C$
and contains assignments to $\mathcal{D}$.
More precisely, each~$c\in C$ with~$C\in\mathcal{C}$ is an \emph{allowed assignment}~$c: \var(C) \rightarrow \mathcal{D}$ 	
  of each variable~$v\in\var(C)$ to~$\mathcal{D}$.
Formally, $\mathcal{C}$ is \emph{satisfiable}, if there is an assignment~$A: \var(\mathcal{C}) \rightarrow \mathcal{D}$, 
called \emph{satisfying assignment}, such that for every~$C\in \mathcal{C}$
there is an allowed assignment~$c\in C$, where assignment $A|_{\var(C)}$ restricted to~$\var(C)$ equals~$c$. 

Let $\ell\geq 0$ be an integer. A \emph{quantified CSP (QCSP)}~$Q$  is of the form
$Q_{1} V_1.  Q_2 V_2.\cdots$ $Q_\ell V_\ell. \mathcal{C}$ where
$Q_i \in \{\forall, \exists\}$ for $1 \leq i \leq \ell$, $V_\ell=\exists$, and
$Q_j \neq Q_{j+1}$ for $1 \leq j \leq \ell-1$; and where $V_i$ are
disjoint, non-empty sets of constraint variables with
$\bigcup^\ell_{i=1}V_i = \var(\mathcal{C})$; and $\mathcal{C}$ is a CSP.
We call $\ell$ the \emph{quantifier rank} of~$Q$ and let $\matr(Q)\eqdef \mathcal{C}$.
The evaluation of QCSPs is defined as follows. %
Given a QCSP~$Q$ and an assignment~$\iota$, then~$Q[\iota]$ is a
QCSP that is obtained from~$Q$, where every occurrence of
any~$x\in \dom(\iota)$ in~$\matr(Q)$ is replaced by~$\iota(x)$, and
variables that do not occur in the result are removed from preceding
quantifiers accordingly.
A QCSP~$Q$ \emph{evaluates to true}, or is valid, if~$\ell=0$
and the CSP $\matr(Q)$ is satisfiable. Otherwise,
i.e., if~$\ell \neq 0$, we distinguish according to~$Q_1$.
If~$Q_1=\exists$, then~$Q$ evaluates to true if and only if there
exists an assignment~$\iota: V_1\rightarrow \mathcal{D}$ such
that~$Q[\iota]$ evaluates to true.  If~$Q_1=\forall$,
then~$Q[\iota]$ evaluates to true if for any
assignment~$\iota: V_1 \rightarrow\mathcal{D}$, $Q[\iota]$ evaluates to
true.

For brevity, we denote constraints by a formula using equality $=$
between variables and elements of the domain~$\mathcal{D}$,
negation $\neg$, which inverts (in-)equality expressions, as well as disjunction $\vee$, conjunction $\wedge$, and implication~$\rightarrow$,
which can be easily transformed into CSPs as defined above.
Further, we use for any Boolean variable~$v\in\var(Q)$ the expression ``$v$'', as a shortcut for~$v=1$.

If QCSP formula~$Q$ is over the Boolean domain, we refer to the formula simply as \emph{quantified Boolean Formula (QBF)} and $\QBFS$.

\subsection{Counting Complexity}\label{appendix:prelimns:counting}
We follow standard terminology in this area \cite{DurandHermannKolaitis05,HemaspaandraVollmer95a}.
In particular, we will make use of complexity classes preceded with the sharp-dot operator `$\#\cdot$'.
A \emph{witness} function is a function $w\colon\Sigma^*\to\mathcal P^{<\omega}(\Gamma^*)$, where $\Sigma$ and $\Gamma$ are alphabets, mapping to a finite subset of $\Gamma^*$. Such functions associate with the counting problem ``given $x\in\Sigma^*$, find $|w(x)|$''.
If $\mathcal C$ is a decision complexity class then $\#\cdot\mathcal C$ is the class of all counting problems whose  witness function $w$ satisfies (1.) $\exists$ polynomial $p$ such that for all $y\in w(x)$, we have that $|y|\leqslant p(|x|)$, and (2.) the decision problem ``given $x$ and $y$, is $y\in w(x)$?'' is in $\mathcal C$.
A \emph{parsimonious} reduction between two counting problems $\#A,\#B$ preserves the cardinality between the corresponding witness sets and is computable in polynomial time.
A \emph{subtractive} reduction between two counting problems $\#A$ and $\#B$ is composed of two functions $f,g$ between the instances of $A$ and $B$ such that $B(f(x))\subseteq B(g(x))$ and $|A(x)|=|B(g(x))|-|B(f(x))|$, where $A$ and $B$ are respective witness functions.

\newcommand{\MC}{\textsc{Mc}\xspace}
\subsection{Descriptive Complexity}

The \textsc{Model Checking} problem for a logic~$\LLL$ and
class~$\mathsf{F}$ of structures asks to decide, for a given a
sentence~$\Phi \in \LLL$ and a structure~$\AAA \in \mathsf{F}$,
whether $\AAA \models \Phi$.
\emph{Data complexity} is the complexity of the \textsc{Model
  Checking} problem assuming that $\Phi$ is fixed. More precisely, the
data complexity of \textsc{Model Checking} for logic~$\LLL$ is in the
complexity class~$\mathcal{K}$, if for every sentence~$\Phi \in \LLL$,
$L_\Phi$ can be recognized in~$\mathcal{K}$ where
$L_\Phi(\AAA) \eqdef \SB \enc(\AAA) \SM \AAA \in \mathsf{F}, \AAA
\models \Phi \SE$ and $\enc$ refers to the standard encoding,
see,~e.g.,~\cite{GradelKolaitisLibkin07}.
\emph{Combined complexity} is the complexity of
recognizing the satisfaction relation on finite structures.
More formally, the combined complexity of \textsc{Model Checking} for
logic~$\LLL$ is in complexity class~$\mathcal{K}$, if for every
sentence~$\Phi \in \LLL$, $L$ can be recognized in~$\mathcal{K}$ where
$L(\AAA,\Phi) \eqdef \SB \enc(\AAA).\enc(\Phi) \SM \AAA \in
\mathsf{F}, \AAA \models \Phi \SE$.

\subsection{Parameterized Complexity}\label{appendix:prelimns:parameterized}
\paragraph{Nice Tree Decompositions (TDs)}
For a node~$t \in N$, we say that $\type(t)$ is $\leaf$ if
$\children(t,T)=\langle \rangle$; $\join$ if
$\children(t,T) = \langle t',t''\rangle$ where
$\chi(t) = \chi(t') = \chi(t'') \neq \emptyset$; $\intr$
(``introduce'') if $\children(t,T) = \langle t'\rangle$,
$\chi(t') \subseteq \chi(t)$ and $|\chi(t)| = |\chi(t')| + 1$; $\rem$
(``remove'') if $\children(t,T) = \langle t'\rangle$,
$\chi(t') \supseteq \chi(t)$ and $|\chi(t')| = |\chi(t)| + 1$. If for
every node $t\in N$, $\type(t) \in \{ \leaf, \join, \intr, \rem\}$ and
bags of leaf nodes and the root are empty, then the TD is called
\emph{nice}.

\clearpage
\section{Additional Proofs}

\subsection{Qualitative Reasoning}\label{appendix:qualitative}
\begin{restatelemma}[lem:transform]
\begin{lemma}
  Given a sentence~%
  $\Phi\in \Sigma^1_k[\sigma]$ in CDNF and a
  finite structure~$\AAA$, %
  deciding whether $\AAA \models \Phi$  is
  (i)~$\NEXPl{k-1}$-com\-plete and
  (ii)~$\SIGMA{k+1}$-complete if every 
  predicate $R_i$ in $\Phi$ has arity at most~$m$ for some arbitrary but
  fixed constant~$m\geq 1$.
\end{lemma}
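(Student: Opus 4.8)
The plan is to prove each of the two items by a separate membership and hardness argument, exploiting that CDNF pins the first-order matrix $\phi$ down to a single first-order quantifier block $Q'\vx$ over a quantifier-free DNF/CNF kernel $\psi$. The guiding observation is that this single block contributes \emph{exactly} one quantifier alternation: in the bounded-arity combined setting it is this block that accounts for the extra ``$+1$'' relative to the data complexity $\SIGMA{k}$ of Proposition~\ref{prop:socompl}(ii), where a fixed $\Phi$ instead lets the matrix be evaluated in $\Ptime$.

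\emph{Membership.} Item (i) is immediate, since CDNF formulas form a syntactic subclass of $\Sigma^1_k$, so Proposition~\ref{prop:socompl}(i) already yields membership in \NEXPl{k-1}. For item (ii) I would argue directly: if every $R_i$ has arity at most $m$, then over a domain of size $n$ it has at most $n^m$ tuples, so each second-order quantifier $Q_iR_i$ is a polynomial-size guess, giving $k$ alternating polynomial quantifier blocks. Because $\Phi$ is in CDNF, the remaining first-order part is a single block $Q'\vx\,\psi$ with $\psi$ quantifier-free, and by the parity condition $Q'$ is opposite to $Q_k$ (for $k$ even, $Q_k=\forall$ and $Q'=\exists$; for $k$ odd, $Q_k=\exists$ and $Q'=\forall$), so $Q'\vx$ is a genuine $(k{+}1)$-st alternation, again a polynomial guess over $A^{|\vx|}$. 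Since $\psi$ is quantifier-free, its truth under a full assignment of the $R_i$ and of $\vx$ is decidable in $\Ptime$ by evaluating atoms as table lookups. Hence the whole check lies in $\SIGMA{k+1}$.

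\emph{Hardness.} For item (i) I would reduce from model checking of unrestricted $\Sigma^1_k$ sentences, which is \NEXPl{k-1}-hard by Proposition~\ref{prop:socompl}(i). The task is a normal-form transformation turning an arbitrary $\Sigma^1_k$ sentence into a CDNF one in polynomial time while preserving $k$, its parity, and the model-checking answer: prenex the first-order matrix, compress its (possibly many) first-order alternations into a single block of the type dictated by the parity, and put the kernel into CNF or DNF accordingly. For item (ii) I would instead reduce from $\QBFS_{k+1}$, which is $\SIGMA{k+1}$-complete. I would encode the first $k$ Boolean blocks $X_1,\dots,X_k$ by \emph{unary} relations $R_1,\dots,R_k$ (so $m=1$), preserving the quantifier pattern, and encode the innermost Boolean block $X_{k+1}$ by the first-order block $Q'\vx$: fixing a nonempty proper unary predicate $T$ in the input structure, I interpret the truth value of the $j$-th variable of $X_{k+1}$ as $T(x_j)$, so that as $\vx$ ranges over $A^{|\vx|}$ the induced assignments range over exactly all Boolean assignments (no spurious assignments arise, for either $\exists$ or $\forall$). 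The quantifier-free QBF matrix is then rendered as the DNF/CNF kernel $\psi$ by the usual clause/term gadget whose literals are atoms $R_i(\cdot)$ or $T(x_j)$.

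The main obstacle is the first-order normalization underlying item~(i): reducing arbitrary first-order quantifier alternation inside the matrix to a single block of the prescribed type, with a CNF/DNF kernel, while staying inside $\Sigma^1_k$ and keeping the reduction polynomial. This is precisely where one must adapt known normal-form results for second-order logic over finite structures (e.g.\ introducing a linear order and Skolem-style auxiliary relations that are absorbed by the outermost matching second-order quantifier). For item~(ii) the delicate point is dual: one must choose the CNF/DNF form of the kernel to match the parity of the final first-order block $Q'\vx$, so that the innermost $\exists$/$\forall$ computation level of the QBF is captured faithfully rather than trivialized.
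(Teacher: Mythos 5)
Your membership arguments for both items are sound and essentially the paper's: for (ii) the paper phrases it as an induction on $k$ with a $\SigmaP{k}$ oracle, while you count alternations directly, but the content (polynomial-size guesses for the bounded-arity relations, one genuine extra alternation from the FO block, polynomial kernel evaluation) is the same. Your hardness plan for (i) --- SO-Skolemization with a domain order, compressing the FO prefix into one block, and auxiliary predicates absorbed by a matching second-order quantifier to reach CNF/DNF --- is exactly the paper's route via the normal-form results of Eiter, Gottlob, and Gurevich, with the complementation trick $\exists\vec{y}\,\phi_0 = \neg\forall\vec{y}\,\neg\phi_0$ handling the other parity.

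The genuine gap is in your hardness argument for (ii). The ``delicate point'' you flag at the end is not a detail to be arranged but a fatal obstruction to your reduction from $\QBFS_{k+1}$. Take $k$ even: CDNF forces the FO part to be $\exists\vx\,\psi$ with $\psi$ in \emph{DNF}, while the $\SigmaP{k+1}$-hard QBF with innermost block $\exists X_{k+1}$ needs a \emph{CNF} matrix --- with a DNF matrix the innermost Boolean $\exists X_{k+1}$ is polynomial-time, so that source problem already lies in $\SigmaP{k}$ and is not $\SigmaP{k+1}$-hard. Your literal-by-literal translation therefore produces the wrong kernel form, and you cannot repair it by switching the source matrix. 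Worse, under your encoding every kernel atom is either ground ($R_i$ applied to a constant) or unary in a single FO variable ($T(x_j)$, with each $x_j$ occurring in no other atom), so first-order quantification over $\vx$ is exactly equivalent to Boolean quantification over the bits $T(x_j)$: given extensions for $R_1,\dots,R_k$, evaluating $\exists\vx\,\psi$ with $\psi$ in DNF is polynomial, and the whole sentence you build is decidable in $\SigmaP{k}$ --- it provably cannot witness $\SigmaP{k+1}$-hardness. The case $k$ odd ($\forall\vx$ with CNF kernel versus the DNF matrix required for innermost-$\forall$ hardness) fails dually. What your encoding lacks is FO-variable \emph{sharing} across kernel atoms, which is the only source of the extra NP/coNP level in the FO block (conjunctive-query-style hardness).

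The paper avoids this trap by not reducing from QBF at all. It starts from a \emph{fixed} sentence $\Phi\in\Sigma^1_{k+1}$ in CDNF, which is $\SIGMA{k+1}$-hard in data complexity by Proposition~\ref{prop:socompl}(ii); since $\Phi$ is fixed, grounding its FO block over $\AAA$ is polynomial and preserves the CNF/DNF shape. It then moves ground atoms to unary predicates over fresh tuple-elements (achieving $m=1$) and eliminates the innermost block $\exists R_{k+1}$ by introducing, for each ground atom $R'('\vec{c}\,')$, a fresh existentially quantified FO variable $x_{R'('\vec{c}\,')}$ and a singleton unary predicate $\gat$, replacing the atom by $\gat(x_{R'('\vec{c}\,')})$. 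These fresh variables recur across many clauses of the grounded kernel, so the resulting single FO block genuinely simulates the eliminated second-order quantifier rather than collapsing to Boolean quantification. To fix your proposal you would need this gadget (or an equivalent variable-sharing construction); as written, your item-(ii) reduction cannot establish the claimed lower bound.
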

\end{restatelemma}

\begin{proof}
  Case (i)
  \emph{Membership:} Since the formulas are a special case, membership
  follows directly from Proposition~\ref{prop:socompl}.
  \emph{Hardness:} %
  Any sentence $\phi$ and structure $\AAA$ as in Proposition~\ref{prop:socompl}
  can be transformed in polynomial time into a CDNF sentence $\Phi'$ and a structure $\AAA'$
  such that $\AAA\models \Phi$ iff $\AAA'\models \Phi'$
  To this end, we can employ as in \cite{EiterGottlobGruevich96} SO-Skolemization and an ordering $<$ of the domain $A$ of $\AAA$ with 
  auxiliary relations and predicates for quantifier elimination to obtain a prenex sentence $\Phi_0 \in \Sigma^1_k[\sigma_0]$ with FO-part $\exists\vec{y}\psi_0$ resp.{} $\forall\vec{y}\psi_0$ such that
  $\AAA\models \Phi$ iff $\AAA_0\models \Phi_0$, where $\AAA_0$ extends $\AAA$.
  
  To transform $\phi_0$ into CNF for odd $k$,
  we replace sub-formulas $\psi(\vx)$ by $P_\psi(\vx)$ where we add $P_\psi$ to $R_k$ (quantify existentially), and recursively state for
  $\phi(\vx) \leftrightarrow (\phi_1(\vx) \wedge \phi_2(\vx))$ that
  $P_\phi(\vx) \leftrightarrow (P_{\phi_1} (\vx) \wedge
  P_{\phi_2}(\vx))$ etc.\ We convert the equivalences into CNF and add the conjunct $P_{\phi}(\vx)$. 
  For even $k$, we use $\exists\vec{y}\phi_0 = \neg\forall\vec{y}\neg\phi_0$ and proceed similarly ($\neg\phi_0$ is CNF).
  We establish Case~(ii) as follows.
    \emph{Membership:} %
   We proceed by induction on $k\geq 1$. In the base case $\Phi = \exists R_1\forall\vec{x}\psi$, we can in polynomial time guess extensions $R^\AAA$ for the predicates $R$ in $R_1$ and check with an coNP oracle whether $(\AAA,\beta)\models \psi$ for each assignment $\beta$ to $\vec{x}$; hence the problem is in $\Sigma^p_2$. In the induction step, we can likewise guess extensions for $R_1$, and 
   put them into the structure, i.e., extend $\AAA$ to $\AAA_1$ over $\sigma\cup R_1$; by the induction hypothesis, we can decide $\AAA_1\models \neg(\forall R_2\cdots Q_k R_k\phi)$ using an $\Sigma^p_{k}$ oracle. Thus deciding $\AAA \models \Phi$ is in $\Sigma^p_{k+1}$.
\emph{Hardness:} %
 We may assume w.l.o.g\ that a fixed sentence $\Phi\in \Sigma^1_{k+1}[\sigma]$
 as in item (ii) of Proposition~\ref{prop:socompl} is 
 w.l.o.g.{} in CDNF form; indeed, the transformation in (i) depends only on $\Phi$ but not on the structure $\AAA$. Moreover, we may assume that $\sigma$ contains
 for each $a\in A$ a single constant $c$ such that $c^\AAA = a$ denoted $c_A$ (fresh constants can be added, multiple $c$ with $c^\AAA=a$ removed).

  We reduce $\Phi$ into a formula $\Phi'$ with
  one quantifier block less, by grounding the FO-variables and
  replacing ground atoms in ``scope'' of the last block by FO-atoms; in that, we %
  represent tuples
  $(a_1,\ldots,a_r)$ of elements over $A$ 
  in the domain $A'$ of a structure $\AAA'$ over $\sigma'$ to achieve $m=1$. 
  This works in polynomial time, and $\AAA \models \Phi$ iff  $\AAA' \models \Phi'$. 

  In detail, let $\AAA=(A,\sigma^\AAA)$ be a
  $\sigma$-structure and $\Phi\in \Sigma^1_{k+1}[\sigma]$,
  $\Phi = \exists R_1 \forall R_2 \ldots Q_{k+1}R_{k+1}.\phi$, in CDNF form. Assume that  $k{+}1$ is odd, i.e., $\phi = \forall\vec{x}\psi(\vec{x})$, where
 $\psi$ is a quantifier-free CNF. 
  Let $\psi_A = \bigwedge_{\theta} \psi(\vec{x}\theta)$ be the grounding
  of $\psi$ over $\AAA$, i.e., the conjunction of all formulas
  $\psi(\vec{x}\theta)$ where $\theta$ is any replacement of each 
  variable in $\vec{x}$ with some $c_a$, $a \in A$; clearly $\psi_A$ is
  a CNF and constructible in polynomial time. We let
  $A' = A \cup \{ '\vec{a}\,' \mid
  R \in R_\Phi,  \vec{a} \in A^{ar(R)} \}$, where $R_\Phi = R_1\cup\cdots\cup R_k$, and replace each atom
  $R(\vec{c})$ in $\psi_A$, $R \in R_\Phi$, by
  $R'('\vec{c}\,')$, where $R'$ is a fresh unary predicate variable for
  $R$, and $'\vec{c}\,'$ is a fresh constant symbol, resulting in $\psi'_A$. Furthermore, let $\Phi_A' = \exists R'_1 \forall R'_2 \ldots Q_k R'_k Q_{k+1}R'_{k+1}.\psi'_{C_A}$.
    We then have $\AAA \models \Phi$  iff $\AAA \models \exists R_1 \forall R_2 \ldots Q_{k+1}R_{k+1}.\psi_A$  iff
    $\BBB \models \Phi_A'$,  where $\mathcal{B} =(B,\sigma_{\psi_A}^\mathcal{B})$ is the $\sigma_{\psi_A}$-structure, $\sigma_{\psi_A} =$ $ \sigma\cup \{ '\vec{c}\,' \mid R(\vec{c})\in \psi_A\}$ such that $B= A$ and $\sigma_{\psi_A}^\BBB$ extends $\sigma^\AAA$ with mapping each $'\vec{c}\,'$ to $'(c_1^\AAA,\ldots,c_n^\AAA)'$ where $\vec{c}= c_1,\ldots,c_n)$. 

   It remains to eliminate $Q_{k+1}R_{k+1}$, i.e.,
  $\exists R'_{k+1}$ from $\Phi_A'$. To this end, we use a fresh
  unary relation $\gat$ that holds a single, arbitrary element $a_{\gat} \in
  A'$; intuitively, $\gat(x)$ evaluates to true if $x$ is assigned $a$ 
  and to false otherwise. We modify $\Phi_A'$ as follows. For each atom $R'('\vec{c}\,')$ in
  $\Phi_A'$, where $R' \in R'_{k+1}$,  we introduce a variable
  $x_{R'('\vec{c}\,')}$ and (a) replace $R'('\vec{c}\,')$ by
  $\gat(x_{R'('\vec{c}\,')})$ and (b) add $\exists x_{R'('\vec{c}\,')}$
  after $\exists R_{k+1}$. Finally, we drop $\exists R_{k+1}$.
  The resulting formula $\Phi'=\exists R'_1 \forall R'_2 \ldots
  Q_k R'_k.\phi'$ is in CDNF, and for the $\sigma'$-structure
  $\AAA' = (A',{\sigma'}^{\AAA'})$ where $\sigma' $ extends $\sigma_{\psi_A}$ with mapping $\gat$ to $\{ a_{\gat}\}$, we then have $\BBB \models \Phi_A'$ iff  
  $\AAA' \models \Phi'$. 

  Since $\Phi'$ is of the desired form and 
  all steps are feasible in polynomial time, $\Sigma^p_{k+1}$-hardness
  for odd $k+1$ and $m=1$ is shown. For even $k+1$, we can proceed similarly as
  in (i), using  $\exists\vec{x}\psi = \neg\forall\vec{x}\neg\psi$.
  This shows $\Sigma^p_{k+!}$- hardness, and thus Case~(ii) of
  the lemma.
 
 Furthermore, we can eliminate FO-atoms over $\sigma$ from $\psi_A$ by
 evaluating them over $\AAA$; then we can set $\sigma' = \{ \gat\}$,
 and all predicate and relation symbols in $\Phi'$ have arity~1.
\end{proof}

\begin{restatelemma}[lem:epihorn]
\begin{lemma}
  Let~$\prog \in \NonNeg$. %
  Then, deciding whether~$\prog$ admits a world view is
  in \Ptime{} if $P$ is ground,
  \coNP-complete if $P$ is non-ground and has bounded arity, and
  \EXP-complete if $P$ is non-ground.
\end{lemma}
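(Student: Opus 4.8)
The plan is to reduce world-view existence for $\NonNeg$ programs to ordinary consistency (answer-set existence) of a plain \emph{positive normal} program, and then read off the three bounds from the known complexity of Horn/Datalog evaluation. Since the semantics of non-ground programs is given by grounding and the syntactic transformation below commutes with grounding, it suffices to prove the characterization for ground $\eprog$ and lift it. Given $\eprog\in\NonNeg$, let $\eprog^{\flat}$ be the positive normal program (constraints allowed) obtained by replacing every occurrence of $\kop a$ and $\mop a$ by the plain atom $a$. The central claim I would establish is
\[
  \eprog \text{ admits a world view} \iff \answersets{\eprog^{\flat}}\neq\emptyset .
\]
Because $\eprog^{\flat}$ is \emph{itself} an epistemic-operator-free $\NonNeg$ program, this equivalence is tight in both directions and transfers hardness for free.

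To prove the characterization I would first observe that for any WVI $I$ the reduct $\eprog^{I}$ is positive normal: the only non-plain constructs $\kop a=\pneg\,\eneg a$ and $\mop a=\eneg\,\pneg a$ collapse under the reduct to $a$, to $\pneg\top=\bot$ (deactivating the rule), or to $\top$ (dropped). Hence $\eprog^{I}$ has at most one answer set, its least model, or none if a constraint is violated. Consequently compatibility condition~(iv) forces $I$ to be \emph{complete}, and conditions~(ii)--(iii) force its positive part $M$ to coincide with the least model of the reduct; writing $G(M)$ for that least model and $I_M$ for the complete WVI with positive part $M$, world views are exactly the atom sets $M=G(M)$ whose reduct satisfies all constraints. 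The key technical step is that $M\mapsto G(M)$ is \emph{monotone}: enlarging $M$ only enables more $\kop$-guarded rules and only shrinks the bodies produced by $\mop a$ (the literal $a$ is dropped once $a\in M$), so both effects can only add derivations. By Knaster--Tarski its least fixpoint exists, and a short induction shows $\mathrm{lfp}(G)=\mathit{LM}(\eprog^{\flat})$: every fixpoint of $G$ is a model of $\eprog^{\flat}$, and conversely $\mathit{LM}(\eprog^{\flat})$ is a fixpoint of $G$. Since a positive constraint is violated at $M$ exactly when its plain body lies in $M$, constraint satisfaction is downward closed along the fixpoint lattice; hence a world view exists iff the least candidate $\mathit{LM}(\eprog^{\flat})$ already satisfies all constraints, i.e.\ iff $\eprog^{\flat}$ is consistent.

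With the characterization in hand the three bounds follow from the complexity of consistency for positive normal programs. For ground $\eprog$, the program $\eprog^{\flat}$ is ground Horn, so its least model and the constraint check are computable in polynomial time, giving membership in \Ptime. For non-ground $\eprog$ of bounded arity, $\eprog^{\flat}$ is bounded-arity Datalog with constraints whose least model ranges over polynomially many ground atoms; inconsistency (``some constraint body is derivable'') is then witnessed by a polynomial-size derivation together with a variable binding, placing non-existence in \NP and world-view existence in \coNP. Matching \coNP-hardness, and \EXP-hardness in the general case, are immediate because every plain positive normal program with constraints \emph{is} a $\NonNeg$ program whose world-view existence equals its consistency, and consistency of (bounded-arity resp.\ general) Datalog with constraints is the complement of atom derivability, which is \NP-complete resp.\ \EXP-complete~\cite{DantsinEtAl01,EiterFaberFink07}. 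For general non-ground $\eprog$, grounding $\eprog^{\flat}$ yields an exponential-size ground Horn program solved in polynomial time in its size, giving the \EXP upper bound and hence \EXP-completeness.

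The main obstacle is the fixpoint characterization, in particular the identity $\mathrm{lfp}(G)=\mathit{LM}(\eprog^{\flat})$ and the resulting reduction to plain consistency. The subtlety is that $\mop a$ behaves differently in the reduct (it is \emph{dropped} once $a$ is known) than the naive substitution $\mop a\mapsto a$ suggests, so $\eprog$ may admit several world views while $\eprog^{\flat}$ has a unique answer set (e.g.\ $\{a\leftarrow\mop a\}$ has world views for both $\emptyset$ and $\{a\}$, yet $\eprog^{\flat}=\{a\leftarrow a\}$ yields only $\emptyset$). One must verify that this discrepancy never affects \emph{existence}: monotonicity of $G$ together with the downward closure of constraint satisfaction is exactly what guarantees that the least fixpoint, equivalently $\mathit{LM}(\eprog^{\flat})$, is the decisive candidate. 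Once this is secured, all complexity statements are inherited from standard Horn/Datalog results.
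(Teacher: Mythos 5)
Your proposal is correct and takes essentially the same route as the paper's proof: replace every $\kop a$ and $\mop a$ by the plain atom $a$, decide consistency of the resulting positive normal program, and read off the \Ptime{}/\coNP{}/\EXP{} bounds from known Horn/Datalog complexity, with hardness inherited because plain positive normal programs are themselves in \NonNeg. Your monotone-operator/Knaster--Tarski argument (with the downward closure of constraint violation along the fixpoint lattice) rigorously fills in the correctness claim that the paper's sketch merely asserts, and your example $\{a\leftarrow\mop a\}$ correctly flags that the substitution preserves world-view \emph{existence} but not the number of world views.
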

\end{restatelemma}
\begin{proof}[Proof (Sketch).]
\emph{Membership:} %
  Let $P \in \NonNeg$. We replace each occurrence of $\kop p$ and $\mop p$ in $P$ by $p$.
  Then, we simply check consistency of $P$.
  Since $P$ is Horn, the program has a either a unique answer set or none.
  In the latter case, we are trivially done as the program has no world view.
  In the former case, the program has a unique least model (see e.g.,~\cite{Truszczynski11b}).
  Thus, we set $\kop p$ and $\mop p$ according to the value 
  in the unique least model.
  If $P$ is non-ground and has bounded arity, checking consistency of $P$ is in \coNP~\cite{EiterFaberFink07}.
  Finally, we use the same argument as in the proof of Theorem~\ref{thm:quali} 
  for the non-ground case. Grounding an ELP~$\eprog$ leads to an exponentially 
  larger program $\grd(\eprog)$ and $\Ptime$ becomes $\EXP$, cf.~\cite{GottlobLeoneVeith99}.\\
  \emph{Hardness:} %
     For both non-ground cases, hardness follows immediately from the respective complexity results in ASP~\cite{EiterFaberFink07,GottlobLeoneVeith99}.
\end{proof}

\subsection{Quantitative Reasoning}\label{appendix:quantiative}

\begin{restatetheorem}[thm:quantitative]
\begin{theorem}
  Let~$\eprog$ be an ELP and 
  $i=2$ if $\eprog \in \Disj$, 
  $i=1$ if $\eprog \in \Normal \cup \Tight$, 
  and $i=0$ if $\eprog \in \NonNeg$.
  Then, computing plausibility level~$L(\eprog,\emptyset)$ is
  $\sharped{\SIGMA{i}}$-complete. 
\end{theorem}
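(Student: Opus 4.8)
The plan is to first observe that $\eprog \sqcup \emptyset = \eprog$, so $L(\eprog,\emptyset) = \WVc(\eprog)$ is precisely the number of world views of $\eprog$; the task is to locate this counting function in $\sharped{\SIGMA{i}}$ from both sides. I would treat membership and hardness separately.

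For membership I would exhibit a nondeterministic exponential-time machine $M$ equipped with a $\SIGMA{i}$ oracle whose accepting computations are in bijection with the world views of $\eprog$. On input $\eprog$, the machine nondeterministically writes down a candidate WVI $I \subseteq \lits(\grd(\eprog))$; since $\grd(\eprog)$ has size $2^{\poly(|\eprog|)}$, this guess fits into exponential time. It then \emph{deterministically} (modulo oracle answers) verifies that $I$ is compatible with $\answersets{\grd(\eprog)^I}$, i.e., checks conditions (i)--(iv) of compatibility via brave and cautious reasoning over the ground epistemic reduct. In the ground (propositional) case this verification is exactly the $\SIGMA{i+1}$-type test underlying the $\Sigma^p_{i+1}$-completeness of WV existence~\cite{ShenEiter16}; because grounding inflates the program exponentially, the Gottlob--Leone--Veith succinctness correspondence that lifts $\Sigma^p_{j}$ to $\NEXPl{j-1}$~(cf.~\cite{GottlobLeoneVeith99}) lets $M$ perform the test in exponential time using only a $\SIGMA{i}$ oracle. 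Since the sole acceptance-relevant nondeterminism is the choice of $I$ and the verification is deterministic given the oracle, accepting paths correspond one-to-one with world views, placing $\WVc$ in $\sharped{\SIGMA{i}}$.

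For hardness I would give a \emph{parsimonious} reduction from $\#\QBFSAT_i$, which is $\sharped{\SIGMA{i}}$-complete by Proposition~\ref{prop:countsucc} (for $i=0$, a restricted positive fragment suffices). The central device is that the clause/term circuit $C_\varphi$ (and, for $i \geq 1$, the quantifier circuit) of the succinct QBF can be simulated \emph{inside} a non-ground program with only polynomially many rules: for each gate of the circuit I introduce a predicate of arity $3(1+n)$ together with a rule propagating truth values from its input gates, so that the exponentially many clauses/terms of $\varphi$ are represented implicitly as the true ground instances of the output predicate $g_C$, with facts $b(0),b(1)$ fixing the bit values. On top of this I would use epistemic negation (rules of the form $\dot V(\vec v) \leftarrow \mop \dot V(\vec v), \ldots$) to branch on an assignment to the free variables $V$ --- this is what produces one world view per satisfying assignment --- while constraints discard assignments that falsify a clause (for $\NonNeg$, $i=0$) and, for the universal block in the $i=1$ case, a normal/tight subprogram evaluates $\forall U.\varphi$ by checking (via default-negation even loops over the answer sets) that no term survives. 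For $i=2$ ($\Disj$) I would additionally invoke the saturation technique, exactly as in the disjunctive hardness part of Theorem~\ref{thm:quali}, to handle the extra $\forall$-block. In each case I would establish a bijection between satisfying assignments over the free variables of the QBF and world views of the constructed ELP, which makes the reduction parsimonious and transfers $\sharped{\SIGMA{i}}$-hardness.

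I expect the main obstacle to be two-fold. First, verifying \emph{parsimony} rather than mere correctness: I must ensure that each satisfying assignment yields exactly one world view, with no spurious or collapsed world views, which requires the epistemic-guessing gadget to become deterministic once the free-variable assignment is fixed and the saturation/constraint machinery to neither create nor merge world views. Second, the succinct circuit encoding must stay faithful under grounding: the predicate $g_C$ over the sign- and variable-bits must have as true ground instances \emph{exactly} the clauses (terms) of $\varphi$, so the gate-by-gate rule construction has to be tuned so that the ground epistemic reduct reasons correctly about all $2^{\poly(n)}$ clauses while keeping all predicate arities bounded.
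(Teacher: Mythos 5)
Your proposal follows essentially the same route as the paper's own proof: membership by lifting the ground $\Sigma^p_{i+1}$-completeness of \cite{ShenEiter16} through the exponentially larger grounding via the succinctness correspondence of \cite{GottlobLeoneVeith99}, and hardness by a parsimonious reduction from $\#\QBFSAT_i$ that simulates the clause/term circuit (and, for $i\geq 1$, the quantifier circuit) gate-by-gate with fresh predicates of arity $3(1+n)$, uses epistemic negation ($\mop$/$\eneg$) to branch on the free variables so that world views biject with satisfying assignments, handles the universal block with default-negation guessing plus the $\usat$/$\sat$ machinery, and invokes saturation for the disjunctive case exactly as the paper does in its appendix. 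One small correction: the gate predicates have arity $3(1+n)$, which grows with the instance, so your closing worry about ``keeping all predicate arities bounded'' is neither achievable nor needed here --- bounded arity only enters in the separate Lemma~\ref{cor:quantitative:barity}.
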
    
\end{restatetheorem}

\begin{proof}
We provide membership for all fragments and the hardness for fragments $\NonNeg$, $\Normal$, and $\Tight$ already in the main part of the paper.
It remains to show the case $\eprog \in \Disj$.
We reduce from~$\#\QBFSAT_2$, taking a QBF $Q=\forall U. \exists W. \varphi$ over free variables~$V$,
defined by a clause circuit~$C$ and a quantifier circuit~$D$ over~$n$ many input gates,
and constructing an ELP~$\prog$.
First, $C$ and~$D$ is inductively constructed, as in the hardness cases for the other cases in the main part.

Additionally, we construct the following rules below, thereby following $\neg \exists U. \forall W. \overline{\varphi}$
over the inverse formula of~$\varphi$.
First, we guess an assignment over the variables:

$A(v^1, \ldots, v^{n}) \leftarrow \eneg\dot A(v^1, \ldots, v^{n}), g_D(1, v^1, \ldots, v^{n}).$

$\dot A(v^1, \ldots, v^{n}) \leftarrow \eneg A(v^1, \ldots, v^{n}), g_D(1, v^1, \ldots, v^{n}).$ %

$A(v^1, \ldots, v^{n}) \leftarrow \pneg\dot A(v^1, \ldots, v^{n}), g_D(2, v^1, \ldots, v^{n}).$

$\dot A(v^1, \ldots, v^{n}) \leftarrow \pneg A(v^1, \ldots, v^{n}), g_D(2, v^1, \ldots, v^{n}).$ %

$A(v^1, \ldots, v^{n}) \vee \dot A(v^1, \ldots, v^{n}) \leftarrow g_D(3, v^1, \ldots, v^{n}).$ %

To check whether there is a dissatisfied clause, one could use~$2^3=8$ rules corresponding to~$8$ potential combinations of signs. However, we use the following shortcuts:

\noindent $L(0, \vec v) \leftarrow \dot A(\vec v). $  $L(1, \vec v) \leftarrow A(\vec v). $

\noindent $\usat %
\hspace{-.3em}\leftarrow\hspace{-.3em} g_C(s_1, \vec v_1, s_2, \vec v_2, s_3, \vec v_3),L(s_1, L\vec v_1), L(s_2, \vec v_2), L(s_3,\vec
v_3).$

The universally quantified variables  ($W$) are saturated in case a clause is dissatisfied.

\noindent $A(v^1, \ldots, v^{n}) \leftarrow g_D(3, v^1, \ldots, v^{n}), \usat.$ %

\noindent $\dot A(v^1, \ldots, v^{n}) \leftarrow  g_D(3, v^1, \ldots, v^{n}), \usat.$ %

Then, it is ensured that there is no answer set for the current WV candidate that does not dissatisfy any clause.

$v\leftarrow \eneg v, \eneg \neg \usat.$

It is easy to see that indeed there is a bijection between satisfying assignments over~$V$ of~$Q$ and world views of~$\eprog$. %
This concludes the hardness proof for the case~$\eprog \in \Disj$.
\end{proof}

\begin{restatetheorem}[cor:quantitative:propositional]
\begin{lemma}
Let~$\eprog$ be a ground ELP and $i=2$ if $\eprog \in \Disj$, and $i=1$ if $\eprog \in \Normal \cup \Tight$, and $i=0$ if $\eprog \in \NonNeg$. Then, computing plausibility level $L(\prog, \emptyset)$ is~$\sharpd{\dpc{i}}$-complete.
\end{lemma}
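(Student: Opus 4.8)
The plan is to handle both directions by transferring the machinery of Theorem~\ref{thm:quantitative} to the propositional setting, where the absence of grounding blow-up causes the exponential oracle classes to collapse to the corresponding $\sharpd{\dpc{i}}$ level. I would prove membership by exhibiting a polynomially bounded witness function and classifying its witness check, and hardness by a parsimonious reduction from a canonical $\sharpd{\dpc{i}}$-complete counting problem over quantified Boolean formulas.

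For \emph{membership}, the witness for an instance $\eprog$ is a candidate world-view interpretation $I\subseteq\lits(\eprog)$; since $\eprog$ is ground, $I$ has polynomial size, so $\WVc$ has a polynomially bounded witness function and it remains to classify the check ``is $I$ a world view of $\eprog$?''. The reduct $\eprog^I$ is computable in polynomial time, and verifying compatibility of $I$ with $\answersets{\eprog^I}$ splits into two kinds of subtasks: condition~(i) (consistency of $\eprog^I$) together with the two existential parts of condition~(iv) (some answer set contains $a$, some answer set omits $a$) are existential, i.e.\ brave/co-cautious queries, hence in $\SigmaP{i}$; whereas conditions~(ii) and (iii) ($a$ lies in all, resp.\ in no, answer set) are cautious/co-brave queries, hence in $\PiP{i}$. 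Here $i=2,1,0$ matches the $\SigmaP2/\PiP2$, $\NP/\coNP$, and $\Ptime$ bounds for brave/cautious reasoning on disjunctive, normal/tight, and Horn (\NonNeg) reducts, respectively~\cite{EiterGottlob95,EiterFaberFink07,ShenEiter16}. A conjunction of $\SigmaP{i}$ conditions stays in $\SigmaP{i}$ and a conjunction of $\PiP{i}$ conditions stays in $\PiP{i}$, so the whole check is the intersection of a $\SigmaP{i}$ language with a $\PiP{i}$ language, i.e.\ in $\dpc{i}$; hence $\WVc\in\sharpd{\dpc{i}}$. For \NonNeg\ this specializes to $\dpc{0}=\Ptime$, giving $\sharpd{\dpc{0}}=\sharpP$, as recorded in Table~\ref{tab:results}.

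For \emph{hardness}, I would reduce parsimoniously from the canonical $\sharpd{\dpc{i}}$-complete problem: given a pair $(\Phi_1,\Phi_2)$ over common free variables $\vec z$, where $\Phi_1$ is a $\Sigma_i$-QBF and $\Phi_2$ a $\Pi_i$-QBF, count the assignments $\theta$ to $\vec z$ with $\Phi_1\theta$ and $\Phi_2\theta$ both true; its completeness follows from the PH-level completeness of QBF validity plus a generic witness-encoding argument. The construction is the ground specialization of the reductions in the proof of Theorem~\ref{thm:quantitative}: an epistemic guess (of the form $\dot V(\cdot)\leftarrow\mop\dot V(\cdot)$, resp.\ the $\eneg$-guessing rules) selects an assignment to $\vec z$, producing exactly one world-view candidate per $\theta$; the existential alternations are realized inside $\answersets{\eprog^I}$ by disjunctive guessing, the universal alternations by the saturation technique; and the eliminating constraint $v\leftarrow\eneg v,\dots$ discards every candidate whose guessed $\theta$ fails the conjunction. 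Since in the ground case the propositional clauses/terms replace the clause and quantifier circuits of Theorem~\ref{thm:quantitative} verbatim, the resulting programs lie in the intended fragment ($\Disj$, $\Normal/\Tight$, or $\NonNeg$), and $\theta\mapsto$ world view is a bijection, so the reduction is parsimonious. For \NonNeg\ ($i=0$) one instead reduces parsimoniously from $\sharpP$-complete counting of satisfying assignments of a positive (monotone) formula, which already matches the \NonNeg\ form.

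The main obstacle is the hardness direction: one must align the $\dpc{i}$ structure of the target class --- the simultaneous need for a $\Sigma_i$-witness (existence of a suitable answer set) and a $\Pi_i$-certificate (all answer sets behave correctly) --- with precisely the existential parts (i),(iv) and the cautious parts (ii),(iii) of compatibility, while keeping the correspondence an exact bijection rather than merely counting-preserving up to subtraction. Care is also needed so that the saturation and disjunctive-guessing gadgets introduce no spurious world views and remain within the syntactic fragment demanded by the claimed index~$i$.
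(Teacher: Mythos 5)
Your proposal follows essentially the same route as the paper: membership by classifying the witness check ``is $I$ a world view of~$\eprog$?'' as the intersection of a $\SigmaP{i}$ condition (compatibility items (i) and (iv)) with a $\PiP{i}$ condition (items (ii) and (iii)), hence a $\dpc{i}$ check over polynomial-size witnesses, and hardness by a parsimonious reduction from counting assignments making a $\Sigma_i$-QBF and a $\Pi_i$-QBF simultaneously true --- exactly the problem $\#(\QBFS_i \wedge \co\hy\QBFS_i)$ the paper reduces from for $i=1$ and $i=0$; the only divergence is $i=2$, where the paper simply invokes the complexity of WV checking from \cite[Theorem~4]{ShenEiter16} instead of giving an explicit reduction, while your direct ground specialization of the Theorem~\ref{thm:quantitative} gadgets also works.

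One imprecision deserves flagging: your blanket claim that ``the existential alternations are realized inside $\answersets{\eprog^I}$ by disjunctive guessing, the universal alternations by the saturation technique'' is only legitimate for $i=2$; taken literally for $i=1$ it would produce a disjunctive program and hence prove nothing about $\Normal/\Tight$. The paper's $i=1$ construction instead guesses $Y\cup Z$ with default-negation pairs $a\leftarrow \neg \hat a$, $\hat a \leftarrow \neg a$ (normal \emph{and} tight, since the positive dependency graph stays acyclic), enforces the $\Sigma_1$ half $\exists Y.\,\varphi$ through the constraint $\leftarrow \neg sat_\varphi$ combined with compatibility condition (i) (nonemptiness of $\answersets{\eprog^I}$), and enforces the $\Pi_1$ half $\neg\exists Z.\,\psi$ at the world-view level via the elimination rule $q\leftarrow \eneg q,\, \eneg \neg sat_\psi$ --- no saturation anywhere. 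Since you anchor the construction to the ground specialization of Theorem~\ref{thm:quantitative}, whose Tight/Normal reduction uses precisely these fragment-compliant gadgets, and you explicitly flag fragment compliance as the point needing care, this is a repairable slip rather than a broken argument. A second, smaller point: for $i=0$ your reduction from monotone \#SAT cannot be parsimonious on the \sharpP-hardness side (monotone instances are never unsatisfiable), so hardness there implicitly rests on subtractive or counting reductions under which the $\sharpd{\cdot}$ classes are closed; the paper's EHorn construction has the same character, so this matches the paper's level of rigor but should be stated.
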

\end{restatetheorem}

\begin{proof}[Proof (Sketch).]
The result for~$i{=}2$ follows directly from~\cite[Theorem~4]{ShenEiter16}, which shows the complexity of checking a witness (i.e., a WV). 
For~$i=1$, we briefly sketch a hardness reduction from~$\#(\QBFS_1 \wedge \co\hy\QBFS_1)$, as membership is trivial.
To this end, we take an arbitrary formula of the form~$\exists Y. \varphi(X,Y) \wedge \neg(\exists Z. \psi(X,Z))$
with~$\varphi$ and~$\psi$ in CNF.
We construct an epistemic program
$P=\{x \leftarrow \eneg \hat x, \hat x \leftarrow \eneg x \mid x\in X\} \cup \{a \leftarrow \neg \hat a, \hat a \leftarrow \neg a \mid a\in Y\cup Z\} \cup \{q\leftarrow \eneg q, \eneg  \neg sat_\psi, \leftarrow \neg sat_\varphi\}$.
This program models epistemic guesses over variables in~$X$ and regular guesses over~$Y\cup Z$. Further, it ensures the existential and universal quantification. It remains to define~$sat_\psi$ and $sat_\varphi$.
Let~$\varphi=\{c_1,\ldots,c_{|\varphi|}\}$, $\psi=\{c_1,\ldots,c_{|\psi|}\}$ and for any literal~$\ell$, let~$f(\ell)=\ell$ if~$\ell$ is a variable and~$f(\ell)=\hat v$ with~$v=\var(\ell)$ otherwise.
Then, we define~$P_\varphi=\{sat_\varphi \leftarrow sat_{c_1}, \ldots sat_{c_{|\varphi|}}\} \cup \{sat_{c_i} \leftarrow f(\ell) \mid c_i \in \varphi, \ell\in c\}$.
Analogously, we define~$P_\psi=\{sat_\psi \leftarrow sat_{d_1}, \ldots sat_{c_{|\psi|}}\} \cup \{sat_{d_i} \leftarrow f(\ell) \mid d_i \in \psi, \ell\in c\}$.
It is easy to see that there is a bijection between WVs of~$P\cup P_\varphi \cup P_\psi$ and satisfying assignments over~$X$ of~$\exists Y. \varphi(X,Y) \wedge \neg(\exists Z. \psi(X,Z))$. 

The case $i{=}0$ works similarly, where the hardness uses ideas from Theorem~\ref{thm:quantitative} (restricted to ground programs), but we reduce from~$\#(\QBFS_0 \wedge \co\hy\QBFS_0)$. For modeling $\co\hy\QBFS_0$, the idea is to define satisfiability of a clause as well as satisfiability of the whole formula (conjunction over every clause). Then, we  prohibit satisfiability of the formula using a constraint without the need of default negation.
\end{proof}

\begin{restatecorollary}[cor:quantitative:barity]
\begin{lemma}
Let~$\prog$ be a non-ground ELP of constant arity and $i=2$ if $\prog \in \Disj$, and $i=1$ if $\prog \in \Normal \cup \Tight$, and $i=0$ if $\prog \in \NonNeg$. Then, computing plausibility level~$L(\eprog,\emptyset)$  is~$\sharpd{\dpc{i+1}}$-complete.
\end{lemma}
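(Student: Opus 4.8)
The plan is to prove both directions by reusing the machinery already set up for the ground counting case (Lemma~\ref{cor:quantitative:propositional}) and for bounded-arity qualitative reasoning (Theorem~\ref{thm:quali}), with the guiding principle that bounded-arity non-ground reasoning sits exactly one level above the ground case in the polynomial hierarchy~\cite{EiterFaberFink07}.

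\emph{Membership.} I would treat $L(\prog,\emptyset)=\WVc(\prog)$ as a counting problem whose witnesses are the world views of~$\prog$. Since every predicate has arity bounded by a fixed constant, the Herbrand base has size $\Card{\dom(\prog)}^{\BigO(1)}$ and is polynomial, so every WVI $I\subseteq\lits(\prog)$ has polynomial size, which supplies the length bound required by $\sharpd{\cdot}$. The remaining task is to show that verifying ``$I$ is a world view'' lies in $\dpc{i+1}$. For the fragments with negation ($\Disj,\Normal,\Tight$) I would emulate the epistemic reduct $\prog^I$ by a bounded-arity program $\prog_{I,\eneg}$ with the same answer sets, exactly as in the membership argument of Theorem~\ref{thm:quali}; for $\prog\in\NonNeg$ the reduct $\prog^I$ is already Horn, so I would instead argue via its unique least model as in Lemma~\ref{lem:epihorn}. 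In either case the compatibility conditions (i)--(iv) of $I$ with $\answersets{\prog^I}$ decompose into consistency and brave-reasoning queries, which are in $\SigmaP{i+1}$, and cautious-reasoning queries, which are in $\PiP{i+1}$, by the bounded-arity bounds of~\cite{EiterFaberFink07} (and Lemma~\ref{lem:epihorn} for $i=0$). As $\SigmaP{i+1}$ and $\PiP{i+1}$ are each closed under intersection, the whole verification is a conjunction of one $\SigmaP{i+1}$ and one $\PiP{i+1}$ condition, hence in $\dpc{i+1}$, placing the counting problem in $\sharpd{\dpc{i+1}}$.

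\emph{Hardness.} I would give a parsimonious reduction from $\#(\QBFS_{i+1}\wedge\co\hy\QBFS_{i+1})$, the canonical $\sharpd{\dpc{i+1}}$-complete problem, adapting the ground reduction of Lemma~\ref{cor:quantitative:propositional} but lifting it by one level. Given formulas specifying, over free variables~$X$, both an alternation-depth-$(i{+}1)$ condition $\exists\cdots\varphi(X)$ and its $\co$-counterpart $\neg\,\exists\cdots\psi(X)$, I would build an ELP~$\prog$ in the target fragment as follows: the free variables~$X$ are fixed by an epistemic guessing gadget (the $\dot V$/$\dot A$-gadgets of Theorem~\ref{thm:quantitative}), so that candidate world views biject with assignments to~$X$; the two inner quantified formulas are encoded by the guess-and-saturate answer-set machinery of Theorem~\ref{thm:quali}, where inner alternations are realised by ordinary guessing and disjunctive saturation inside the program, while the \emph{topmost} alternation is supplied for free by the implicit universal quantification over the domain in non-ground rule bodies. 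This is precisely the mechanism that makes bounded-arity non-ground reasoning one level harder than ground reasoning~\cite{EiterFaberFink07}, and it raises the encodable depth from~$i$ to~$i{+}1$. A candidate survives iff $\varphi(X)$ is valid and $\psi(X)$ is not, giving the required bijection. For $\prog\in\NonNeg$ (case $i=0$) I would instead reduce from $\#(\QBFS_1\wedge\co\hy\QBFS_1)$ and reuse the negation-free encoding of the $\NonNeg$ part of Theorem~\ref{thm:quantitative}: clause and formula satisfaction are defined by Horn rules, the $\co\hy\QBFS_1$ component is enforced by a plain constraint, and the epistemic guess over~$X$ again supplies parsimony.

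\emph{Main obstacle.} The delicate part is the hardness direction, where I must simultaneously keep all arities bounded, still reach alternation depth~$i{+}1$ purely through the domain-quantification trick rather than through larger arities, and guarantee that the epistemic guess over~$X$ leaves exactly one surviving world view per satisfying assignment, so that the reduction is genuinely parsimonious rather than merely counting-preserving up to a factor. The $\NonNeg$ case is the most constrained: neither default negation nor disjunction is available, so the $\co\hy\QBFS_1$ component must be carried entirely by a constraint together with the implicit domain quantification, and one must check that the resulting program still admits the intended bijection between world views and the counted assignments.
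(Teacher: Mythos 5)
Your membership argument is essentially the paper's: it reuses the bounded-arity emulation $\prog_{I,\eneg}$ from Theorem~\ref{thm:quali}, notes that a WVI has polynomial size under bounded arity, and observes that verifying compatibility decomposes into one $\SigmaP{i+1}$- and one $\PiP{i+1}$-condition, giving $\dpc{i+1}$-verifiable witnesses and membership in $\sharpd{\dpc{i+1}}$; your version is in fact more explicit than the paper's (which compresses this into two oracle calls citing \cite{ShenEiter16,EiterFaberFink07}). Your choice of hardness source also coincides with the paper's: a parsimonious reduction from $\#(\QBFS_{i+1}\wedge\co\hy\QBFS_{i+1})$, with an epistemic guessing gadget over the free variables~$X$ so that world views biject with $X$-assignments (for $i=2$ the paper counts $X$-assignments of $\exists Y.\forall Z.\exists Q.\,\varphi \;\wedge\; \neg(\exists U.\forall V.\exists W.\,\psi)$).

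There is, however, a genuine error in your account of where the extra quantifier level comes from, and implemented as stated that step would fail. You claim the \emph{topmost} alternation is ``supplied for free by the implicit universal quantification over the domain in non-ground rule bodies''. This is inverted on both counts. The quantification over rule instantiations is evaluated at the level of individual ground rules, i.e., \emph{inside} answer-set checking, strictly below the answer-set guess; no program construct can be placed \emph{above} that guess, so domain quantification cannot realize the topmost block. What bounded-arity non-groundedness buys is the \emph{innermost} block, and in the paper's construction it is \emph{existential}: the rule $sat_\varphi \leftarrow sat_{c_1}(\vec Q_1),\ldots,sat_{c_{|\varphi|}}(\vec Q_{|\varphi|})$ shares the (arity-$\leq 3$, by 3-CNF) variables of the innermost block across its body atoms, so deriving $sat_\varphi$ is exactly conjunctive-query evaluation --- the paper states this mechanism explicitly (``with non-ground programs of bounded arity, we can evaluate a conjunctive query''). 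The correct layering for $i=2$ is therefore: epistemic guess for~$X$ (the counted assignments), disjunctive answer-set guessing for the topmost $\exists Y,\exists U$, saturation for the middle $\forall Z,\forall V$ --- both exactly as in the ground case --- and CQ-style rule bodies for the innermost $\exists Q,\exists W$; dually, non-ground \emph{constraint} bodies give the universal/coNP reading, which is what carries the negated component in the $\NonNeg$ case (where your sketch is closest to correct, since at depth one ``topmost'' and ``innermost'' coincide). With your arrangement, guess-and-saturate sitting \emph{below} a purportedly free top level yields only depth~$i$, not~$i{+}1$; re-routing the innermost blocks through bounded-arity rule and constraint bodies repairs the argument and recovers the paper's construction.
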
    
\end{restatecorollary}

\begin{proof}
\emph{Membership:}
    We can employ the membership construction for the bounded arity case as done in the proof of Theorem~\ref{thm:quali}. 
    However, we ask for finding the number of WVs. Thus, by the definition of the dot-class~$\sharpd{C}$, the class~$C$ states the complexity %
    of verifying a witness (i.e., a WV).
    Since verifying a WV requires a call to~$\Sigma_{i+1}^\Ptime$ and a call to $\Pi_{i+1}^\Ptime$, this results in $\dpc{i+1}$~\cite{ShenEiter16}. Hence%
    , we obtain membership for $\sharpd{\dpc{i+1}}$.

\emph{Hardness:}
    The constructions work similarly to the proof of Lemma~\ref{cor:quantitative:propositional} by observing that with non-ground programs of bounded arity, we can evaluate a conjunctive query, see, e.g.,~\cite{EiterFaberFink07}.
    We briefly sketch hardness for
    the case $i{=}2$. To this end, we reduce from $\exists Y. \forall Z. \exists Q. \varphi(X,Y,Z,Q)) \wedge \neg(\exists U. \forall V. \exists W. \psi(X,U,V,W))$ with~$\varphi$ and~$\psi$ in 3-CNF.
We construct an epistemic program
$P=\{x \leftarrow \eneg \hat x, \hat x \leftarrow \eneg x \mid x\in X\} \cup \{a \vee \hat a \mid a\in Y\cup Z \cup Q \cup U \cup V \cup W\} \cup \{q\leftarrow \eneg q, \eneg \neg sat_\psi, \leftarrow \neg sat_\varphi\}$.
This program models epistemic guesses over variables in~$X$ and guesses over~$Y\cup Z\cup Q \cup U \cup V \cup W$. 
We saturate over~$Z$ and~$V$ by
$P_{sat}=\{a \leftarrow sat_\varphi, \hat a \leftarrow sat_\varphi \mid a\in Z\} \cup \{a \leftarrow sat_\psi, \hat a \leftarrow sat_\psi \mid a\in V\}$.

It remains to define~$sat_\psi$ and $sat_\varphi$.
As above, let~$\varphi=\{c_1,\ldots,c_{|\varphi|}\}$ and for any~$c_i\in\varphi$ and~$\ell\in c_i$, let~$f(\ell)=\ell$ if~$\ell$ is a variable and~$f(\ell)=\hat v$ with~$v=\var(\ell)$ otherwise.

Further, let~$\prec$ be any arbitrary, but fixed total order among all variables in~$\varphi\cup\psi$.
We refer to the $\prec$-ordered vector of variables in a set~$Q$ that are contained in a clause~$c_i\in\varphi$, by~$\vec Q_{i}$.
Observe that since both formulas~$\varphi$ and~$\psi$ are in 3-CNF, for any of these vectors $|\vec Z_i|\leq 3$.
For a literal~$\ell$, $pol(\ell)$ refers to its polarity, which is~$1$ if~$\ell$ is a variable and~$0$ otherwise.
We define 
$P_\varphi{=}$\\
\mbox{~}$\quad\{sat_\varphi {\leftarrow} sat_{c_1}(\vec Q_1), \ldots sat_{c_{|\varphi|}}(\vec Q_{|\varphi|})\} \,{\cup}$\\
\mbox{~}$\quad\{bin(0), bin(1)\}\,\cup $\\
\mbox{~}$\quad\{sat_{c_i}(\vec Q_i) \leftarrow f(\ell), bin(q_1), \ldots, bin(q_o) \;\mid $\\ 
\mbox{~}$\qquad\qquad\qquad\qquad c_i \in \varphi, \ell\in c, \var(\ell)\in X\cup Y \cup Z, $\\
\mbox{~}$\qquad\qquad\qquad\qquad\qquad\qquad\qquad\quad\; \vec Q_i=\langle q_1, \ldots, q_o \rangle\} \cup$\\ 
\mbox{~}$\quad \{sat_{c_i}(\langle q_1, \ldots, q_{j-1}, pol(q_j), q_{j+1}, \ldots, q_o \rangle) \leftarrow $\\
\mbox{~}$\qquad\qquad\qquad\qquad\qquad\qquad\qquad bin(q_1), \ldots, bin(q_o) \mid$\\
\mbox{~}$\qquad\qquad\qquad\qquad\qquad\qquad\quad c_i \in \varphi, \ell\in c, \var(\ell)=q_j, $\\
\mbox{~}$\qquad\qquad\qquad\qquad\qquad\qquad\qquad\qquad \vec Q_i=\langle q_1, \ldots, q_o \rangle\}$.

\noindent Analogously to above, we define~$P_{\psi}$ (see also the proof of Lemma~\ref{cor:quantitative:propositional}.
\noindent Indeed, there is a bijection between WVs of $P\cup P_{sat} \cup P_\varphi \cup P_\psi$ and satisfying assignments over~$X$ of~$\exists Y. \forall Z. \exists Q. \varphi(X,Y,Z,Q) \wedge \neg(\exists U. \forall V. \exists W. \psi(X,U,V,W))$. 
\end{proof}

\subsection{Parameterized Complexity}
\subsubsection{Lower Bound}
\begin{restatetheorem}[thm:lower]
\begin{theorem}
Let~$\prog$ be an arbitrary ELP of bounded arity~$a$ over domain size~$d=\Card{\dom(\prog)}$, where the treewidth of~$G_\prog$ is~$k$.
Furthermore, let $i=2$ if $P \in \Disj$, $i=1$ if $P \in \Normal \cup \Tight$, and $i=0$ if $P \in \NonNeg$.\\ 
Then, under ETH, WV existence  of $\grd(\prog)$  cannot be decided in time 
$\expf(i+1,d^{{o}(k)}){\cdot}\poly(\Card{\preds(\prog)})$.
\end{theorem}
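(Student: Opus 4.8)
The plan is to establish the lower bound by a treewidth\hy preserving reduction from QCSP validity, invoking Proposition~\ref{prop:qcsplb}. Given a QCSP $Q = Q_1 V_1 \cdots Q_\ell V_\ell.\,C$ of alternation depth $\ell = i+2$ over a finite domain $D$, where each constraint has at most $s \geq 3$ variables and $P_C$ has treewidth $k$, I would construct in polynomial time an ELP $\prog$ of bounded arity over a domain of size $d = |D| + \mathcal{O}(1)$ such that (a)~$\prog$ admits a world view iff $Q$ is valid, and (b)~the treewidth of $G_\prog$ is $\mathcal{O}(k)$. Combining (a) and (b) with $i+1 = \ell-1$ and $d = \Theta(|D|)$, any algorithm deciding WV existence of $\grd(\prog)$ in time $\expf(i+1, d^{o(k)}) \cdot \poly(\Card{\preds(\prog)})$ would decide validity of $Q$ in time $\expf(\ell-1, |D|^{o(k)}) \cdot \poly(|C|)$ (a constant blow\hy up of the treewidth is absorbed into the $o(\cdot)$, and a constant additive change of the domain changes the base of $d^{o(k)}$ only by a constant factor), contradicting Proposition~\ref{prop:qcsplb} under ETH.

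For the construction I would lift the hardness reductions of Theorems~\ref{thm:quali} and~\ref{thm:quantitative} from the Boolean/succinct setting to the explicit finite\hy domain setting, using the ELP domain as $D$ directly instead of encoding values bit\hy wise. Concretely, each QCSP variable $x$ becomes a bounded\hy arity predicate whose argument records its value in $D$, and the three quantifier mechanisms are realised exactly as before: the outermost existential block $V_1$ is selected by epistemic negation as in $\prog_1$; universal blocks are handled by saturation as in $\prog_4/\prog_4'$ (forcing an inconsistency unless the relevant constraint certificate is derivable); and the innermost existential block is guessed by disjunction (for $\prog \in \Disj$, $i=2$) or absorbed into the reduct's answer\hy set guess (for $\prog \in \Normal \cup \Tight$, $i=1$). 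Each constraint $c \in C$ is translated into $\poly(|D|)$ rules that mention only the $\le s$ predicates of $\var(c)$ together with a constant number of global auxiliary atoms ($u$, $v$, $\sat$, $\usat$). For the $\NonNeg$, $i=0$ case I would instead target the plausibility\hy level (counting) problem and reduce from counting QCSP of alternation $\ell=2$ along the lines of the $\NonNeg$ hardness in Theorem~\ref{thm:quantitative}, matching the $\dagger$\hy convention of Table~\ref{tab:results}.

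The decisive point is treewidth preservation. Since every rule produced for a constraint $c$ touches only the predicates of $\var(c)$ (a clique of size $\le s$ in $G_\prog$, mirroring the hyperedge of $c$ in $P_C$) plus a fixed, constraint\hy independent set of global atoms, $G_\prog$ is, up to the identification of QCSP variables with their predicates, the graph $P_C$ augmented by $\mathcal{O}(1)$ universal vertices; as adding a single vertex adjacent to everything raises treewidth by at most one, finitely many global atoms raise it by only $\mathcal{O}(1)$, so the treewidth of $G_\prog$ is $k + \mathcal{O}(1) = \mathcal{O}(k)$. I expect the main obstacle to be precisely this locality discipline for the saturation gadget: the standard saturation rules of the form ``$R(\cdot) \leftarrow u$'' make the saturation atom adjacent to all guessed predicates, and one must verify that only a bounded number of such global atoms is ever required and that the correctness of the universal\hy quantifier simulation, together with its interaction with the epistemic reduct and world\hy view compatibility, survives this restriction. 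A secondary technical point is the alternation bookkeeping, namely confirming that the world\hy view guess contributes exactly one $\exists$ level on top of the $i+1$ levels supplied by brave/cautious reasoning over the reduct, so that $\ell = i+2$ alternations are faithfully encoded.
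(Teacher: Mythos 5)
Your top-level strategy coincides with the paper's: reduce from QCSP of alternation depth $\ell=i+2$ (so that Proposition~\ref{prop:qcsplb} gives the matching bound $\expf(\ell-1,|D|^{o(k)})=\expf(i+1,|D|^{o(k)})$), preserve treewidth linearly and the domain size up to a constant, and conclude by contradiction. Your treewidth argument (each constraint's predicates form a clique mirroring its hyperedge in $P_C$, plus $\mathcal{O}(1)$ universal vertices) is also in the spirit of the paper, which makes this explicit by constructing the program along a nice TD of the QCSP's primal graph with bag-indexed predicates $\usat_{\varphi_t},\usat_{C_t}$ (normalized so that $\Card{C_t}\leq 1$ and $\Card{\varphi_t}\leq 1$) and then exhibiting a TD of $G_\prog$ of width $\mathcal{O}(k')$.

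The genuine gap is your quantifier-to-mechanism assignment: epistemic negation, saturation, and disjunction supply only three alternation levels, yet you must encode $i+2=4$ of them for $\Disj$. Saturation implements exactly one universal level, the one sitting inside the minimality check of a single answer set; it cannot serve ``universal blocks'' in the plural, since two universal blocks separated by an existential one would require two nested saturations within one reduct, which is not a workable technique. In the paper's reduction (for prefix $\exists V_1\forall V_2\exists V_3\forall V_4.\,\mathcal{C}$, simulated as $\exists V_1.\neg(\exists V_2\forall V_3\exists V_4.\,\neg\mathcal{C})$), the second block is realized not by saturation but by quantifying over \emph{all answer sets} of the epistemic reduct through the compatibility constraint $v\leftarrow \eneg\, v, \eneg \neg \usat$, with $V_2\cup V_3$ guessed disjunctively and saturation covering only $V_3$; and, decisively, the innermost block $V_4$ is pushed into the \emph{grounding}: its variables occur as rule variables, i.e., as argument tuples $\theta$ of the $\usat_{\varphi_t},\usat_{C_t}$ predicates, so instantiation over the domain supplies the remaining quantifier level. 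In your plan the predicate arguments merely record values in $D$ and no quantifier block is routed through grounding, so one block is either unaccounted for or double-booked to saturation; as written, the construction encodes at most three alternations and would yield only hardness of the form $\expf(2,d^{o(k)})$ for $\Disj$ --- one exponential short of the claimed $\expf(i+1,d^{o(k)})$. Your classical-complexity bookkeeping (``one $\exists$ on top of the $i+1$ levels of bounded-arity brave/cautious reasoning'') hides exactly this point: the extra level of non-ground brave reasoning stems from rule instantiation, and in a treewidth-preserving reduction that level must be realized explicitly, as the paper does with the $\theta$-arguments.
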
 %
\end{restatetheorem}%

\begin{proof}[Proof (Sketch).]
    We establish the statement for the hardest case ($\eprog \in \Disj$). The remaining cases work analogously.
    We show this lower bound by contradiction, where we reduce from the problem $\QCSPVAL{}$ of quantifier rank~$4$. 
    For any QCSP instance~$Q=\exists V_1. \forall V_2. \exists V_3. \forall V_4. \mathcal{C}$ over 
    domain size~$\dom(Q)$ it is known~\cite{FichteHecherKieler21} that~$Q$ cannot be decided in 
    time $\expf(3,{d'}^{{o}(k')}){\cdot}\poly(\Card{\var(Q)})$ where~$k'$
    is the treewidth of~$G_Q$ and~$d'=\Card{\dom(Q)}$.
	Without loss of generality, see Proposition~\ref{prop:qcsplb},
	we assume that each constraint~$C_i\in\mathcal{C}$ is over three variables, i.e., arity~$a'=3$. 
	Further, let~$\mathcal{T}'=(T,\chi')$ be a nice  TD 
        of~$G_Q$
	and for every node~$t$ of~$T$, let~$C_t$ be the %
	set of \emph{bag constraints} using only variables in~$
 \chi(t)$. %
	Without loss of generality, we assume that~$\Card{C_t}\leq 1$,
	which can be easily achieved by adding intermediate auxiliary nodes (copies of nodes) to~$\mathcal{T}'$.
 Further, the \emph{bag assignments~$\varphi_t$} are a set 
 of allowed assignments in~$C_t$. Similarly to~$\varphi_t$, we may assume without loss of generality that~$\Card{\varphi_t}\leq 1$.

 We will simulate the formula~$\exists V_1. \neg(\exists V_2. \forall V_3. \exists V_4.) \neg\mathcal{C}$, thereby defining a program $\prog$ as the union of programs below.
 
 First, we guess assignments for $V_1$ via epistemic negation
 and assignments over~$V_2\cup V_3$ using disjunction. Note that disjunction is required for those over~$V_3$, subject to saturation.
 \vspace{-1.5em}
	\begin{flalign*}
	P_g\,{=}\,\SB &\hat v \gets \eneg\ v,\  %
    v \gets  \eneg\ \hat v \SM v\in V_1\SE.\\
	P'_{g}\,{=}\,\SB & \textstyle\bigvee_{d\in \dom(v)} v(d) \gets~ \mid v\in V_2\cup V_3\SE,\\
 \end{flalign*}~\\[-3em]
Then, we define the unsatisfiability of an assignment.
 \vspace{-.35em}
	\begin{flalign*}
	&P_{\varphi_t}{=}\SB \usat_{\varphi_t}(\langle\theta\rangle)\gets x_i(v_i), \usat_{\varphi_{t'}}(\langle\theta\rangle)  \mid t\text{ in }T, t'\in\\  %
	&\;\children(t), C_t{=}C_{t'}{\neq\emptyset}, \theta: V_4(C_t) \rightarrow \Delta, %
	\var(\varphi_t)\setminus V_4\\	
	&\;{\,=\,}\{x_1, \ldots, x_o\},\{v_1,\ldots, v_o\} \subseteq \Delta, %
	1\leq i\leq o, v_i\neq \varphi_t(x_i)\SE.%
	\end{flalign*} 
 This is followed by defining  unsatisfiability of a constraint.
 \vspace{-.35em}
	\begin{flalign*}
	P_{C_t}\,{=}\,\SB \usat_{C_{t}}(\langle\theta\rangle) \gets \usat_{\varphi_{t}}(\langle \theta\rangle) \mid %
     t \in T\,{\cap}\,\children(t^*), %
	 \\
	\; C_{t^*}{\neq} C_t\},\\
  P'_{C_t}\,{=}\,\SB\usat_{C_t}(\langle\theta\rangle)\gets \mid t\text{ in }T, \theta: V_4(C_t) \to \Delta,  %
	\\
	\theta\notin \{\varphi|_{V_4(C_t)} \mid \varphi\in C_t\}, C_t\neq \emptyset\SE,
 \end{flalign*}
 Finally, we define unsatisfiability, consistency, and saturate.
 \vspace{-.35em}
 \begin{flalign*}
 P_{usat}\,{=}\,\SB\usat{}\gets \usat_{C_t}(\langle V_4(C_t) \rangle) \mid t\text{ in }T\}, \\
	P_{cons}\,{=}\,\SB v \hspace{-.25em}\gets\hspace{-.25em} \eneg v,\eneg \neg \usat\}, \\ %
	P_{saturate}\,{=}\,\SB v(d) \hspace{-.3em}\gets\hspace{-.3em}\usat \SM v\in V_3, d\in\dom(v)\SE.
	\end{flalign*}%

 Observe that~$\prog$ admits a world view %
 if and only if~$Q$ is evaluates to true. 
	Further, observe that~$d$ is linear in~$d'$.

	Then, we construct a TD~$\mathcal{T}\eqdef(T,\chi)$
	of~$G_\prog$, whose width~$k$ is linear in~$k'$ and where~$\chi$ is defined for a node~$t$ of~$T$ by
	$\chi(t)\eqdef \chi'(t) \cup \{\usat, \usat_{\varphi_t}, \usat_{\varphi_{t'}}, \usat_{C_t} \mid t'\in\children(t)\}$.
	By construction, the width of~$\mathcal{T}$ is in~$\mathcal{O}(k')$,
	since~$\Card{\children(t)}\leq 2$, $\varphi_t\leq 1$, and $\Card{C_t}\leq 1$.

	Now assume towards a contradiction that consistency of~$\prog$ can be decided in %
    time $\expf(2,d^{o(k)}){\cdot}\poly(\Card{\preds(\prog)})$.
 Then, however,
	via applying the reduction above,
	we can also decide~$Q$ in 
    time~$\expf(2,{d'}^{o(k')}){\cdot}\poly(\Card{\var(Q)})$,
    which by Proposition~\ref{prop:qcsplb} contradicts ETH. %
    In consequence, we established the result for $\eprog \in \Disj$.
\end{proof}

\end{document}